\newcommand{\Func}[1]{\mathsf{#1}}
\newcommand{\tuple}[1]{\langle #1 \rangle}
\newcommand{\Nat}{\ensuremath{\mathbb{N}}\xspace}
\newcommand{\true}{{\ensuremath{\mathbf{true}}}}
\newcommand{\false}{{\ensuremath{\mathbf{false}}}}
\newcommand{\AP}{AP}
\newcommand{\F}{{\ensuremath{\mathbf{F}}}}
\newcommand{\G}{{\ensuremath{\mathbf{G}}}}
\newcommand{\X}{{\ensuremath{\mathbf{X}}}}
\newcommand{\U}{{\ensuremath{\mathbf{U}}}}
\newcommand{\R}{{\ensuremath{\mathbf{R}}}}
\newcommand{\depth}[1]{\mathsf{depth}(#1)}
\newcommand{\Enter}[1]{\Func{Enter}(#1)}
\newcommand{\Stay}[1]{\Func{Stay}(#1)}
\newcommand{\Leave}[1]{\Func{Leave}(#1)}
\newcommand{\InitState}{\text{\Large $ \iota$}}
\newcommand{\A}{{\cal A}}
\newcommand{\D}{{\cal D}}
\newcommand{\Inf}{\it inf}
\newcommand{\FIN}{{\mathit{Fin}}}
\newcommand{\ZB}{,\mathbf{0}}
\newcommand{\OB}{,\mathbf{1}}
\newcommand{\OrState}{\!\!\!\small${\lor}$\!\!\!}
\newcommand{\AndState}{\!\!\!\small${\land}$\!\!\!}
\newcommand{\Subject}[1]{\paragraph*{#1.}}
\definecolor{Green}{rgb}{0.13, 0.55, 0.13}
\renewcommand{\phi}{\varphi}
\newcommand{\satisfies}{\models}
\newcommand{\Intui}[1]{\textcolor{blue}{\text{#1}}}
\newcommand{\LF}{\mathsf{L}}
\newcommand{\DF}{\mathsf{D}}
\newcommand*{\da@rightarrow}{\mathchar"0\hexnumber@\symAMSa 4B }
\newcommand*{\da@leftarrow}{\mathchar"0\hexnumber@\symAMSa 4C }
\newcommand*{\xmyrightarrow}[3][~~~]{
	\mathrel{%
		\mathpalette{\da@xarrow{#1}{#2}{}{\hspace{-1.2pt}\da@rightarrow}{\,}{}{}{#3}}{}%
	}%
}
\newcommand*{\xmyleftarrow}[3][~~~]{%
	\mathrel{%
		\mathpalette{\da@xarrow{#1}{#2}{}\da@leftarrow{\,}{}{}{#3}}{}%
	}%
}
\newcommand*{\xdashrightarrow}[2][~~~]{%
	\mathrel{%
		\mathpalette{\da@xarrow{#1}{#2}{}{\da@rightarrow}{\,}{}{}{\dabar@}}{}%
	}%
}
\newcommand{\xdashleftarrow}[2][~~~]{%
	\mathrel{%
		\mathpalette{\da@xarrow{#1}{#2}\da@leftarrow{}{}{\,}{}{\dabar@}}{}%
	}%
}
\newcommand*{\da@xarrow}[8]{%
	\sbox0{$\ifx#7\scriptstyle\scriptscriptstyle\else\scriptstyle\fi#5#1#6\m@th$}%
	\sbox2{$\ifx#7\scriptstyle\scriptscriptstyle\else\scriptstyle\fi#5#2#6\m@th$}%
	\sbox4{$#7#8\m@th$}%
	\dimen@=\wd0 %
	\ifdim\wd2 >\dimen@
	\dimen@=\wd2 %
	\fi
	\count@=2 %
	\def\da@bars{#8#8}%
	\@whiledim\count@\wd4<\dimen@\do{%
		\advance\count@\@ne
		\expandafter\def\expandafter\da@bars\expandafter{%
			\da@bars
			#8
		}%
	}%
	\mathrel{#3}%
	\mathrel{%
		\mathop{\da@bars}\limits
		\ifx\\#1\\%
		\else
		_{\copy0}%
		\fi
		\ifx\\#2\\%
		\else
		^{\copy2}%
		\fi
	}%
	\mathrel{#4}\!\!%
}
\newcommand*\Trim[1]{\adjustbox{clip,trim={.115\width} 0pt {.13\width} 0pt }{\ensuremath{#1}}}
\newcommand{\ArrowSymbol}{\hspace{-0.3pt}\Trim{\thicksim}\hspace{0.2pt}}
\newcommand{\Strong}{}
\newcommand{\Weak}{\text{\tiny weak}}
\newcommand{\ReachMulti}[6]{\ensuremath{#1\xmyrightarrow[\bcancel{#2(#3)}]{#6}{\ArrowSymbol} #4\left(#5\right)}}
\newcommand{\Reach}[5]{\ReachMulti{#1}{#2}{#3}{#4}{#5}{\Strong}}
\newcommand{\WeakReach}[5]{\ReachMulti{#1}{#2}{#3}{#4}{#5}{\Weak}}
\newcommand{\FromSecond}{>0}
\newcommand{\StayReachMulti}[6]{\ensuremath{#1\xrightarrow[\bcancel{#2\left(#3\right)}]{#6} #4\left(#5\right)}}
\newcommand{\StayReach}[5]{\StayReachMulti{#1}{#2}{#3}{#4}{#5}{\Strong}}
\newcommand{\WeakStayReach}[5]{\StayReachMulti{#1}{#2}{#3}{#4}{#5}{\Weak}}
\newcommand{\StayReachNonEmpty}[5]{\StayReachMulti{#1}{#2}{#3}{#4}{#5}{\FromSecond}}
\newcommand{\WeakStayReachNonEmpty}[5]{\StayReachMulti{#1}{#2}{#3}{#4}{#5}{\Weak,\FromSecond}}
\newcommand{\LeaveReachMulti}[6]{\ensuremath{#1\xdashrightarrow[\bcancel{#2\left(#3\right)}]{#6} #4\left(#5\right)}}
\newcommand{\LeaveReach}[5]{\LeaveReachMulti{#1}{#2}{#3}{#4}{#5}{\Strong}}
\newcommand{\ReachTT}[2]{\ensuremath{#1\xmyrightarrow{}{\ArrowSymbol} #2}}
\newcommand{\NonEmptyReachTT}[2]{\ensuremath{#1\xmyrightarrow{>0}{\ArrowSymbol} #2}}
\newcommand{\smallerReach}[1]{{\footnotesize #1}}
\newenvironment{changemargin}[2]{%
	\begin{list}{}{%
			\setlength{\topsep}{0pt}%
			\setlength{\leftmargin}{#1}%
			\setlength{\rightmargin}{#2}%
			\setlength{\listparindent}{\parindent}%
			\setlength{\itemindent}{\parindent}%
			\setlength{\parsep}{\parskip}%
		}%
		\item[]}{\end{list}}
\def\@citecolor{blue}%
\def\@urlcolor{blue}%
\def\@linkcolor{blue}%
\def\orcidID#1{\smash{\href{http://orcid.org/#1}{\protect\raisebox{-1.25pt}{\protect\includegraphics{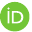}}}}}
\begin{document}
\title{On the Translation of Automata\\to Linear Temporal Logic\thanks{This is the full version of a chapter with the same title that appears in the FoSSaCS 2022 conference proceedings \cite{DBLP:conf/fossacs/BokerLS22}.}}

\author{Udi~Boker\inst{1} \orcidID{0000-0003-4322-8892} \and
Karoliina~Lehtinen\inst{2} \orcidID{0000-0003-1171-8790} \and
Salomon~Sickert\inst{3}\thanks{Salomon Sickert is supported by the Deutsche Forschungsgemeinschaft (DFG) under project number 436811179.} \orcidID{0000-0002-0280-8981}
}
\authorrunning{U. Boker, K. Lehtinen, S. Sickert}
\titlerunning{On the Translation of Automata to Linear Temporal Logic}
\institute{Reichman University, Herzliya, Israel\\
\and
CNRS, Aix-Marseille University and University of Toulon, LIS, Marseille, France
\and
The Hebrew University, Jerusalem, Israel\\
}

\maketitle              
\begin{abstract}
While the complexity of translating future linear temporal logic (LTL)
into automata on infinite words is well-understood, the size increase involved in turning automata back to LTL is not. In particular, there is no known elementary bound on the complexity of translating deterministic $\omega$-regular automata to LTL.

Our first contribution consists of tight bounds for LTL over a unary alphabet: alternating, nondeterministic and deterministic automata can be exactly exponentially, quadratically and linearly more succinct, respectively, than any equivalent LTL formula.
Our main contribution consists of a translation of general counter-free deterministic $\omega$-regular automata into LTL formulas of double exponential temporal-nesting depth and triple exponential length, using an intermediate Krohn-Rhodes cascade decomposition of the automaton. To our knowledge, this is the first elementary bound on this translation. Furthermore, our translation preserves the acceptance condition of the automaton in the sense that it turns a looping, weak, B\"uchi, coB\"uchi or Muller automaton into a formula that belongs to the matching class of the syntactic future hierarchy. In particular, it can be used to translate an LTL formula recognising a safety language to a formula belonging to the safety fragment of LTL (over both finite and infinite words). \keywords{Linear temporal logic \and Automata \and Cascade decomposition}
\end{abstract}

\section{Introduction}

Linear Temporal Logic with only future temporal operators (from here on LTL) and $\omega$-regular automata, whether deterministic, nondeterministic or alternating, are both well-established formalisms to describe properties of infinite-word languages.
LTL is popular in formal verification and synthesis due to its simple syntax and semantics.
Yet, while properties might be convenient to define in LTL, most verification and synthesis algorithms eventually compile LTL formulas into $\omega$-regular automata.
The expressiveness of both these key formalisms, as well as translations from LTL to automata of various types, are well understood. Here, we consider the converse translations, which, in comparison, have received less attention: up till now, no elementary upper bound on the size blow-up of going from automata to LTL was known.

Regarding expressive power, deterministic Muller automata, nondeterministic B\"uchi automata, and weak alternating automata recognise all $\omega$-regular languages~\cite{KV01,Tho90}. LTL-definable languages (surveyed in~\cite{DG08}) are a strict subset thereof, also defined by first-order logic, star-free regular expressions, aperiodic monoids, counter-free automata, and very weak alternating automata.
As for succinctness, nondeterministic and alternating automata can be exponentially and double-exponentially more succinct than deterministic automata, respectively. Determinisation in particular has precise bounds~\cite{Mic88,Saf89,Lod99,Sch09,CZ09,Bok18}.

The succinctness of various representations of LTL-definable languages is less clear: effective translations between the different models are far from straightforward, and their complexity is sometimes uncertain. In particular, to the best of our knowledge, up to now there has been no elementary bound even on the translation of deterministic counter-free automata, arguably the simplest automata model for this class of languages, into LTL formulas. 
(Considering LTL with both future and past temporal operators, there is a double-exponential upper bound on the length of the formula \cite{MP90}\footnote{See \cref{rem:Cascade} on whether the upper bound in \cite{MP90} is single or double exponential.}.) The complexity of obtaining a deterministic counter-free automaton from a nondeterministic one is also, to the best of our knowledge, open.

We study the complexity of translating automata to LTL (equivalently, to very weak alternating automata), considering formula length, size, and nesting depth of temporal operators.

We begin (Section~\ref{sec:Unary}), as a warm-up, with the unary alphabet case on finite words. 
We show that the size-blow up involved in translating deterministic, non-deterministic and alternating automata to LTL, when possible, is linear, quadratic and exponential, respectively, and these bounds are tight. In contrast, going from LTL to alternating, nondeterministic and deterministic automata is linear, exponential and double-exponential, respectively~\cite{MSS88,VW86,KR10}. 

The case of non-unary alphabets is much more difficult. We provide a translation of counter-free deterministic $\omega$-regular automata (with any acceptance condition) into LTL formulas with double exponential depth and triple exponential length. 
Our translation uses an intermediate Krohn-Rhodes \emph{reset cascade decomposition} (wreath product) of deterministic automata, which is a deterministic automaton built from simple components.

Our main technical contribution consists of a translation of a reset cascade into an LTL formula of depth linear and length singly exponential in the number of cascade configurations. Combining this with Eilenberg's Holonomy translation of a semigroup into a cascade \cite[Corollary II.7.2]{Eil76} and Pnueli and Maler's adaptation of it to automata \cite[Theorem 3]{MP90} (see \cref{rem:Cascade}), we obtain a translation of counter-free deterministic $\omega$-regular automata into LTL formulas of double exponential depth and triple exponential length. Our construction preserves the acceptance condition of the automaton in the sense that it turns a B\"uchi-looping, coB\"uchi-looping, weak, B\"uchi or coB\"uchi automaton into a formula that belongs to the matching class of the syntactic future hierarchy (see \cref{def:future_hierarchy} and \cite{ChangMP92}).
	 
\subsubsection*{Related work}
\Subject{Finite words} While LTL is usually interpreted over infinite words, it also admits finite-word semantics that coincide with the finite word version of the other equivalent formalisms. The equivalence between FO and star-free languages on finite words is due to McNaughton and Papert~\cite{MP71}. Cohen, Perrin and Pin~\cite{CPP93} used the Krohn-Rhodes decomposition to characterise the expressive power of LTL with only $\X$ and $\F$ (eventually), but do not provide bounds on the size trade-off between the different models. Wilke~\cite{Wil99} gives a double-exponential translation from counter-free DFA to LTL. More recently, Boja\'nczyk provided an algebraically flavoured adaptation of Wilke's proof~\cite[Section 2.2.2]{Boj20}.

\Subject{Infinite words} With substantial effort over several decades, the above techniques have been extended to infinite words using intricate tools with opaque complexities. Ladner~\cite{Lad77} and Thomas~\cite{Tho79,Tho81} for example extended the equivalence of star-free regular expressions and FO to infinite words, while the $\omega$-extension of the equivalence with aperiodic languages is due to Perrin~\cite{Per84}. The correspondence with LTL is due to Kamp~\cite{Kam68} and Gabbay, Pnueli, Shelah and Stavi~\cite{GPSS80}.
Diekert and Gastin's survey~\cite{DG08} provides an algebraic translation into LTL via $\omega$-monoids while Cohen-Chesnot gives a direct algebraic proof of the equivalence of star-free $\omega$-regular expressions and LTL~\cite{Coh91}.
Wilke takes an automata-theoretic approach, using backward deterministic automata~\cite{Wil16,Wil18}.
However, none of the above address the complexity of the transformations. 
Zuck's dissertation~\cite{XXXX:phd/Zuck86} gives a translation of star-free regular expressions into LTL, with at least non-elementary complexity. Subsequently, Chang, Mana and Pneuli \cite{ChangMP92} use Zuck's results to show that the levels of their hierarchy of future temporal properties coincide with syntactic fragments of LTL. Sickert and Esparza \cite{SickertE20} gave an exponential translation of any LTL formula into level $\Delta_2$ of this hierarchy.

\section{Preliminaries}

\Subject{Languages}

An alphabet $\Sigma$, of size $|\Sigma|$, is a finite set of letters. $\Sigma^*, \Sigma^+$, and $\Sigma^\omega$ denote the sets of finite, nonempty finite, and infinite words over $\Sigma$, respectively. A language of finite or infinite words is a subset of $\Sigma^*$ or $\Sigma^\omega$, respectively.
We write $[i..j]$ and $[i..j)$, with integers $i\leq j$, for the sets $\{i, i+1, \ldots, j\}$ and $\{i, i+1, \ldots, j-1\}$, respectively.
For a word $w = \sigma_0 \cdot \sigma_1  \cdots$, we write $|w|$ for its length ($\infty$ if $w$ is infinite), $w[i]$ for $\sigma_i$, $w_{[i..j]}$ and $w_{[i..j)}$ for its corresponding infixes ($w_{[i..i)}$ is the empty word), and $w_{[i..]}$ for its (finite or infinite) suffix $\sigma_i \cdot \sigma_{i+1} \cdots$.

\Subject{Linear Temporal Logic (LTL)}

Let $\AP$ be a finite set of atomic propositions. LTL formulas are constructed from the constant $\true$, atomic propositions $a \in \AP$, the connectives $\neg$ (negation) and $\wedge$ (and), and the temporal operators $\U$ (until) and $\X$ (next).
Their semantics are given by a satisfiability relation $\models$ between finite or infinite words $w  \in (2^{\AP})^+ \cup (2^{\AP})^\omega$, and a formula $\varphi$ inductively as follows:
{\arraycolsep=3pt%
\[\begin{array}[t]{ll}
w \models \true & \hspace{14.1em} w \models a  \hspace{2.64em}\textit{iff~~} a \in w[0] \\
w \models \neg \varphi &  \textit{iff~~}  w \not \models \varphi \hspace{9.6em} w \models \varphi \wedge \psi   \textit{~~iff~~}  w \models \varphi \textit{ and } w \models \psi   \\
w \models \X \varphi   & \textit{iff~~}          |w| > 1 \textit{ and } w_{[1..]} \models \varphi \\
w \models \varphi \U \psi & \textit{iff~~}  \exists i \in [0..|w|). ~ w_{[i..]} \models \psi \textit{ and } \forall j \in [0..i). ~ w_{[j..]} \models \varphi
\end{array}\]}

\noindent We also use the common shortcuts  $\false \coloneqq \neg \true$, $\varphi \lor \psi \coloneqq \neg ((\neg \varphi) \land (\neg \psi))$, $\F \varphi \coloneqq \true \U \varphi$, $\G \varphi \coloneqq \neg \F \neg \varphi$, and $\psi_1 \R \psi_2 \coloneqq \neg (\neg \psi_1) \U (\neg \psi_2)$.
The language of finite words of $\varphi$ is $L^{<\omega}(\varphi) \coloneqq \{ w \in (2^{\AP})^+ \mid w \models \varphi \}$, and the language of infinite words is $L(\varphi) \coloneqq \{ w \in (2^{\AP})^\omega \mid w \models \varphi\}$. Note that we omit the ``$<\omega$'' superscript if it is clear from the context which set is used. 
The \emph{length} $|\varphi|$ of $\varphi$ is the number of nodes in its syntax tree,
the \emph{size} of $\phi$ is the number of nodes in a DAG representing this syntax tree, and its \emph{temporal nesting depth}, denoted by $\depth{\varphi}$, is defined by: $\depth{\true}=0$; $\depth{a}=0$ for an atomic proposition $a \in \AP$; $\depth{\neg \psi} = \depth{\psi}$; $\depth{\psi_1 \land \psi_2} = \max(\depth{\psi_1}, \depth{\psi_2})$; $\depth{\X \psi} = \depth{\psi}+1$; and $\depth{\psi_1 \U \psi_2} = \max(\depth{\psi_1}, \depth{\psi_2}) + 1$.
Chang, Manna, and Pnueli define in \cite{ChangMP92} a syntactic hierarchy for LTL formulas (over infinite words):

\begin{definition}[LTL Syntactic future hierarchy {\cite{ChangMP92}}
	\footnote{This extends \cite{CernaP03,SickertE20} with negation, which can be removed via negation normal form.}]
\label{def:future_hierarchy}
\begin{itemize}
	\item $\Sigma_0 = \Pi_0 = \Delta_0$ is the least set  containing all atomic propositions and their negations, and is closed under the application of conjunction and disjunction.
	\item $\Sigma_{i+1}$ is the least set containing $\Pi_i$ and negated formulas of $\Pi_{i+1}$  closed under the application of conjunction, disjunction, and the $\X$ and $\U$ operators.
	\item $\Pi_{i+1}$ is the least set containing $\Sigma_i$ and negated formulas of $\Sigma_{i+1}$  closed under the application of conjunction, disjunction, and the $\X$ and $\R$ operators.
	\item $\Delta_{i+1}$ is the least set containing $\Sigma_{i+1}$ and $\Pi_{i+1}$ that is closed under the application of conjunction, disjunction, and negation.
\end{itemize}
\end{definition}

\noindent $\Sigma_1$ is referred to as \emph{syntactic co-safety} formulas, $\Pi_1$ as \emph{syntactic safety} formulas.

\Subject{Automata}

A \emph{deterministic semiautomaton} is a tuple $\D=(\Sigma,Q,\delta)$, where $\Sigma$ is an alphabet; $Q$ is a finite nonempty set of states; and $\delta\colon Q\times \Sigma \to Q$ is a transition function and we extend it to finite words in the usual way. A \emph{path} of $\D$ on a word $w=\sigma_0 \cdot \sigma_1 \cdots$ is a sequence of states $q_0, q_1, \ldots$, such that for every $i<|w|$, we have $\delta(q_i, \sigma_i)=q_{i+1}$.

It is a \emph{reset} semiautomaton if for every letter $\sigma\in\Sigma$, either i) for every state $q\in Q$ we have $\delta(q,\sigma)=q$, or ii) there exists a state $q'\in Q$, such that for every state $q\in Q$ we have $\delta(q,\sigma)=q'$.

It is \emph{counter free} if for every state $q\in Q$, finite word $u\in\Sigma^+$, and number $n\in\Nat\setminus\{0\}$, there is a self loop of $q$ on $u^n$ if{}f there is a self loop of $q$ on $u$.

A \emph{deterministic automaton} is a tuple $\D=(\Sigma,Q,\iota,\delta,\alpha)$, where $(\Sigma,Q,\delta)$ is a deterministic semiautomaton, $\iota \in Q$ is an initial state; and $\alpha$ is some acceptance condition, as detailed below. A run of $\D$ on a word $w$ is a path of $\D$ on $w$ that starts in $\iota$.
 It is a reset or counter-free automaton if its semiautomaton is.

 The \emph{acceptance condition} of an automaton on finite words is a set $F\subseteq Q$; a run is accepting if it ends in a state $q\in F$.
 The \emph{acceptance condition} of an \emph{$\omega$-regular automaton}, on infinite words, is defined with respect to the set $\Inf(r)$ of states visited infinitely often along a run $r$. We define below several acceptance conditions that we use in the sequel; for other conditions, see, for example, \cite{Bok18}.

The \emph{Muller} condition is a set $\alpha=\{M_1, \ldots, M_k\}$ of sets $M_i\subseteq Q$ of states, and a run $r$ is accepting if there exists a set $M_i$, such that $M_i=\Inf(r)$. 
The \emph{Rabin} condition is a set $\alpha=\{(G_1, B_1), \ldots, (G_k, B_k)\}$ of pairs of sets of states, and $r$ is accepting if there exists a pair $(G_i, B_i)$, such that $G_i\cap\Inf(r)\neq\emptyset$ and $B_i\cap\Inf(r)=\emptyset$.
 The \emph{B\"uchi} (resp.\ \emph{coB\"uchi}) condition is a set $\alpha\subseteq Q$ of states, and $r$ is accepting if $\alpha\cap\Inf(r)\neq\emptyset$ (resp.\ $\alpha\cap\Inf(r)=\emptyset$).  
 A \emph{weak} automaton is a B\"uchi automaton, in which every strongly connected component (SCC) contains only states in $\alpha$ or only states out of $\alpha$.
 A \emph{looping} automaton is a B\"uchi or coB\"uchi automaton, where all states are in $\alpha$, except for a single sink state.
 
Deterministic automata of the above types correspond to the hierarchy of temporal properties \cite{MannaP89}:
Looping-B\"uchi, looping-coB\"uchi, weak, B\"uchi, coB\"uchi, and Rabin/Muller deterministic automata define respectively safety, guarantee (co-safety), obligation, recurrence, persistence, and reactivity languages. If the language is also LTL-definable, then there exists an equivalent LTL formula in $\Pi_1$, $\Sigma_1$, $\Delta_1$, $\Pi_2$, $\Sigma_2$, and $\Delta_2$, respectively~\cite{ChangMP92}. 
Every deterministic $\omega$-regular automaton is equivalent to deterministic Muller and Rabin automata, where the Muller (but not always Rabin) one can be defined on the same semiautomaton.

\emph{Nondeterministic} and \emph{alternating} automata (to which we only refer in \cref{sec:Unary}, on finite words over a unary alphabet) extend deterministic automata by having a transition function $\delta\colon Q\times \Sigma \to 2^Q$ and $\delta\colon Q\times \Sigma \to$ (positive Boolean formulas over $Q$), respectively. (See, for example, \cite{CKS81} for formal definitions.)

\section{Unary Alphabet}\label{sec:Unary}

Kupferman, Ta-Shma and Vardi~\cite{KTV99} compared the succinctness of different automata models when \textit{counting}, that is, recognising the singleton language $\{a^k\}$ for some $k$ over the singleton alphabet $\{a\}$. For the succinctness gap between automata and LTL, we study the task of recognising arbitrary languages over the unary alphabet, which can be seen as sets of integers, rather than a single integer.

For a unary alphabet, since there is only one infinite word, only languages on finite words are interesting. We thus consider  LTL formulas over (no) atomic propositions $\AP = \emptyset$, and automata on finite unary words over the corresponding alphabet $\Sigma = 2^{\AP} = \{\emptyset\}$, where we use the shorthand $a = \emptyset$.
The size of a deterministic automaton is the number of its states, of a nondeterministic automaton the number of its transitions, and of an alternating automaton the number of subformulas in its transition function.

{We show }that the size blow-up involved in translating deterministic, nondeterministic, and alternating automata to LTL, when possible, is linear, quadratic, and exponential, respectively.\\

In our analysis, we shall use the following folklore theorem, which extends Wolper's Theorem \cite{Wol83}. 
The proof is given in \Cref{appendix:unary}.

\begin{restatable}[Extended Wolper's theorem, Folklore]{proposition}{extendedWolper}
	\label{cl:Nesting}
	Consider an LTL formula $\phi$ with $\depth{\phi}=n$ over the atomic propositions $\AP$, and let $\Sigma=2^{\AP}$.
	Then for every  words $u\in\Sigma^*$, $v\in\Sigma^+$ and $t\in\Sigma^\omega$, and numbers $i,j>n$, $\phi$ has the same truth value on the words $(u v^i t)$ and $(u v^j t)$.
\end{restatable}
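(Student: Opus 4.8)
The plan is to prove this by induction on the nesting depth $n = \depth{\phi}$, strengthening the statement appropriately so that the inductive step goes through. The key observation is that the claim is really about the suffixes of the two words agreeing on all subformulas of sufficiently small depth. I would formalise this by proving, by induction on $n$, the following: if $\depth{\phi} \le n$, then $\phi$ has the same truth value on $u v^i t$ and $u v^j t$ whenever $i, j > n$, and moreover the ``pivoting'' can be localised — the two words, read from any position inside or before the $v$-block, satisfy the same depth-$n$ formulas as long as at least $n$ copies of $v$ remain ahead.

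The base case $n = 0$ is immediate: a depth-$0$ formula is a Boolean combination of atomic propositions, so its truth depends only on the first letter $w[0]$, and since $i, j \ge 1$ both $u v^i t$ and $u v^j t$ begin with the same letter (either the first letter of $u$, or, if $u$ is empty, the first letter of $v$). First I would handle the Boolean cases $\neg\psi$ and $\psi_1 \wedge \psi_2$, which are trivial since $\depth{\cdot}$ is the max over subformulas and the truth value is a Boolean function of the subformula truth values. The two genuinely temporal cases are $\X\psi$ and $\psi_1 \U \psi_2$.

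For $\X\psi$, where $\depth{\psi} = n - 1$, I would relate the suffix $w_{[1..]}$ of $u v^i t$ to that of $u v^j t$. If $u$ is nonempty, stripping the first letter replaces $u$ by a shorter prefix $u'$, and the inductive hypothesis applies with the same $i, j$ (still $> n > n-1$). If $u$ is empty, stripping one letter from $v^i t$ yields $v_{[1..]} v^{i-1} t$, which I would rewrite as $u'' v^{i-1} t$ for a suitable prefix $u''$; here the exponents drop to $i - 1, j - 1$, still both $> n - 1$ since $i, j > n$, so the inductive hypothesis for $\psi$ at depth $n-1$ applies. The $\U$ case is the main obstacle and requires the most care: $w \models \psi_1 \U \psi_2$ quantifies over a witness position $k$, and I must match a witness in one word with a witness in the other. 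The idea is that a satisfying position $k$ in $u v^i t$ can be mapped to a position in $u v^j t$ by shifting through the $v$-block in multiples of $|v|$, using the strengthened hypothesis that positions with at least $n - 1$ remaining copies of $v$ are indistinguishable by depth-$(n-1)$ formulas; one checks that both the witness for $\psi_2$ and the all-earlier-positions requirement for $\psi_1$ transfer. The arithmetic bookkeeping — ensuring that after the shift every relevant position still sees enough copies of $v$ to invoke the induction hypothesis — is the delicate part, and it is precisely why the bound $i, j > n$ (rather than $\ge n$) is needed at each level.

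Once the strengthened statement is established, the proposition follows as the special case where the formula is evaluated from position $0$ with $\depth{\phi} = n$ and $i, j > n$. I would close by remarking that the symmetry between $i$ and $j$ means it suffices to prove one direction of the ``same truth value'' equivalence, namely that satisfaction transfers from the smaller exponent to the larger, since both words have the same structure and the argument is symmetric. Throughout, the essential mechanism is that LTL of depth $n$ can ``look ahead'' at most $n$ steps beyond any temporal witness it commits to, so $n$ copies of $v$ form an impenetrable buffer that makes further repetitions invisible.
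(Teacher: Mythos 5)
Your overall strategy is the paper's own: induction (the paper phrases it structurally, you phrase it on $\depth{\phi}$, which is immaterial since the Boolean connectives preserve depth), the same treatment of $\X$ by folding one letter into the prefix, and an until case resolved by mapping witness positions between the two words. But two things in your write-up are concrete problems. First, your strengthened hypothesis is false as stated: ``at least $n$ copies of $v$ remain ahead'' must be ``more than $n$''. Take $\AP=\{a\}$, $v=\{a\}$, $t=\emptyset^\omega$: the depth-$n$ formula $\X^n\neg a$ holds on $v^{n}t$ but fails on $v^{n+1}t$, so two positions that each have at least $n$ copies of $v$ ahead need not agree on depth-$n$ formulas; the same off-by-one infects the version you invoke inside the $\U$ case (``at least $n-1$ remaining copies \ldots indistinguishable by depth-$(n-1)$ formulas''). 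You even remark at the end that the strictness of $i,j>n$ is essential, so the sketch is internally inconsistent on exactly the arithmetic it declares delicate. Relatedly, your claim that by symmetry it suffices to show transfer ``from the smaller exponent to the larger'' is not justified by symmetry alone: once you restrict to $i<j$, the converse implication does not follow by swapping names. (It can be rescued by applying the monotone statement to $\neg\phi$, which has the same depth; the paper instead proves the implication for an arbitrary pair $i,j$ with no order assumed, after which swapping is legitimate.)

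Second, and more importantly, the $\U$ case is where the entire proposition lives --- the paper's proof spends essentially all of its length there, with a three-way case split on the witness position --- and your sketch does not carry it out. In particular, the one mechanism you name (buffer indistinguishability together with shifts by multiples of $|v|$) cannot handle a witness for $\psi_2$ that falls within the last $\depth{\phi}$ copies of $v$ or inside $t$: no sufficient buffer remains at such a position, so no induction hypothesis applies to it. The paper treats this as a separate case: it right-aligns, mapping the witness $p$ in $uv^it$ to $p'=p+(j-i)|v|$ in $uv^jt$, where the two suffixes are \emph{literally equal} (no induction hypothesis needed), and it then still has to establish $\psi_1$ at the positions of $uv^jt$ that have no aligned counterpart in $uv^it$ --- which it does by a chaining argument showing that the suffixes of $uv^jt$ from $m'$ and from $m'+|v|$ agree on $\psi_1$ for all $m'$ in the middle region, propagating $\psi_1$ backwards in steps of $|v|$. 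Your ``one checks that both the witness for $\psi_2$ and the all-earlier-positions requirement for $\psi_1$ transfer'' hides exactly these two ideas. They are recoverable, and fleshing out your plan correctly would essentially reproduce the paper's proof, but as submitted the crux is asserted rather than proved.
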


We use this to establish that unary LTL describes only finite and co-finite properties, and that there is a tight relation between the depth of LTL formulas and the length of words above which they are all in or all out of the language.

\begin{restatable}{proposition}{FiniteCofinite}
\label{cl:finite-cofinite}
Given an LTL formula $\phi$ with $\depth{\phi}=n$ on finite words over the unary alphabet $\{a\}$, $a^i \in L(\varphi)$ for all $i>n$ or $a^i \notin L(\phi)$ for all $i>n$.
\end{restatable}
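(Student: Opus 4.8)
The plan is to apply the Extended Wolper's theorem (\cref{cl:Nesting}) directly to the unary setting, where the structure of words is especially simple. Over the unary alphabet $\{a\}$, every finite word is of the form $a^k$ for some $k \geq 0$, so I would instantiate the general statement with carefully chosen $u$, $v$, and $t$ to extract information purely about the finite words $a^i$ and $a^j$.

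First I would set $v = a$ (a single-letter word in $\Sigma^+$), which is the natural choice since it lets the exponent $v^i = a^i$ track the word length directly. The main subtlety is that \cref{cl:Nesting} is stated for infinite words of the form $u v^i t$ with $t \in \Sigma^\omega$, whereas \cref{cl:finite-cofinite} concerns \emph{finite} words $a^i$. To bridge this gap I would take $u$ to be the empty word and handle the fact that there is only one infinite word over $\{a\}$, namely $a^\omega$; but since the target language is a language of finite words, I cannot directly append an infinite $t$. The clean way around this is to apply \cref{cl:Nesting} in a form suited to finite words: with $u = \varepsilon$ and $v = a$, the theorem tells us that $\phi$ has the same truth value on $a^i$ and $a^j$ for all $i, j > n$. (If one insists on the infinite-word formulation, one re-derives the finite-word analogue by the same pumping argument, since the Extended Wolper's theorem is itself proved by pumping on the depth-$n$ formula; the depth $n$ bounds the number of $\X$-steps and nested $\U$-quantifiers that can distinguish word lengths.)

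From this indistinguishability I would conclude the dichotomy: all words $a^i$ with $i > n$ agree on membership in $L(\phi)$. Concretely, fix some $i_0 > n$ and ask whether $a^{i_0} \in L(\phi)$. If yes, then by the equal-truth-value property every $a^i$ with $i > n$ also satisfies $\phi$, giving the first alternative; if no, then symmetrically every such $a^i$ fails $\phi$, giving the second alternative. This is exactly the claimed finite-or-cofinite behaviour, with the threshold tied to the temporal nesting depth $n$.

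\textbf{Main obstacle.} The only real friction is the mismatch between the infinite-word formulation of \cref{cl:Nesting} and the finite-word claim here, together with the degeneracy of the unary alphabet (where $\Sigma^\omega$ is a single word). I expect the resolution to be routine — either quoting a finite-word version of the Extended Wolper's theorem or observing that the pumping argument underlying it applies verbatim to finite words once $v = a$ is fixed — so the substance of the proof is the clean instantiation $u = \varepsilon$, $v = a$ rather than any delicate estimate.
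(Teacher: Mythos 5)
Your proposal is correct, but it takes a genuinely different route from the paper. Where you propose to re-derive a finite-word analogue of \cref{cl:Nesting} (since, as you rightly note, the unary alphabet admits no useful infinite tail: $a^i a^\omega = a^\omega$ for every $i$), the paper instead reduces the finite-word claim to the infinite-word theorem exactly as stated: it introduces a fresh atomic proposition $p$, writes $\# = \{p\}$, encodes each finite word $v \in \{a\}^+$ as the infinite word $v\#^\omega$ over $\Sigma' = 2^{\{p\}}$, translates $\phi$ into a formula $\phi'$ of linear length recognising $L' = \{v\#^\omega \mid v \in L(\phi)\}$, and then applies \cref{cl:Nesting} with $u=\varepsilon$, $v=a$, $t=\#^\omega$. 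What the paper's reduction buys is that the induction behind \cref{cl:Nesting} is used as a black box and never repeated; its (unstated) proof obligation is that the relativisation $\phi \mapsto \phi'$ preserves temporal nesting depth and not merely length, e.g.\ $\X\psi \mapsto \X(\neg p \land \psi')$ and $\psi_1\U\psi_2 \mapsto (\neg p \land \psi_1')\U(\neg p\land\psi_2')$, so that the threshold $i>n$ comes out exactly right. What your route buys is directness --- no alphabet extension and no formula translation --- but its cost is that the finite-word version of \cref{cl:Nesting} is not available anywhere in the paper to quote, so it must actually be proved: your plan asserts that the pumping induction ``applies verbatim'', and while this is true in essence, the $\X$ case must additionally handle the $|w|>1$ condition and the $\U$ case must respect that witness positions are bounded by $|w|$. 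These adaptations are routine, but they constitute the entire mathematical content of your proof, so a complete write-up would need to carry them out rather than assert them.
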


\begin{restatable}{proposition}{LTLforShortLang}
\label{cl:LTLforShortLang}
	Consider a language $L\subseteq\{a\}^+$ that agrees on all words of length over $n$, that is, has the same truth value on all such words.
	Then there is an LTL formula of size in $O(n)$ with language $L$.
\end{restatable}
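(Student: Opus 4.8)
The plan is to construct an explicit LTL formula of linear size that captures a language $L \subseteq \{a\}^+$ which is ``eventually constant'' past length $n$. First I would observe that the behaviour of $L$ is fully described by finitely much data: the membership bits $b_1, b_2, \ldots, b_n$ for the words $a^1, \ldots, a^n$, together with a single tail bit $b_\infty \in \{0,1\}$ recording whether all words of length strictly greater than $n$ are in or out of $L$ (well-defined by the hypothesis that $L$ agrees on all such words). Thus $L$ is determined by $n+1$ bits, and the task reduces to writing a formula that reads off the exact length of a unary word, up to the threshold $n$, and then defaults to $b_\infty$.

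The key building block is a family of length-indicator formulas. For each length $k$, I would use $\X^k \false$ type formulas to detect short words and $\X^{k} \true$ to detect words of length at least $k+1$; concretely, over the unary alphabet the formula $\X^{k-1} \true$ (i.e.\ $k-1$ nested $\X$ operators applied to $\true$) holds on $a^i$ exactly when $i \geq k$, since each $\X$ requires one more remaining letter, and $\X^{k-1}\true \wedge \neg \X^{k}\true$ pins down the exact length $i = k$. I would then form the disjunction, over those $k \in [1..n]$ with $b_k = 1$, of these exact-length formulas, and separately add, when $b_\infty = 1$, the ``long word'' formula $\X^{n} \true$ (which holds precisely on words of length $> n$). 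The final formula is the disjunction of these two parts.

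For the size bound I would argue as follows. Naively the nested-$\X$ formulas reuse common prefixes: the formula $\X^k \true$ for varying $k$ forms a single chain of $n$ nested $\X$'s, and each exact-length test $\X^{k-1}\true \wedge \neg \X^{k}\true$ references two points along this one chain. Taking the DAG view — or simply noting that we use a single shared tower $\X^0\true, \X^1\true, \ldots, \X^n\true$ of $n+1$ nodes plus $O(n)$ Boolean connectives to stitch together the at most $n+1$ selected length-tests — gives a formula of size $O(n)$. Even counting length in a tree rather than a DAG, each of the at most $n$ disjuncts uses $O(1)$ new connectives on top of the shared chain, so the total size is linear in $n$.

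The correctness verification is the routine part: one checks by the semantics of $\X$ over the single-letter alphabet that $a^i \models \X^{k-1}\true \wedge \neg \X^k \true$ iff $i = k$, and $a^i \models \X^n \true$ iff $i > n$, whence the constructed disjunction accepts exactly the words dictated by the bits $b_1, \ldots, b_n, b_\infty$, which by hypothesis defines $L$. I expect the main (though mild) obstacle to be bookkeeping the size claim carefully — in particular making explicit that the nested-$\X$ subformulas can be shared so that the bound is genuinely $O(n)$ rather than $O(n^2)$, and handling the boundary cases ($b_\infty$ determining the default membership of all long words, and the empty-disjunction case where $L$ contains no short words) so that the formula is always well-formed and of the right size.
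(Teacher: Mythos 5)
Your construction is correct and proves the proposition, but it takes a different route from the paper. You build the formula explicitly: over the unary alphabet, under the paper's finite-word semantics of $\X$, the formula $\X^{k-1}\true$ holds on $a^i$ exactly when $i \geq k$, so your exact-length tests $\X^{k-1}\true \wedge \neg\X^{k}\true$ and the tail formula $\X^{n}\true$ carve out precisely the words dictated by the bits $b_1,\dots,b_n,b_\infty$, and sharing the single chain $\true, \X\true, \dots, \X^{n}\true$ gives a DAG with $O(n)$ nodes --- which is what is claimed, since the paper defines \emph{size} as the number of nodes of a DAG representation of the syntax tree. The paper's own proof instead constructs a very weak deterministic automaton with $n+2$ states: a chain $0,1,\dots,n+1$ with $a$-transitions $i \to i+1$, a self-loop on $n+1$, and state $i$ accepting iff $a^i \in L$; the translation of this chain into LTL is left implicit, relying on the correspondence between very weak automata and LTL. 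The underlying data is identical (your bits are exactly the paper's accepting-state set), so the difference is one of packaging: your argument is self-contained and immediately verifiable from the LTL semantics, whereas the paper's is terser but delegates to an equivalence it does not spell out; moreover, unfolding the chain state-by-state (each state's formula referring to the next state's formula exactly once) yields a formula of linear \emph{length}, not merely linear DAG size, which is slightly stronger than what your disjunction achieves.

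One correction to your size accounting: the remark that the bound stays linear ``even counting length in a tree'' is wrong. In a tree the $\X$-chains cannot be shared, so your disjunction of exact-length tests has length $\Theta(n^2)$. This is immaterial for the proposition as stated, since it only concerns size (DAG size), but the distinction between length and size is tracked carefully elsewhere in the paper, so it is worth being precise about which measure your construction actually bounds.
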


We now establish the trade-off between LTL and alternating automata (AFA) over unary alphabets. AFA are closed under (linear) complementation, so we use a pumping argument to bound the length after which all words have the same truth value, giving an upper bound on the LTL formula.

\begin{restatable}{lemma}{UnaryAltUpper}\label{cl:UnaryAltUpper}
Every alternating automaton with $n$ states that recognises an LTL-expressible language $L\subseteq \{a\}^+$ is equivalent to an LTL formula of size in $O(2^{n})$.
\end{restatable}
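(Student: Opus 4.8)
The plan is to pin down, in terms of the number of states $n$, the length beyond which all words have the same membership status in $L$, and then feed this bound into \Cref{cl:LTLforShortLang}. Since $L$ is LTL-expressible, $L = L(\phi)$ for some formula $\phi$, so \Cref{cl:finite-cofinite} already tells us that $L$ is \emph{eventually constant}: there is a threshold $m$ with $a^i \in L \Leftrightarrow a^j \in L$ for all $i,j > m$. However, this $m$ is the temporal-nesting depth of an unknown equivalent formula and is a priori unbounded; the entire content of the lemma is to show that the automaton forces this threshold down to $O(2^n)$.

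To extract such a bound from the automaton, I would track its behaviour on the unary words through the ``acceptance sets''. For each length $\ell$ let $A_\ell \subseteq Q$ be the set of states $q$ from which $a^\ell$ is accepted. Then $A_0 = F$, and $A_{\ell}$ is obtained from $A_{\ell-1}$ by a single fixed map $g \colon 2^Q \to 2^Q$, namely $g(S) = \{\, q \in Q : \delta(q,a) \text{ is satisfied by the assignment that sets } q' \text{ to } \true \text{ exactly when } q' \in S \,\}$. This is just the reverse subset construction, and $a^\ell \in L$ holds precisely when the initial state $\iota \in A_\ell$ (for an initial formula, when it is satisfied by $A_\ell$). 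Thus the sequence $A_0, A_1, A_2, \dots$ is a deterministic trajectory of the map $g$ inside the space $2^Q$, which has only $2^n$ elements.

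Now comes the pumping step. Among the $2^n + 1$ sets $A_0, \dots, A_{2^n}$ two must coincide, so the trajectory is \emph{eventually periodic}: there are a preperiod $p < 2^n$ and a period $d \geq 1$ with $A_{\ell+d} = A_\ell$ for all $\ell \geq p$, and consequently the membership bit $b_\ell \coloneqq [\iota \in A_\ell]$ satisfies $b_{\ell+d} = b_\ell$ for all $\ell \geq p$. It remains to combine this with eventual constancy. Fix any $\ell \geq p$ and choose a multiple $Md$ large enough that $\ell + Md > m$; periodicity gives $b_\ell = b_{\ell + Md}$ (each intermediate index staying $\geq p$), while eventual constancy gives $b_{\ell + Md} = c$ for the constant value $c$. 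Hence $b_\ell = c$ for \emph{every} $\ell \geq p$, i.e.\ $L$ agrees on all words of length over $p - 1 < 2^n$. Feeding this threshold into \Cref{cl:LTLforShortLang} yields an equivalent LTL formula of size $O(2^n)$, as required.

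The main obstacle I anticipate is the clean combination of the two ``eventually'' statements in the last step: eventual periodicity of the trajectory (from the automaton) and eventual constancy of membership (from LTL-expressibility) must be reconciled so that constancy propagates \emph{back} across the whole periodic tail, which is what lets the small preperiod bound $2^n$ survive. A secondary point to get right is the precise acceptance recursion for alternating automata (the base case $A_0 = F$ and the evaluation of the positive Boolean transition formulas), since the whole argument rests on $A_\ell$ being a deterministic function of $A_{\ell-1}$. An alternative route, hinted at by closure of AFA under linear complementation, would reduce the co-finite case to the finite one --- a finite unary language recognised by an $n$-state AFA has, by the same $2^n$-state reverse subset construction and ordinary pumping, all its words of length below $2^n$ --- but the trajectory argument above handles both cases uniformly and avoids invoking complementation.
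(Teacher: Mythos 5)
Your proof is correct, but it takes a genuinely different route from the paper's. The paper argues forwards: it converts the AFA into an equivalent NFA with at most $2^n$ states, pumps an accepting run to show that any word of $L$ longer than $2^n$ forces $L$ to be infinite, then uses linear complementation of AFAs to argue dually that any rejected word longer than $2^n$ forces the complement to be infinite; having both would contradict the finite-or-cofinite dichotomy of \cref{cl:finite-cofinite}, so $L$ agrees on all words of length over $2^n$ and \cref{cl:LTLforShortLang} finishes. You instead evaluate the AFA backwards, tracking the acceptance sets $A_\ell \subseteq Q$ of states from which $a^\ell$ is accepted; since $A_\ell$ is a deterministic trajectory of a fixed map on $2^Q$, pigeonhole gives eventual periodicity with preperiod below $2^n$, and the eventual constancy of membership supplied by \cref{cl:finite-cofinite} then propagates back across the entire periodic tail, yielding the same threshold. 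Both arguments rest on a $2^n$-sized determinisation, but yours dispenses with the two black boxes the paper invokes (the AFA-to-NFA translation and AFA complementation), handles the finite and co-finite cases uniformly, and replaces run-tree pumping by a pigeonhole on a deterministic sequence --- arguably a more self-contained argument. The paper's version, in exchange, is shorter on the page because it delegates all the work to standard facts about alternating automata. One small point to make explicit if you write yours up: the backward recursion $A_\ell = g(A_{\ell-1})$ with $A_0 = F$ is the standard bottom-up semantics of alternating automata on finite words and should be justified by a one-line induction on word length.
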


We show next that this upper bound is tight. Consider the language $\{a^{2^{n-1}}\}$, which, according to~\cref{cl:finite-cofinite}, is only recognised by LTL formulas of size at least $2^{n-1}$. It is recognised by a weak alternating  automaton with $2n$ states and size in $O(n)$, using an automaton based on  Leiss's construction~\cite{Ern81}.
 Intuitively, the alternating automaton represents an $n$-bit up-counter with two states for each bit, one for $1$ and one for $0$ (see \cref{fig:AltCounter}), where the universal transitions enforce that nondeterministic transitions correctly update the counter.

\begin{restatable}[Adaptation of {\cite[proof of Theorem 1]{Ern81}}]{lemma}{AlternatingCounter}
\label{cl:AlternatingCounter}
For every $n\in\Nat\setminus\{0\}$, there is a  weak alternating  automaton with $2n$ states and transition function of size in $O(n)$ recognising the language $\{a^{2^{n-1}}\}$.
\end{restatable}

\begin{figure}[th]
	\centering
	\begin{tikzpicture}[->,>=stealth',shorten >=1pt,auto,node distance=3cm, semithick, initial text=, every initial by arrow/.style={|->},state/.style={circle, draw, minimum size=0.2cm}]
		
		\node[state] (q40) {$q_{4\ZB}$};
		\node[state] (q30) [right of=q40, xshift=1.25cm] {$q_{3\ZB}$};
		\node[state] (q20) [right of=q30, xshift=1.25cm] {$q_{2\ZB}$};
		\node[state] (q10) [right of=q20, xshift=-0.5cm] {$q_{1\ZB}$};
		\node[state] (q41) [accepting,below of=q40,yshift=-2cm]{$q_{acc}$};
		\node[state] (q31) [right of=q41, xshift=1.25cm] {$q_{3\OB}$};
		\node[state] (q21) [right of=q31, xshift=1.25cm] {$q_{2\OB}$};
		\node[state] (q11) [right of=q21, xshift=-0.5cm] {$q_{1\OB}$};

		\node[state] (q111) [above of=q11, yshift=-1.5cm]{\OrState};
		\node (Toqacc) [below of = q111, xshift=-1cm, yshift=2.25cm]{$q_{acc}$};
		\node (Toqacca) [right of = Toqacc, xshift=-3.15cm, yshift=.00cm]{};

		\node[state] (q401) [below of=q40, yshift=1.5cm]{\OrState};
		\node[state] (q403) [right of=q401, xshift=-2cm]{\AndState};
		
		\node[state] (q433) [below of=q403, yshift=2cm]{\OrState};
		
		\node (Toq11) [right of = q433, xshift=-2cm, yshift=-.15cm]{$q_{1\OB}$};
		\node (Toq11a) [right of = Toq11, xshift=-3.15cm, yshift=.00cm]{};
		\node (Toq21) [right of = q433, xshift=-2cm, yshift=-.5cm]{$q_{2\OB}$};
		\node (Toq21a) [right of = Toq21, xshift=-3.15cm, yshift=.00cm]{};
		\node (Toq31) [right of = q433, xshift=-2cm, yshift=-.85cm]{$q_{3\OB}$};
		\node (Toq31a) [right of = Toq31, xshift=-3.15cm, yshift=.00cm]{};

		\node[state] (q301) [below of=q30, yshift=1.5cm]{\OrState};
		\node[state] (q302) [left of=q301, xshift=2cm]{\AndState};
		\node[state] (q303) [right of=q301, xshift=-2cm]{\AndState};

		\node[state] (q311) [above of=q31, yshift=-1.5cm]{\OrState};
		\node[state] (q312) [left of=q311, xshift=2cm]{\AndState};
		\node[state] (q313) [right of=q311, xshift=-2cm]{\AndState};
		
		\node[state] (q332) [below of=q302, yshift=2cm]{\AndState};
		\node[state] (q333) [below of=q303, yshift=2cm]{\OrState};
		
		\node (Toq11m) [right of = q333, xshift=-2cm, yshift=-.15cm]{$q_{1\OB}$};
		\node (Toq11ma) [right of = Toq11m, xshift=-3.15cm, yshift=.00cm]{};
		\node (Toq21m) [right of = q333, xshift=-2cm, yshift=-.5cm]{$q_{2\OB}$};
		\node (Toq21ma) [right of = Toq21m, xshift=-3.15cm, yshift=.00cm]{};
		
		\node (Toq10m) [right of = q332, xshift=-2cm, yshift=.15cm]{$q_{1\ZB}$};
		\node (Toq10ma) [right of = Toq10m, xshift=-3.15cm, yshift=.00cm]{};
		\node (Toq20m) [right of = q332, xshift=-2cm, yshift=.5cm]{$q_{2\ZB}$};
		\node (Toq20ma) [right of = Toq20m, xshift=-3.15cm, yshift=.00cm]{};

		\node (Toqaccm) [below of = q312, xshift=0cm, yshift=2.25cm]{$q_{acc}$};
		\node (Toqaccma) [right of = Toqaccm, xshift=-3.15cm, yshift=.00cm]{};
		
		\node[state] (q201) [below of=q20, yshift=1.5cm]{\OrState};
		\node[state] (q202) [left of=q201, xshift=2cm]{\AndState};
		\node[state] (q203) [right of=q201, xshift=-2cm]{\AndState};
		
		\node[state] (q211) [above of=q21, yshift=-1.5cm]{\OrState};
		\node[state] (q212) [left of=q211, xshift=2cm]{\AndState};
		\node[state] (q213) [right of=q211, xshift=-2cm]{\AndState};
		
		\node[state] (q232) [below of=q202, yshift=2cm]{\AndState};
		\node[state] (q233) [below of=q203, yshift=2cm]{\OrState};
		
		\node (Toq11t) [right of = q233, xshift=-2cm, yshift=-.5cm]{$q_{1\OB}$};
		\node (Toq11ta) [right of = Toq11t, xshift=-3.15cm, yshift=.00cm]{};
		
		\node (Toq10t) [right of = q232, xshift=-2cm, yshift=.3cm]{$q_{1\ZB}$};
		\node (Toq10ta) [right of = Toq10t, xshift=-3.15cm, yshift=.00cm]{};
		
		\node (Toqacct) [below of = q212, xshift=0cm, yshift=2.25cm]{$q_{acc}$};
		\node (Toqaccta) [right of = Toqacct, xshift=-3.15cm, yshift=.00cm]{};

		\path 

		(q10) edge [->,out=-60, in=60,looseness=0.5]  node {a} (q11)
		(q11) edge [->]  node {a} (q111)
		(q111) edge [->] (q10)
		(q111) edge [->] (Toqacca)

		(q40) edge [->] node {a} (q401)
		(q401) edge [->] (q41)
		(q401) edge [->] (q403)

		(q403) edge [->] (q40)
		(q403) edge [->] (q433)

		(q433) edge [->] (Toq11a)
		(q433) edge [->] (Toq21a)
		(q433) edge [->] (Toq31a)

		(q30) edge [->] node {a} (q301)
		(q301) edge [->] (q302)
		(q301) edge [->] (q303)

		(q303) edge [->] (q30)
		(q303) edge [->] (q333)
		
		(q302) edge [->,out=-120, in=180,looseness=1.0] (q31)
		(q302) edge [->] (q332)
		
		(q333) edge [->] (Toq11ma)
		(q333) edge [->] (Toq21ma)
		
		(q332) edge [->] (Toq10ma)
		(q332) edge [->] (Toq20ma)

		(q31) edge [->] node {a} (q311)
		(q311) edge [->] (q312)
		(q311) edge [->] (q313)
		(q311) edge [->] (Toqaccma)
		
		(q313) edge [->] (q31)
		(q313) edge [->] (q333)
		
		(q312) edge [->,out=120, in=180,looseness=1.0] (q30)
		(q312) edge [->] (q332)
		
		(q20) edge [->] node {a} (q201)
		(q201) edge [->] (q202)
		(q201) edge [->] (q203)
		
		(q21) edge [->] node {a} (q211)
		(q211) edge [->] (q212)
		(q211) edge [->] (q213)
		(q211) edge [->] (Toqaccta)
		
		(q203) edge [->] (q20)
		(q213) edge [->] (q21)
		(q203) edge [->] (q233)
		(q213) edge [->] (q233)
		
		(q202) edge [->,out=-120, in=180,looseness=1.0] (q21)
		(q212) edge [->,out=120, in=180,looseness=1.0] (q20)
		(q202) edge [->] (q232)
		(q212) edge [->] (q232)
		
		(q233) edge [->] (Toq11ta)
		
		(q232) edge [->] (Toq10ta)
		
				;
	\end{tikzpicture}
	\caption{A weak alternating automaton of size in $O(n)$ recognising $\{a^{2^{n-1}}\}$; here with $n=4$, where the initial configuration is $q_{1\ZB} \land q_{2\ZB} \land q_{3\ZB} \land q_{4\ZB}$.}\label{fig:AltCounter}
\end{figure}
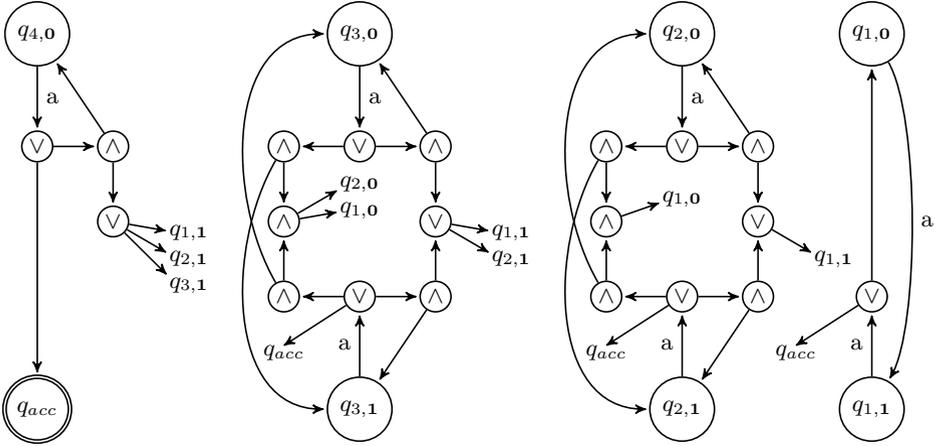
 
 We continue to nondeterministic automata (NFAs), for which the arguments are more involved as they do not allow for linear complementation.

\begin{restatable}{lemma}{NFAtoLTL}\label{cl:NFAtoLTL}
Every nondeterministic automaton with $n$ states recognising an LTL-expressible language $L\subseteq \{a\}^+$ is equivalent to an LTL formula of size in $O(n^{2})$.
\end{restatable}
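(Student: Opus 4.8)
The plan is to reduce everything to the two propositions already in hand. Since $L$ is LTL-expressible over a unary alphabet, \cref{cl:finite-cofinite} tells us that $L$ is either finite or co-finite, and \cref{cl:LTLforShortLang} lets us turn any ``agreement threshold'' into an LTL formula of size linear in that threshold. So the entire task becomes: bound, as a function of $n$, the length beyond which membership in $L$ is constant, and show this threshold is $O(n^2)$. I would treat the finite and co-finite cases separately.

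The finite case I expect to be easy and purely combinatorial, via pumping. Suppose some $a^k \in L$ with $k \ge n$, and fix an accepting run $\iota = q_0, q_1, \ldots, q_k$ with $q_k \in F$. Since the run visits $k+1 > |Q| = n$ states, two must coincide, say $q_i = q_j$ with $i < j$; the segment between them is a cycle reading $a^{j-i}$ with $1 \le j-i \le n$, so inserting it repeatedly yields accepting runs on $a^{k + m(j-i)}$ for all $m \ge 0$. This makes $L$ infinite, contradicting finiteness. Hence every word of a finite $L$ has length below $n$, so $L$ rejects all words of length $\ge n$, and the threshold is just $n$.

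The co-finite case is where I expect the real difficulty, precisely because rejection means \emph{all} runs fail, and NFAs do not admit the cheap linear complementation available for alternating automata. Here I would appeal to the structure theory of unary NFAs, converting the automaton to Chrobak normal form: an initial tail of length $O(n^2)$ feeding into a disjoint union of simple cycles whose lengths sum to at most $n$. For a word $a^k$ longer than the tail, acceptance depends only on the residue of $k$ modulo the relevant cycle lengths, so the accepting length-set is eventually periodic with pre-period $O(n^2)$. Co-finiteness then forces the periodic part to be entirely accepting: if some residue class were rejecting beyond the tail, it would contribute infinitely many rejected words, contradicting co-finiteness of $L$. Consequently every word of length exceeding the $O(n^2)$ tail lies in $L$, so the threshold is $O(n^2)$.

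In both cases $L$ agrees on all words of length above some $m = O(n^2)$, and \cref{cl:LTLforShortLang} then yields an equivalent LTL formula of size $O(m) = O(n^2)$, as required. The crux of the argument — and the sole reason the bound is quadratic rather than linear — is the $O(n^2)$ pre-period bound for unary NFAs supplied by the Chrobak normal form; every other step is elementary pumping or a direct invocation of the earlier propositions.
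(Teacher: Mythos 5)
Your proof is correct, and in the finite case it coincides with the paper's own argument; in the co-finite case, however, you take a genuinely different route. The paper also reduces the problem to bounding the longest rejected word $a^k$, but it does so via two-way automata: it converts the NFA into a two-way DFA with $O(n^2)$ states (also a result from Chrobak's paper), prunes it to a two-way DFA recognising $a^*\setminus\{a^k\}$, complements that machine with only a constant-factor size increase (a property specific to two-way DFAs), and finally invokes Birget's lower bound that any two-way DFA recognising the singleton $\{a^k\}$ needs at least $k+2$ states, which forces $k \in O(n^2)$. You instead work with the Chrobak normal form itself: a deterministic tail of length $O(n^2)$ feeding disjoint cycles with at most $n$ states in total, so that membership is eventually periodic with pre-period $O(n^2)$, and co-finiteness forces every residue class beyond the tail to be accepting. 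Both arguments ultimately rest on Chrobak's structure theory of unary NFAs, but yours replaces three black-box citations (NFA-to-2DFA simulation, 2DFA complementation, and the singleton lower bound) with a single structural theorem plus an elementary periodicity argument; the paper's version, in exchange, never reasons about periods or residues at all, only about the single longest rejected word. If you write your version up, cite the corrected statement of the normal-form theorem (Chrobak's original proof of the tail bound contained an error that was later repaired), and note explicitly, as you do, that the $O(n^2)$ pre-period is the sole source of the quadratic bound.
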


\begin{proof}[Proof sketch] For finite $L$, by a pumping argument, $\A$ only accepts words up to length $n$, and by \cref{cl:LTLforShortLang} we are done.
We now consider a co-finite $L$. 

We use 2-way deterministic automata, which are deterministic automata that process words of the form $\vdash\!\!w\!\!\dashv$, where $\vdash$ and $\dashv$ are start- and end-of-word markers respectively, and where transitions specify whether to read the letter to the right or to the left of the current position. They accept by reaching an end state, and reject by reaching a rejecting state or by failing to terminate~\cite{GMP07}, and every unary NFA $\A$ can be turned into a 2-way DFA $\D$ of size $O(n^2)$~\cite{Chr86}. 

We construct from an NFA $\A$ a 2-way DFA $\D$, and then a 2-way DFA $\D'$ of the same size that recognises $a^*\setminus \{a^k\}$, where $a^k$ is the longest word not in $L$. 
We use the fact that a 2-way DFA of size $m$ can be complemented into one of size $4m$ \cite{GMP07} to complement $\D'$ into $\D''$ that recognises $\{a^k\}$ and must therefore be of size at least $k+2$~\cite{Bir96}, so $k$, and by Proposition~\ref{cl:AlternatingCounter}, an LTL formula for $L$, is in $O(n^2)$.
\end{proof}

We now show that this upper bound is tight.
The previous lower bound ideas do not work with nondeterminism, since we need $n$ states to recognise $\{a^{n}\}$~\cite{KTV99}.
Yet, we need not count \textit{exactly} to $n$ for achieving a lower bound. We can use a variant of a language used in~\cite[pages 10--11]{BK10}:
For every positive integer $k$, define the set of positive integers $S_k = \{m > 0 \mid \exists i,j\in \mathbb{N}.~m=ik+j(k+1)\}$, and the language $V_k= \{a^m \mid m \in S_k\}\subseteq \{a\}^*$.

\begin{proposition}[{Folklore, \cite[Theorem 3]{BK10}}]\label{cl:numbers}
 For every $k\in \mathbb{N}$ the number $k^2 - k - 1$ is the maximal number not in $S_k$. 
\end{proposition}

\begin{proposition}[{\cite[proof of Theorem 4]{BK10}}]\label{cl:NFAforSquare}
	For every $n\in\Nat$, there is an NFA of size in $O(n)$ recognising a co-finite language $L\subseteq\{a\}^*$, such that $a^{k^2 - k - 1}$ is not in $L$, while for every $t \geq k^2 - k$, we have that $a^t\in L$.
\end{proposition}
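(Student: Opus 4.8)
The plan is to take the witness language to be $L = V_k$ itself, with the parameter $k$ set equal to $n$, and to exhibit a small NFA for it. The three numerical requirements are handed to us almost for free by \cref{cl:numbers}: since $k^2-k-1$ is the largest integer missing from $S_k$, every $t \geq k^2-k$ lies in $S_k$, so $a^t \in V_k$; the integer $k^2-k-1$ itself is not in $S_k$, so $a^{k^2-k-1} \notin V_k$; and because all but finitely many positive integers belong to $S_k$, the language $V_k$ is co-finite. Thus the only real work is to build an NFA of size $O(k)$ recognising $V_k = \{a^m \mid m \in S_k\}$.

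For the automaton, I would realise the defining condition $m = ik + j(k+1)$ directly as a choice between two cycles. Take an accepting ``hub'' state $h$ and two disjoint chains of fresh states forming, together with $h$, a cycle of length $k$ and a cycle of length $k+1$, both passing through $h$; every edge reads the single letter $a$. Reading $a^m$ from $h$ back to $h$ is then possible precisely when $m$ is a sum of blocks of length $k$ and $k+1$, i.e.\ $m = ik + j(k+1)$ for some $i,j \in \Nat$. To forbid the empty word (recall $0 \notin S_k$ because $m>0$ is required) I would add a separate, non-accepting initial state $\iota$ whose outgoing $a$-edges enter the first state of each cycle exactly as $h$'s do; this forces at least one block to be read before acceptance. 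The total number of states is $2 + (k-1) + k = 2k+1 = O(k)$, and setting $k = n$ gives the claimed $O(n)$ bound.

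Correctness amounts to checking that the set of accepted lengths is exactly $S_k$. A run from $\iota$ to the accepting hub $h$ decomposes uniquely into consecutive loops, each contributing $k$ or $k+1$ to the length, with at least one loop taken; hence the accepted lengths are exactly $\{ik + j(k+1) \mid i,j \in \Nat,\ i+j \geq 1\}$, which coincides with $S_k$ since $m = ik+j(k+1) = 0$ forces $i=j=0$.

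The step requiring the most care is this last equivalence together with the boundary behaviour: one must confirm that the $\iota$-versus-$h$ split really does exclude $\epsilon$ while admitting every nonempty combination, and check the small-$k$ edge cases (e.g.\ a short cycle collapsing when $k$ is tiny) so that the two cycles remain genuinely of lengths $k$ and $k+1$. All of the deeper arithmetic---in particular that the missing lengths stop exactly at $k^2-k-1$---is already isolated in \cref{cl:numbers}, so no Frobenius-type computation is needed here.
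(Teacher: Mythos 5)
Your proof is correct and takes essentially the same route as the paper, which imports this result from \cite{BK10}: the two cycles of lengths $k$ and $k+1$ sharing an accepting hub, plus a separate non-accepting initial state to exclude $\epsilon$, is exactly the standard NFA for $V_k$, and the arithmetic is delegated to \cref{cl:numbers} just as the paper intends. The degenerate behaviour for $k\in\{0,1\}$ that you flag is an artifact of the statement itself and is irrelevant to its use in the lower bound, where $k$ grows with $n$.
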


\begin{restatable}{theorem}{UnaryBlowup}
	The size blow-up involved in translating deterministic, nondeterministic, and alternating automata on finite unary words to LTL, when possible, is $\Theta(n)$, $\Theta(n^2)$, and $\Theta(2^n)$, respectively.
\end{restatable}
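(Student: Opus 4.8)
The plan is to prove the three $\Theta$ bounds by establishing, for each automaton model, a matching upper and lower bound, where throughout $n$ denotes the size of the automaton (states for deterministic, transitions for nondeterministic, subformulas of the transition function for alternating). The upper bounds are almost entirely supplied by lemmas already proved: the $O(n^2)$ bound for nondeterministic automata is exactly \cref{cl:NFAtoLTL}, and the $O(2^n)$ bound for alternating automata is exactly \cref{cl:UnaryAltUpper}. The only upper bound still needed is the deterministic $O(n)$ one, together with all three lower bounds, which I obtain by combining the explicit automaton constructions of this section with the depth-to-length lower bound implicit in \cref{cl:finite-cofinite}.

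For the deterministic upper bound I would argue from the lasso shape of a unary DFA. An $n$-state unary deterministic automaton consists of a tail of length $t$ feeding into a single cycle of length $c$, with $t+c\le n$. If its language $L$ is LTL-expressible then, by \cref{cl:finite-cofinite}, $L$ is finite or co-finite; this forces every state on the cycle to be uniformly accepting or uniformly rejecting, since otherwise both $L$ and its complement would be infinite, so $L$ would be neither finite nor co-finite. Consequently $L$ agrees on all words of length exceeding $n$ (indeed from length $t$ onwards), and \cref{cl:LTLforShortLang} yields an equivalent formula of size $O(n)$.

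All three lower bounds rest on the same observation extracted from \cref{cl:finite-cofinite}: if a language is the singleton $\{a^m\}$, or is co-finite but excludes $a^m$, then any equivalent LTL formula $\phi$ must satisfy $\depth{\phi}\ge m$, and hence $|\phi|\ge m$. Each of the witness languages below is itself LTL-expressible (being finite or co-finite), so the bounds apply ``when possible''. Instantiating the observation three times: (i) the language $\{a^n\}$ is recognised by a deterministic automaton of size $\Theta(n)$ and forces $|\phi|\ge n$, giving $\Omega(n)$; (ii) \cref{cl:NFAforSquare} provides, for $k=\Theta(n)$, a nondeterministic automaton of size $O(n)$ for a co-finite language excluding $a^{k^2-k-1}$, forcing $|\phi|\ge k^2-k-1=\Omega(n^2)$; and (iii) \cref{cl:AlternatingCounter} provides an alternating automaton of size $O(n)$ for $\{a^{2^{n-1}}\}$, forcing $|\phi|\ge 2^{n-1}=2^{\Omega(n)}$. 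Matching each lower bound with its upper bound yields $\Theta(n)$, $\Theta(n^2)$, and $\Theta(2^n)$, respectively.

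I expect the main obstacle to be bookkeeping rather than any new idea, since the substantive work (notably the two-way-automata detour behind \cref{cl:NFAtoLTL}) is already done. The remaining care lies in (a) justifying the ``uniform cycle'' step of the deterministic upper bound against the precise definitions of finite and co-finite languages, and (b) reconciling the three differing size conventions so that each witness automaton genuinely has size $\Theta(n)$ and the degree of the resulting blow-up comes out exactly linear, quadratic, and exponential. In particular, for the alternating case the lower bound is $2^{\Omega(n)}$ because the counter of \cref{cl:AlternatingCounter} uses $2n$ states, so the claimed $\Theta(2^n)$ is to be read, as is standard, up to the constant in the exponent.
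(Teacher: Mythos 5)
Your proposal is correct and takes essentially the same approach as the paper: the paper's proof handles the nondeterministic case via \cref{cl:NFAtoLTL} and \cref{cl:NFAforSquare}, the alternating case via \cref{cl:UnaryAltUpper} and \cref{cl:AlternatingCounter}, and the deterministic case by a pumping argument (equivalent to your lasso analysis) combined with \cref{cl:finite-cofinite} and \cref{cl:LTLforShortLang}, with the lower bound coming from a DFA for a singleton language via \cref{cl:Nesting}. Your write-up merely spells out what the paper leaves implicit, namely the depth-to-length step in the lower bounds and the fact that each witness language is itself LTL-expressible, so no genuinely different ideas are involved.
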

\section{General Alphabet}\label{sec:General}

In this section we consider the more challenging task of turning counter-free $\omega$-regular automata over arbitrary alphabets into LTL. We use the fact that these automata can be  turned into reset cascade automata (Krohn-Rhodes-Holonomy decomposition), which we describe in Section~\ref{sec:CascadedAutomata}. Our technical contribution is then the translation of reset cascade automata into LTL.

In brief, we build, in~\cref{sec:ReachbilityFormulas}, a \emph{parameterised LTL formula} that is satisfied by a word $w$ iff the run of the cascade on $w$, starting in the parameter configuration $S$, reaches a parameter configuration $T$, such that the remaining suffix of $w$ satisfies a parameter LTL formula $\tau$. 
We then use this formula, in~\cref{sec:DetAutomataToLTL}, to describe the automaton's acceptance condition.

When encoding the behavior of a cascade by an LTL formula, we need to overcome two major challenges: First, the cascade is a formalism that looks at the \emph{past}, namely at the word read so far, to determine the next configuration, while an LTL formula obtains its value only from the future. Second, the cascade has an internal state, while an LTL formula does not. Our reachability formulas are therefore quite involved, built inductively over the number of levels in the cascade, and implicitly allowing to track the internal configuration of the cascade.

In \cref{sec:SizeAnalysis} we analyse the length and depth of the resulting formulas.

\subsection{Cascaded Automata}\label{sec:CascadedAutomata}
\Subject{Cascades}
A cascaded semiautomaton (analogous to the algebraic wreath pro-duct) over an alphabet $\Sigma$ is a semiautomaton that can be described as a sequence of simple semiautomata, such that the alphabet of each of them is $\Sigma$ together with the current state of each of the preceding semiautomata in the sequence. It is a reset cascade if it is a sequence of reset semiautomata.
Formally, a \emph{cascaded semiautomaton}, or just \emph{cascade}, over alphabet $\Sigma$ with $n$ levels is a tuple $\A = \tuple{\Sigma, \A_1, \A_2, \dots, \A_n}$, such that $\A_i = (\Sigma_i, Q_i, \delta_i)$ is a semiautomaton for each level $i$, where $\Sigma_i = \Sigma \times Q_1 \times \cdots \times Q_{i{-}1}$. (So $\Sigma_1=\Sigma$, $\Sigma_2=\Sigma\times Q_1$, etc.). It is a \emph{reset cascade} if all $\A_i$'s are reset semiautomata.

An $i$-configuration $S$ of $\A$ is a tuple $\tuple{q_1, q_2, \ldots, q_i} \in Q_1 \times \cdots \times Q_i$. If $q_{i+1} \in Q_{i+1}$ is a state of level $i+1$, we write $\tuple{S, q_{i+1}}$ for the $(i+1)$-configuration $\tuple{q_1,\ldots,q_i,q_{i+1}}$. Note that the $0$-configuration is the empty tuple $\tuple{}$.
 Further, we derive the  transition relation for configurations by point-wise application of the respective $\delta_i$'s. We define $\delta_{\leq i}(\tuple{q_1, q_2, \dots q_i}, \sigma)$ as $\tuple{\delta_1(q_1, \tuple{\sigma}), \delta_2(q_2, \tuple{\sigma, q_1}), \dots}$. Note that we will omit the ``$\leq i$''-subscript if it is clear from context, and by just writing ``configuration'', we mean an $n$-configuration.

Notice that $\A$ describes a standard semiautomaton $\D_\A$ over $\Sigma$, whose states are the configurations of $\A$ of level $n$, and its transition function is $\delta_{\leq n}$. If there are up to $j$ states in each level of $\A$, there are up to $j^n$ states in $\D_\A$.
Observe that when $\A$ is a reset cascade, it can be translated to an equivalent reset cascade with up to $n\log j$ levels, and $2$ states in each level \cite[Ex. I.10.2]{Eil76}.

For a state $q \in Q_i$ of level $i$ of a reset cascade, we denote by $\Enter{q}$, $\Stay{q}$, and $\Leave{q} \subseteq \Sigma \times Q_1 \times \cdots \times Q_{i{-}1}$ the sets of (combined) letters that enter $q$, stay in it, and leave it, respectively. These are sets of pairs $\tuple{\sigma, S}$, where $S$ is an $(i{-}1)$-configuration and $\sigma \in \Sigma$. Notice that $\Enter{q} \subseteq \Stay{q}$, and that $\Leave{q}$ is the complement of $\Stay{q}$ (w.r.t.\ the relevant (combined) letters).

A semiautomaton $(\Sigma,Q,\delta)$ is \emph{homomorphic} to a cascade $\tuple{\Sigma, \A_1, \dots, \A_n}$ if there exists  a partial surjective function $\phi \colon Q_1 \times \cdots \times Q_{n} \to Q$, such that for every $\sigma\in\Sigma$ and $S\in Q_1 \times \cdots \times Q_{n}$, we have $\delta(\phi(S),\sigma) = \phi(\delta_{\leq n}(S,\sigma))$.

\begin{proposition}[Part of the Krohn-Rhodes-Holonomy Decomposition
	 {\cite[Corollary II.7.2]{Eil76}\label{cl:KrohnRhodes}, \cite[Theorem 3]{MP90}}]
	Every counter-free deterministic semiautomaton $\D$ with $n$ states is homomorphic to a reset cascade $\A$ with up to $2^n$ levels and $2^n$ states in each level.
\end{proposition}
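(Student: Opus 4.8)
The plan is to derive the cascade from the algebraic Krohn--Rhodes--Holonomy decomposition of the \emph{transition semigroup} of $\D$, after isolating the exact role played by counter-freeness. First I would pass from $\D = (\Sigma, Q, \delta)$ to the transformation semigroup $(Q, S)$, where $S$ consists of the maps $Q \to Q$ induced by nonempty words under $\delta$, with composition; this $(Q,S)$ faithfully captures the semiautomaton. The key observation is that the self-loop condition defining counter-freeness --- $\delta(q, u^n) = q$ iff $\delta(q,u) = q$ for all $q \in Q$, $u \in \Sigma^+$, $n \geq 1$ --- is precisely the statement that $S$ is \emph{aperiodic} (group-free), i.e.\ that $\hat u^{N} = \hat u^{N+1}$ for some $N$ and every $u$. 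I would prove this directly: if $S$ were not aperiodic, some $u$ would permute a cycle of states of length $p > 1$, giving a state $q$ with $\delta(q,u^p) = q$ yet $\delta(q,u) \neq q$, contradicting counter-freeness; conversely, aperiodicity forces any state fixed by some $u^n$ to already be fixed by $u$.

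Next I would invoke the Holonomy decomposition (Eilenberg \cite{Eil76}, Chapter~II, and its automaton-tailored form in \cite[Theorem~3]{MP90}). Its mechanism is to consider the family $\mathcal{I}$ of \emph{image sets} $Qs$ (for $s \in S$), closed under intersection, stratified by \emph{height} in the inclusion order. The decomposition produces a cascade whose level $i$ records --- given the subsets selected at the lower levels --- which height-$i$ image set the current computation lies in; the component acting at each level is the \emph{holonomy transformation semigroup} of that stratum, namely the holonomy permutation group acting on the mutually reachable subsets of that height, together with the constant (reset) maps. The wreath product of these components covers $(Q,S)$, and the surjective homomorphism $\phi$ of \cref{cl:KrohnRhodes} is obtained by reading off the singleton at the bottom of the selected chain of subsets, which is exactly the current state.

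The point at which counter-freeness pays off is the collapse of the group part: because $S$ is aperiodic, every holonomy group is trivial, so each level's component reduces to a group-free permutation-reset semigroup, i.e.\ the identity together with constant maps --- precisely a \emph{reset} semiautomaton. Hence the covering cascade is a reset cascade, as claimed. For the quantitative bounds I would track the two parameters through the construction: image sets are subsets of $Q$, so both the number of strata and the number of subsets occurring at a single height are bounded by $2^n$, yielding at most $2^n$ levels and at most $2^n$ states per level, following the accounting of \cite{MP90}.

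The main obstacle is not conceptual but bookkeeping: the Holonomy decomposition is an intricate classical construction, and faithfully reproducing the definition of heights, holonomy groups, and the wreath-product covering --- and then verifying simultaneously that aperiodicity collapses each component to a reset and that neither the number of levels nor the number of states per level exceeds $2^n$ --- is the delicate part. Since this decomposition is a deep and standard theorem, I would rely on the cited statements of Eilenberg and Maler--Pnueli for the construction and its complexity, and supply in full only the equivalence between counter-freeness and aperiodicity and the resulting specialisation of each holonomy component to a reset.
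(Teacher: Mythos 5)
The paper gives no proof of this proposition at all: it is imported verbatim from Eilenberg \cite[Corollary II.7.2]{Eil76} and Maler--Pnueli \cite[Theorem 3]{MP90}, which is precisely the route you take, deferring the holonomy construction and its complexity accounting to those same sources. The glue you supply on top --- the equivalence of counter-freeness with aperiodicity of the transition semigroup (both directions of your sketch are sound), the collapse of the holonomy groups to trivial groups so that every component is a reset semiautomaton, and the $2^n$ bounds on levels and states per level --- is correct and consistent with the cited constructions, so your proposal matches the paper's treatment.
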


\begin{remark}\label{rem:Cascade}
The Krohn-Rhodes and Holonomy decomposition theorems consider also more general cascades and give results with respect to arbitrary semiautomata. The Holonomy decomposition in \cite{Eil76}, as opposed to many other proofs of the Krohn-Rhodes decomposition, guarantees up to $2^n$ levels with up to $2^n$ states in each level. Yet, it shows that $\A$ \emph{covers} $\D$, allowing $\A$ to operate over an alphabet different from that of $\D$. In \cite{MP90,MP94,Mal10}, the algebraic proof of \cite{Eil76} is translated to an automata-theoretic one, providing the stated homomorphism. It is also stated in \cite[Theorem 3.1]{MP90}, \cite[Corollary 20]{MP94}, and \cite[Corollary 2]{Mal10} that the number of configurations in $\A$ is singly exponential in $n$, but to the best of our understanding they do not provide an explicit proof for it.
\end{remark}

\Subject{Cascades with acceptance conditions}
As a cascade $\A$ describes a standard semiautomaton (whose states are the configurations of $\A$), we can add to it an initial configuration and an acceptance condition to make it a standard deterministic automaton.
We show below that the homomorphism between an automaton and a cascade can be extended to also transfer the same acceptance condition.

\begin{restatable}{proposition}{RabinCascade}
\label{cl:ResetCascadeRabin}
	Let $\D$ be a deterministic B\"uchi, coB\"uchi or Rabin automaton, with a semiautomaton homomorphic to a cascade $\A$. There is respectively a deterministic B\"uchi, coB\"uchi or Rabin automaton $\D'$ equivalent to $\D$ with semiautomaton $\A$. For Rabin, $\D$ and $\D'$ have the same number of acceptance pairs.
\end{restatable}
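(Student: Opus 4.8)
The plan is to transfer the acceptance condition of $\D$ to $\A$ by pulling it back along the homomorphism $\phi$, relying on the fact that runs of the cascade project, letter by letter, onto runs of $\D$. First I would fix the initial configuration. Since $\phi$ is surjective, there is a configuration $\iota' \in Q_1 \times \cdots \times Q_n$ with $\phi(\iota') = \iota$; this is the initial configuration of $\D'$, whose semiautomaton is $\A$ with transition function $\delta_{\leq n}$.

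The central observation, proved by a routine induction on the length of the prefix read, is a \emph{projection lemma}: if $S_0, S_1, \dots$ is the run of $\A$ on a word $w$ starting in $\iota'$, then $\phi(S_0), \phi(S_1), \dots$ is exactly the run of $\D$ on $w$. The inductive step is immediate from the homomorphism equation $\delta(\phi(S),\sigma) = \phi(\delta_{\leq n}(S,\sigma))$; here one must also check that the run stays inside the domain of the partial map $\phi$, which holds because $\phi(\iota')$ is defined and the homomorphism equation forces $\phi(\delta_{\leq n}(S,\sigma))$ to be defined whenever $\phi(S)$ is. Hence every reachable configuration of $\D'$ lies in the domain of $\phi$, so its partiality is harmless. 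The key consequence, using that each level of $\A$ is finite (so only finitely many configurations map to any state of $\D$), is the set equality $\Inf(r) = \phi(\Inf(r'))$, where $r$ and $r'$ denote the runs of $\D$ and $\D'$ respectively: a state $q$ of $\D$ is visited infinitely often exactly when some configuration $S$ with $\phi(S) = q$ is visited infinitely often by $\D'$.

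It then remains to pull back the acceptance condition and check equivalence run by run. For a B\"uchi (resp.\ coB\"uchi) condition $\alpha \subseteq Q$, set $\alpha' = \phi^{-1}(\alpha)$; then $\Inf(r') \cap \alpha' \neq \emptyset$ iff $\phi(\Inf(r')) \cap \alpha = \Inf(r) \cap \alpha \neq \emptyset$, which gives the B\"uchi case directly and the coB\"uchi case by negation. For a Rabin condition $\{(G_1,B_1),\dots,(G_k,B_k)\}$, set $\alpha' = \{(\phi^{-1}(G_i), \phi^{-1}(B_i))\}_{i=1}^{k}$, which has the same number of pairs; using $\Inf(r) = \phi(\Inf(r'))$ one checks, for each $i$ separately, that $\phi^{-1}(G_i) \cap \Inf(r') \neq \emptyset$ iff $G_i \cap \Inf(r) \neq \emptyset$, and $\phi^{-1}(B_i) \cap \Inf(r') = \emptyset$ iff $B_i \cap \Inf(r) = \emptyset$, so the $i$-th pair is satisfied by $r'$ iff it is satisfied by $r$. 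In all three cases $\D'$ and $\D$ accept the same words.

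The only points requiring genuine care---and thus the main obstacle---are the handling of the partiality of $\phi$ (ensuring reachable configurations remain in its domain) and the direction of the $\Inf$-set correspondence, which relies crucially on finiteness for the ``infinitely often'' quantifier to transfer across $\phi$; everything else is bookkeeping.
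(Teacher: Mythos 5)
Your proposal is correct and follows essentially the same route as the paper's proof: choose any initial configuration in $\phi^{-1}(\iota)$ and pull the acceptance sets back along the homomorphism (preimage pairs $(\phi^{-1}(G_i),\phi^{-1}(B_i))$ for Rabin, preserving the number of pairs). The paper leaves the run-projection correspondence and the $\Inf$-set equality implicit, whereas you spell them out (including the partiality and finiteness issues), but this is the same argument in more detail.
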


\begin{restatable}{proposition}{MullerCascade}
\label{cl:ResetCascadeMuller}
	Consider a deterministic Muller automaton $\D$ with $n$ states, whose semiautomaton is homomorphic to a reset cascade $\A$ with $m$ configurations. Then there is a deterministic Muller automaton $\D'$ equivalent to $\D$, whose semiautomaton is $\A$ and its Muller condition has up to $2^{O(mn)}$ acceptance sets.
\end{restatable}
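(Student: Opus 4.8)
The plan is to transport the acceptance condition of $\D$ along the homomorphism $\phi$, in the same spirit as \cref{cl:ResetCascadeRabin}, and to read off the bound on the number of Muller sets directly from the number of configurations. Write $\alpha = \{M_1, \dots, M_k\}$ for the Muller condition of $\D$, let $\iota$ be its initial state, and recall that $\phi$ is a partial surjection from the configurations of $\A$ onto the states of $\D$ satisfying $\delta(\phi(S), \sigma) = \phi(\delta_{\leq n}(S, \sigma))$. Since $\phi$ is surjective I would first fix an initial configuration $\iota' \in \phi^{-1}(\iota)$ and take the semiautomaton of $\D'$ to be $\A$ (that is, $\D_\A$) with initial configuration $\iota'$. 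Because the homomorphism equation propagates the domain of $\phi$ forward under $\delta_{\leq n}$, and $\iota' \in \mathrm{dom}(\phi)$, every configuration reachable from $\iota'$ lies in $C \coloneqq \mathrm{dom}(\phi)$, on which $\phi$ is total. I would then define the Muller condition of $\D'$ by pulling $\alpha$ back, $\alpha' \coloneqq \{ N \subseteq C \mid \phi(N) \in \alpha \}$, which is well defined precisely because $\phi(N)$ exists for every $N \subseteq C$.

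The core of the argument is a run-correspondence claim. Fix an infinite word $w$, let $r' = S_0, S_1, \dots$ be the run of $\D'$ on $w$ and $r = q_0, q_1, \dots$ the run of $\D$ on $w$. Using $\phi(\iota') = \iota$ together with the homomorphism equation, a straightforward induction gives $q_j = \phi(S_j)$ for all $j$, so $r = \phi \circ r'$ is genuinely the run of $\D$ on $w$. From this I would derive the key identity $\Inf(r) = \phi(\Inf(r'))$. The inclusion $\supseteq$ is immediate, since a configuration occurring infinitely often in $r'$ has its $\phi$-image occurring infinitely often in $r$. For $\subseteq$, if a state $q$ occurs infinitely often in $r$ then, as $\A$ has only finitely many configurations, by the pigeonhole principle some single configuration $S \in \phi^{-1}(q)$ occurs infinitely often in $r'$, whence $q = \phi(S) \in \phi(\Inf(r'))$.

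Equivalence then follows mechanically: $\D'$ accepts $w$ iff $\Inf(r') \in \alpha'$ iff $\phi(\Inf(r')) = \Inf(r) \in \alpha$ iff $\D$ accepts $w$. For the size bound, every set in $\alpha'$ is a subset of the at most $m$ configurations, so $|\alpha'| \le 2^m$; since $\D$ has $n \ge 1$ states we have $m \le mn$, hence $|\alpha'| \le 2^{m} \le 2^{mn} = 2^{O(mn)}$ as claimed (one may also group by preimage to get the cruder $|\alpha| \cdot 2^m \le 2^{n+m}$).

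The step I expect to require the most care is not the counting, which is routine, but the bookkeeping around the partiality of $\phi$: I must ensure that the pulled-back condition is evaluated only on configurations in $\mathrm{dom}(\phi)$ and that the chosen initial configuration lies there, so that $\phi \circ r'$ is a legitimate run of $\D$ at every position. Once domain-closure under $\delta_{\leq n}$ is established, the $\Inf$ identity reduces to the short pigeonhole argument above, and the rest is immediate.
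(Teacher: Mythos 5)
Your proof is correct and follows essentially the same route as the paper's: fix an initial configuration in the preimage of $\iota$, pull the Muller condition back along the homomorphism, and use the correspondence between a run of $\D'$ and its $\phi$-image run of $\D$ (the paper leaves the $\Inf$-correspondence, the pigeonhole step, and the partiality bookkeeping implicit, which you spell out). Your counting is in fact slightly sharper: you bound the pulled-back condition by $2^m$ outright, as a family of subsets of the at most $m$ configurations, whereas the paper enumerates, per Muller set, the choices of preimage subsets and obtains $2^n(2^m)^n$; both lie within the claimed $2^{O(mn)}$.
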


\subsection{Encoding Reachability within Reset Cascades by LTL Formulas}\label{sec:ReachbilityFormulas}

For the rest of this section, let us fix a set of atomic propositions $\AP$, an alphabet $\Sigma = 2^{\AP}$, and a reset cascade $\A = \tuple{\Sigma, \A_1, \A_2, \dots, \A_n}$.

\Subject{The main reachability formula} For every level $i$ of $\A$, three configurations $S,B$ and $T$
 of level $i$, and two LTL formulas $\beta$ and $\tau$, we will define the LTL formula \smallerReach{$\Reach{S}{B}{\beta}{T}{\tau}$} with the intended semantics that it holds on a word $w \in \Sigma^\omega$ iff $\A$ goes from the `starting' configuration $S$ to the `target' configuration $T$ along some prefix $u$ of $w$, such that the suffix of $w$ after $u$ satisfies  $\tau$ and  the path along $u$ avoids the `bad' configuration $B$ with a suffix satisfying $\beta$. 

\Subject{Auxiliary reachability formulas} We will formally define the main reachability formula by induction on the level $i$ of the involved configurations, and using four auxiliary formulas, whose intended semantics is described in \cref{table:reach-formulas}.  These formulas distinguish between the case that the top-level state is unchanged along the reachability path, denoted with a solid arrow $\xrightarrow{\quad}$, and the case that it is changed, denoted by a dashed arrow $\xdashrightarrow{\quad}$. They also have dual, weak, versions.

Observe that intuitively \smallerReach{$\Reach{S}{B}{\beta}{T}{\tau}$} is an extended \emph{Until} operator, while its dual \smallerReach{$\WeakReach{S}{B}{\beta}{T}{\tau} = \neg (\Reach{S}{T}{\tau}{B}{\beta})$} is an extended \emph{Weak until} (or \emph{Release}) operator. We build the formulas so that for appropriate choices of $\beta$ and $\tau$, the (strong) reachability formulas 1, 3, and 5 (as numbered in \cref{table:reach-formulas}) are syntactic co-safety and the weak formulas 2 and 4 are syntactic safety formulas. 

\begin{table}[ht!]
\begin{changemargin}{-2cm}{-2cm}
{\footnotesize
\[\begin{array}{lc|rl}
\multicolumn{2}{c|}{\multirow{3}{*}{\text{Reachability formula $\phi$}}} & &\text{~~~~~~~~Intended semantics}\\
&& \Intui{Intuitively:} &~  \Intui{Reading a word $w$ from the configuration $S$ or $\tuple{S,s}$}\\
&& \text{Formally:} &~ w \models \varphi \iff \\[0.5em]
\hline
&&&\\
\multirow{3}{*}{1.}&\multirow{3}{*}{\Reach{S}{B}{\beta}{T}{\tau}} ~ &
\multicolumn{2}{l}{\Intui{ not reaching  $B(\beta)$ until reaching $T(\tau)$.}} \\
&	&         \exists i \geq 0.   & \delta(S,w_{[0..i)}) = T \land w_{[i..]} \models \tau \\
 & 	&                             & \land ~ (\forall j \in [0..i). ~ \delta(S,w_{[0..j)}) \neq B \lor w_{[j..]} \not \models \beta) \\[1.4em]

\multirow{3}{*}{2.}&\multirow{3}{*}{\WeakReach{S}{B}{\beta}{T}{\tau}} ~ &
\multicolumn{2}{l}{\Intui{ reaching $T(\tau)$ releases not reaching $B(\beta)$.}} \\
&    &         \forall i \geq 0.   & (\delta(S,w_{[0..i)}) = B \land w_{[i..]} \models \beta) \\
&  	&                             & \rightarrow ~ (\exists j \in [0..i). ~ \delta(S,w_{[0..j)}) = T \land w_{[j..]} \models \tau) \\[1.4em]

\multirow{4}{*}{3.}&\multirow{4}{*}{\StayReach{\tuple{S,s}}{\tuple{B,b}}{\beta}{\tuple{T,t}}{\tau}} ~ &
\multicolumn{2}{l}{\Intui{ not reaching  $\tuple{B,b}(\beta)$ until reaching $\tuple{T,t}(\tau)$, while staying in $s$.}} \\
&	&         \exists i \geq 0.   & \delta(\tuple{S,s},w_{[0..i)}) = \tuple{T,t} \land w_{[i..]} \models \tau \\
&  	&	       	  	              & \land ~ (\forall j \in [0..i). ~ \delta(\tuple{S,s},w_{[0..j)}) \neq \tuple{B,b} \lor w_{[j..]} \not \models \beta) \\
&	& 			                  &\textcolor{orange}{ \land ~ (\forall j \in [0..i). ~ \tuple{w[j], \delta(S, w_{[0..j)})} \in \Stay{s})} \\[1.4em]

\multirow{4}{*}{4.}&\multirow{4}{*}{\WeakStayReach{\tuple{S,s}}{\tuple{B,b}}{\beta}{\tuple{T,t}}{\tau}} ~&
\multicolumn{2}{l}{\Intui{ reaching $\tuple{T,t}(\tau)$ releases not (reaching $\tuple{B,b}(\beta)$ or leaving $s$).}} \\
&    &           \forall i \geq 0. & \big((\delta(\tuple{S,s},w_{[0..i)}) = \tuple{B,b} \land w_{[i..]} \models \beta) \\
&    &                             & ~ ~ ~  \textcolor{orange}{\lor ~(i > 0 \land \tuple{w[i{-}1], \delta(S, w_{[0..i{-}1)})} \in \Leave{s})} \big) \\
&  	&                             & \rightarrow ~ (\exists j \in [0..i). ~ \delta(\tuple{S,s},w_{[0..j)}) = \tuple{T,t} \land w_{[j..]} \models \tau) \\[1.4em]

\multirow{6}{*}{5.}&\multirow{6}{*}{\LeaveReach{\tuple{S,s}}{\tuple{B,b}}{\beta}{\tuple{T,t}}{\tau}} ~&
\multicolumn{2}{l}{\Intui{ not reaching  $\tuple{B,b}(\beta)$ until reaching $\tuple{T,t}(\tau)$ and leaving $s$.}} \\
&    &  ~ \exists i_1, \textcolor{orange}{i_2} \geq 0. & \delta(\tuple{S,s},w_{[0..i_1)}) = \tuple{T,t} \land w_{[i_1..]} \models \tau \\
&    &                             & \textcolor{orange}{\land ~ (\exists j_1 \in [0..i_1). ~ \tuple{w[j_1], \delta(S,w_{[0..j_1)})} \in \Enter{t})} \\
&    &                             & \textcolor{orange}{\land ~ \tuple{w[i_2], \delta(S,w_{[0..i_2)})} \in \Leave{s}} \\
&    &			                  & \land ~ (\forall j_2\in [0..\max(i_1{-}1, i_2)]. ~ \delta(\tuple{S,s},w_{[0..j_2)}) \neq \tuple{B,b}\\ 
&   &						 	& \hspace{3.9cm} \lor~ w_{[j_2..]} \not \models \beta) \\
\end{array}\]}
\end{changemargin}
\caption{The intended semantics of reachability formulas. \textcolor{orange}{Orange subformulas} show the difference between the auxiliary formulas and the first or second (main) formula.}
\label{table:reach-formulas}
\end{table}
	
{\allowdisplaybreaks
\Subject{Formulas 1 and 2}
The main formula is simply defined as the union of two auxiliary formulas, corresponding to whether or not the top-level state changes, and its weak version is defined to be its dual.
\begin{align*}
\Reach{S}{B}{\beta}{T}{\tau} \coloneqq & \begin{cases}
(\neg \beta) \U \tau & \text{if } S = \tuple{} \\
\StayReach{S}{B}{\beta}{T}{\tau} \vee \LeaveReach{S}{B}{\beta}{T}{\tau} & \text{otherwise.}
\end{cases} \\
\WeakReach{S}{B}{\beta}{T}{\tau} \coloneqq & ~ \neg \left(\Reach{S}{T}{\tau}{B}{\beta}\right)
\end{align*}

\Subject{Formula 3}
Since the formula should ensure that the top-level state $s$ is unchanged, we first distinguish between four cases, depending on which of the source configuration $\tuple{S,s}$, bad configuration $\tuple{B,b}$, and target configuration $\tuple{T,t}$ are equal.
The definitions of the four cases only differ in whether or not each of $\beta$ and $\tau$ are satisfied in the first position of the word. 

We  define them using an intermediate common formula that is indifferent to the first position, which we mark by ``$\FromSecond$'' on top of the arrow.
We then define the ``$\FromSecond$'' formula by using the main reachability formula with respect to a lower level, namely with respect to the configurations $S$ and $T$ instead of $\tuple{S,s}$ and $\tuple{T,t}$, and having corresponding disjunctions and conjunctions on all the combined letters of the top level that belong to $\Stay{s}$ and $\Leave{s}$.

{\footnotesize
\begin{align*}
& \StayReach{\tuple{S,s}}{\tuple{B,b}}{\beta}{\tuple{T,t}}{\tau} \coloneqq \\
& \quad \qquad \begin{cases}
\StayReachNonEmpty{\tuple{S,s}}{\tuple{B,b}}{\beta}{\tuple{T,t}}{\tau}
  & \text{if } \tuple{S,s} \neq \tuple{B,b} \text{ and } \tuple{S,s} \neq \tuple{T,t} \\
\StayReachNonEmpty{\tuple{S,s}}{\tuple{B,b}}{\beta}{\tuple{T,t}}{\tau} \lor \tau
  & \text{if } \tuple{S,s} \neq \tuple{B,b} \text{ and } \tuple{S,s}    = \tuple{T,t} \\
\StayReachNonEmpty{\tuple{S,s}}{\tuple{B,b}}{\beta}{\tuple{T,t}}{\tau} \land \neg \beta
  & \text{if } \tuple{S,s}    = \tuple{B,b} \text{ and } \tuple{S,s} \neq \tuple{T,t} \\
\left(\StayReachNonEmpty{\tuple{S,s}}{\tuple{B,b}}{\beta}{\tuple{T,t}}{\tau}  \land \neg \beta\right) \lor \tau
  & \text{if } \tuple{S,s}    = \tuple{B,b} \text{ and } \tuple{S,s}    = \tuple{T,t}
\end{cases} \\
&\text{where }  \StayReachNonEmpty{\tuple{S,s}}{\tuple{B,b}}{\beta}{\tuple{T,t}}{\tau} \coloneqq
 \hspace{-1cm}\quad \qquad \bigvee_{\substack{\tuple{\sigma,T'}\in\Stay{s} \\\text{s.t. }\tuple{T'\!,s}\overset{\sigma}{\to} \tuple{T,t}}} 
 \hspace{-0.2cm}
\Bigg( \Reach{S}{S}{\false}{T'}{\sigma \land \X \tau}  \\ 
&
\hspace{2em}\land \bigwedge_{\tuple{\eta,L}\in\Leave{s}} \hspace{0.2em}
\hspace{-1em}\Reach{S}{L}{\eta}{T'}{\sigma \land \X \tau}
 \hspace{1em}\land\hspace{-0.5em} \bigwedge_{\substack{\tuple{\rho,B'}\in\Stay{s}\\\text{s.t. }\tuple{B'\!,s}\overset{\rho}{\to} \tuple{B,b}}}\hspace{-1em}
\Reach{S}{B'}{\rho \land \X \beta}{T'}{\sigma \land \X \tau}\Bigg) 
\end{align*}}

\Subject{Formula 4}
Its intended semantics is also that the top-level state $s$ is unchanged, but we weaken Formula 3 by not enforcing that the target configuration $\tuple{T,t}$ is reached and $\tau$ is satisfied. Thus as long as the top-level state $s$ stays unchanged and the bad configuration $\tuple{B,b}$ is not reached while satisfying $\beta$, Formula 4 is also satisfied. Note that since both Formula 3 and Formula 4 need to ensure that the top-level state $s$ is unchanged they cannot simply be defined as the dual of each other. However, they share the same construction principle:

{
\footnotesize
\begin{align*}
& \WeakStayReach{\tuple{S,s}}{\tuple{B,b}}{\beta}{\tuple{T,t}}{\tau} \coloneqq \\
& \quad \qquad \begin{cases}
\WeakStayReachNonEmpty{\tuple{S,s}}{\tuple{B,b}}{\beta}{\tuple{T,t}}{\tau}
  & \text{if } \tuple{S,s} \neq \tuple{B,b} \text{ and } \tuple{S,s} \neq \tuple{T,t} \\
\WeakStayReachNonEmpty{\tuple{S,s}}{\tuple{B,b}}{\beta}{\tuple{T,t}}{\tau} \lor \tau
  & \text{if } \tuple{S,s} \neq \tuple{B,b} \text{ and } \tuple{S,s}    = \tuple{T,t} \\
\WeakStayReachNonEmpty{\tuple{S,s}}{\tuple{B,b}}{\beta}{\tuple{T,t}}{\tau} \land \neg \beta
  & \text{if } \tuple{S,s}    = \tuple{B,b} \text{ and } \tuple{S,s} \neq \tuple{T,t} \\
\left(\WeakStayReachNonEmpty{\tuple{S,s}}{\tuple{B,b}}{\beta}{\tuple{T,t}}{\tau}  \lor \tau \right) \land \neg \beta
  & \text{if } \tuple{S,s}    = \tuple{B,b} \text{ and } \tuple{S,s}    = \tuple{T,t}
\end{cases} \\
& \text{where} \\[0.5em]
& \WeakStayReachNonEmpty{\tuple{S,s}}{\tuple{B,b}}{\beta}{\tuple{T,t}}{\tau} \coloneqq \\
& \hspace{-1cm}\quad \qquad \bigvee_{\substack{\tuple{\sigma,T'}\in\Stay{s} \\\text{s.t. }\tuple{T'\!,s}\overset{\sigma}{\to} \tuple{T,t}}} \hspace{-0.1cm}
\Bigg(
\hspace{0em}
\bigwedge_{\tuple{\eta,L}\in\Leave{s}}
\hspace{-0.4em} \WeakReach{S}{L}{\eta}{T'}{\sigma \land \X \tau}
  \land \hspace{-1em}
\bigwedge_{\substack{\tuple{\rho,B'}\in\Stay{s}\\\text{s.t. }\tuple{B'\!,s}\overset{\rho}{\to} \tuple{B,b}}}
\hspace{-0.3cm}
\WeakReach{S}{B'}{\rho \land \X \beta}{T'}{\sigma \land \X \tau}\Bigg) & (1) \\
& \hspace{2.5em} \bigvee
\hspace{2.1em}
\Bigg(
\hspace{0em}
\bigwedge_{\substack{\tuple{\eta,L}\in\Leave{s}}}
\hspace{-0.4em} \WeakReach{S}{L}{\eta}{S}{\false} 
\hspace{1.2em} \land \hspace{-1em}
\bigwedge_{\substack{\tuple{\rho,B'}\in\Stay{s}\\\text{s.t. }\tuple{B'\!,s}\overset{\rho}{\to} \tuple{B,b}}}
\hspace{-0.3cm}
\WeakReach{S}{B'}{\rho \land \X \beta}{S}{\false}\Bigg) & (2) \\
\end{align*}
}
\vspace{-1cm}
\Subject{Formula 5}
The definition of the last reachability formula is the most challenging, since the top-level state changes ($s \neq t$), which prevents the direct usage of lower level configurations.

Intuitively, before reaching the target configuration $\tuple{T,t}$, the run must see a combined letter $\tuple{\sigma, T'} \in \Enter{t}$, after which the top-level state $t$ is preserved and the bad situation $\tuple{B,b}(\beta)$ is avoided.
This is line (1) of the definition.

The run must also not see $\tuple{B,b}(\beta)$ before reaching $T'$, which is handled in line (2), whose difference from line (1) is the additional constraint on the path from $S$ to $T'$. (Line (1) is required for the case that $\Enter{b}$ is empty.)
We use Formula 4 for that constraint, rather than Formula 3 which could also be used, in order to ensure that Formula 5 can be a syntactic co-safety formula.

Lastly, line (3) ensures that the top-level state is indeed changed.

{
\footnotesize
\begin{align*}
& \LeaveReach{\tuple{S,s}}{\tuple{B,b}}{\beta}{\tuple{T,t}}{\tau} \coloneqq \\
&  \bigvee_{\substack{\tuple{\sigma,T'}\in\\\Enter{t}}} \Bigg(
\Reach{S}{S}{\false}{T'}{\sigma \land \X \Big( \StayReachMulti{\delta(\tuple{T',\cdot},\sigma)}{\tuple{B,b}}{\beta}{\tuple{T,t}}{\tau}{} \Big) } ~ \land & (1) \\
             &  \hspace{0em} \bigwedge_{\substack{\tuple{\eta,R}\in\\\Enter{b}}} \hspace{-0.0cm}
                		\Reach{S}{R}{\eta \land \X ( \StayReachMulti{\delta(\tuple{R,\cdot},\eta)}{\tuple{T,t}}{\tau}{\tuple{B,b}}{\beta}{\Weak} )}{T'}{\sigma \land \X \Big( \StayReachMulti{\delta(\tuple{T',\cdot},\sigma)}{\tuple{B,b}}{\beta}{\tuple{T,t}}{\tau}{} \!\Big)\!} \!\!\! \Bigg) & (2) \\
& \hspace{0em} \land ~ \bigvee_{\substack{\tuple{\sigma,L}\in\\\Leave{s}}}  \StayReach{\tuple{S,s}}{\tuple{B,b}}{\beta}{\tuple{L,s}}{\sigma \land
	 \begin{cases} \neg \beta & \text{if } \tuple{L,s} = \tuple{B,b} \\ \true & \text{otherwise.} \end{cases}} & (3) 
\end{align*}
}}

We prove the correctness of the above definitions with respect to the intended meaning of \Cref{table:reach-formulas} by induction on the level of the involved configurations.
\begin{restatable}{lemma}{thmCorrectness}\label{thm:reach-formulas-correctness}
	The intended semantics of \Cref{table:reach-formulas} hold for all infinite words $w \in \Sigma^\omega = (2^{\AP})^\omega$, configurations $S,B,T$ of level $m \leq n$, states $s,b,t$ in level $m+1$ (when $m < n$), and LTL formulas $\beta$ and $\tau$ over $\AP$.
\end{restatable}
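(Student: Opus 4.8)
The plan is to prove all five intended semantics of \Cref{table:reach-formulas} simultaneously by induction on the level $m$ of the configurations $S,B,T$, mirroring the fact that the formulas themselves are \emph{defined} by this induction. Concretely, the induction hypothesis will be that Formulas~1 and~2 are correct at level $m$, and from it I will derive the correctness of Formulas~3,~4, and~5 at level $m{+}1$ (which involve an additional top state $s,b,t$), and then of Formulas~1 and~2 at level $m{+}1$, thereby feeding the next induction step. For the base case $m=0$ only Formulas~1 and~2 arise, since the stay/leave formulas require a nonempty top state. Every level-$0$ configuration is the empty tuple, so $S=B=T=\tuple{}$ and $\delta(\tuple{},u)=\tuple{}$ for all $u$; substituting this into the first row makes the equality $\delta(S,w_{[0..i)})=T$ vacuously true and $\delta(S,w_{[0..j)})\neq B$ vacuously false, collapsing the condition to $\exists i.\, w_{[i..]}\satisfies\tau \wedge \forall j<i.\, w_{[j..]}\not\satisfies\beta$, which is exactly $(\neg\beta)\U\tau$, the definition of Formula~1 here. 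Formula~2, being $\neg(\Reach{S}{T}{\tau}{B}{\beta})$, then holds by a direct De~Morgan check that the intended semantics of rows~1 and~2 are negations of one another after swapping $(B,\beta)$ with $(T,\tau)$.

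For the inductive step the essential bookkeeping is the dependency order of the definitions: the \emph{stay} formulas (3~and~4) are built purely from level-$m$ reachability formulas; the \emph{leave} formula~(5) is built from level-$m$ reachability formulas together with the level-$(m{+}1)$ stay formulas placed under an $\X$; and finally Formula~1 is the disjunction of Formulas~3 and~5 while Formula~2 is its dual. I therefore establish them in the order $3,4,5,1,2$, so that every appeal is either to the induction hypothesis or to an already-verified formula at the current level. For Formula~3 the key idea is that while the top-level state is pinned to $s$ the configuration evolves entirely within the lower $m$ levels, so ``reaching $\tuple{T,t}$ while staying in $s$'' becomes ``reaching $T$ at the lower level'' with the stay requirement encoded as the conjunction over $\Leave{s}$ (never read a combined letter that would exit $s$) and the bad-avoidance relativised to $\Stay{s}$. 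Peeling off the final transition $\tuple{T',s}\xrightarrow{\sigma}\tuple{T,t}$ turns this into the level-$m$ formula $\Reach{S}{\cdot}{\cdot}{T'}{\sigma\wedge\X\tau}$, to which the induction hypothesis applies; the four cases of the definition merely record whether $i=0$ is permitted (the ``$\lor\tau$'' disjunct) and whether the source already equals the bad configuration (the ``$\land\neg\beta$'' conjunct). Formula~4 follows the same construction over the \emph{weak} (release) formulas of level~$m$; since it must still forbid leaving $s$ it cannot be the literal dual of Formula~3, and I verify it directly against its own intended semantics using the level-$m$ correctness of Formula~2.

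The main obstacle is Formula~5, where the top-level state genuinely changes and no reduction to a single lower level is available. Here I split the witnessing run at the last moment it enters the target state $t$: before that moment the run is analysed through a level-$m$ reachability to the penultimate lower configuration $T'$, and after it the run stays in $t$, which is captured by the already-proved level-$(m{+}1)$ Formula~3 sitting under the $\X$ in the target parameter (line~1). The delicate point is the avoidance of the bad configuration $\tuple{B,b}$ across this split: line~2 strengthens line~1 with a constraint handling bad configurations reachable while still in the source region, phrased through the \emph{weak} stay formula precisely so that the whole of Formula~5 stays syntactic co-safety, while the disjunct over $\Enter{b}$ being empty is exactly why line~1 alone is retained; and line~3 certifies that $s$ is actually left. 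Matching this conjunction to the existential-in-$i_1,i_2$ semantics of row~5 is the most intricate step, and I expect the bulk of the case analysis to reside here: empty $\Enter{b}$, coincidences among $S,B,T$, and the relative order of $i_1$ and $i_2$.

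Finally, Formula~1 at level $m{+}1$ is obtained from the dichotomy ``the run either never leaves $s$ or leaves it at some point'': the first case is exactly Formula~3 (which forces $t=s$) and the second is exactly Formula~5, so their disjunction is equivalent to the intended reachability semantics of row~1, once one checks that the bad-avoidance ranges in the two formulas together cover the interval $[0..i)$ of row~1. Formula~2 at level $m{+}1$ then follows from Formula~1 by the same De~Morgan/duality computation as in the base case, closing the induction.
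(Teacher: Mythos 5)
Your proposal is correct and takes essentially the same approach as the paper's proof: a simultaneous induction on the configuration level, with the base case collapsing Formula~1 to $(\neg \beta) \U \tau$ and Formula~2 by duality, Formulas~3 and~4 reduced to level-$m$ Formulas~1 and~2 by peeling off the final transition $\tuple{T'\!,s}\overset{\sigma}{\to}\tuple{T,t}$, Formula~5 handled by splitting the run at the last entry into $t$ (with line~(2) covering bad configurations before that point and line~(3) certifying that $s$ is left), and Formulas~1 and~2 at the new level obtained as the stay/leave dichotomy and its dual. The paper merely presents the cases in the order 1--5 while you order them $3,4,5,1,2$; the logical dependency structure, and every key idea, is the same.
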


Using the same induction principle we prove that the reachability formulas stay within certain classes of the syntactic future hierarchy (\cref{def:future_hierarchy}). We use \smallerReach{$\Reach{S}{B}{X}{T}{Y} \in Z$} as a shorthand for saying that for every formulas $\beta \in X$ and $\tau \in Y$, the formula \smallerReach{$\Reach{S}{B}{\beta}{T}{\tau}$} is in $Z$.

\begin{restatable}{lemma}{lemSyntacticSafe}\label{lem:reach-formulas-syntactic}
Let $S,B$, $T$ be configurations of level $m \leq n$, and let $s,b,t$ be states in level $m+1$ (when $m < n$). Then for $i \geq 1$ it holds that:
\begin{itemize}
	\item \smallerReach{$\Reach{S}{B}{\Pi_i}{T}{\Sigma_i}, ~\StayReach{\tuple{S,s}}{\tuple{B,b}}{\Pi_i}{\tuple{T,t}}{\Sigma_i}, ~\LeaveReach{\tuple{S,s}}{\tuple{B,b}}{\Pi_i}{\tuple{T,t}}{\Sigma_i} ~\in ~\Sigma_i$}
	\item \smallerReach{$\WeakReach{S}{B}{\Sigma_i}{T}{\Pi_i}, ~\WeakStayReach{\tuple{S,s}}{\tuple{B,b}}{\Sigma_i}{\tuple{T,t}}{\Pi_i} ~\in~ \Pi_i $}
	\end{itemize}
\end{restatable}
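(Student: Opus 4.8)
The plan is to prove all five containments simultaneously by induction on the level $m$ of the configurations $S,B,T$, mirroring the induction principle of \Cref{thm:reach-formulas-correctness}: the inductive invariant $P(m)$ asserts the two \emph{strong}/\emph{weak} statements for the level-$m$ formulas \Reach{\cdot}{\cdot}{\cdot}{\cdot}{\cdot} and \WeakReach{\cdot}{\cdot}{\cdot}{\cdot}{\cdot}, and the step from $P(m)$ to $P(m+1)$ passes through the level-$(m+1)$ formulas \StayReach{\cdot}{\cdot}{\cdot}{\cdot}{\cdot}, \WeakStayReach{\cdot}{\cdot}{\cdot}{\cdot}{\cdot}, and \LeaveReach{\cdot}{\cdot}{\cdot}{\cdot}{\cdot}, which is exactly the ``level $m$, states in level $m{+}1$'' packaging of the lemma. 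Throughout I use three closure facts about \cref{def:future_hierarchy}, valid for $i \geq 1$: (i) every letter $\sigma \in \Sigma$, read as a conjunction of literals, lies in $\Sigma_0 = \Pi_0 \subseteq \Sigma_i \cap \Pi_i$, and so do $\true$ and $\false$; (ii) $\Sigma_i$ is closed under $\land,\lor,\X,\U$ and contains $\neg\psi$ for each $\psi\in\Pi_i$, while dually $\Pi_i$ is closed under $\land,\lor,\X,\R$ and contains $\neg\psi$ for each $\psi\in\Sigma_i$; and (iii) consequently $\tau\in\Sigma_i$ implies $\sigma\land\X\tau\in\Sigma_i$ and $\beta\in\Pi_i$ implies $\rho\land\X\beta\in\Pi_i$ (and symmetrically with the two classes swapped). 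The restriction $i\geq 1$ is precisely what these facts need. For the base case $m=0$ we have $S=\tuple{}$, so $\Reach{S}{B}{\beta}{T}{\tau}=(\neg\beta)\U\tau$: from $\beta\in\Pi_i$ we get $\neg\beta\in\Sigma_i$ by (ii), and closure under $\U$ with $\tau\in\Sigma_i$ gives a $\Sigma_i$ formula. The weak formula is handled uniformly at every level by its definition $\WeakReach{S}{B}{\beta}{T}{\tau}=\neg(\Reach{S}{T}{\tau}{B}{\beta})$: applying the strong claim to the inner formula, whose bad-parameter $\tau$ lies in $\Pi_i$ and whose target-parameter $\beta$ lies in $\Sigma_i$, places it in $\Sigma_i$, and its negation is then in $\Pi_i$ by (ii).

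For the inductive step at level $m+1$ I would first treat Formulas 3 and 4, which refer only to level-$m$ \Reach{\cdot}{\cdot}{\cdot}{\cdot}{\cdot} and \WeakReach{\cdot}{\cdot}{\cdot}{\cdot}{\cdot} and are therefore covered by $P(m)$, and only afterwards Formula 5. In the auxiliary formula of Formula 3 every inner strong subformula has the shape $\Reach{S}{B'}{\rho\land\X\beta}{T'}{\sigma\land\X\tau}$ (or with bad-parameter $\eta$ or $\false$); its bad-parameter lies in $\Pi_i$ and its target-parameter in $\Sigma_i$ by (i) and (iii), so $P(m)$ places it in $\Sigma_i$, and the conjunctions, disjunctions and the case adjustments $\lor\,\tau$ and $\land\,\neg\beta$ keep it in $\Sigma_i$; hence $\StayReach{\tuple{S,s}}{\tuple{B,b}}{\beta}{\tuple{T,t}}{\tau}\in\Sigma_i$. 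Dually, every inner subformula of Formula 4 is a weak one of the shape $\WeakReach{S}{B'}{\rho\land\X\beta}{T'}{\sigma\land\X\tau}$ (or with target-parameter $\false$), whose bad-parameter now sits in $\Sigma_i$ and target-parameter in $\Pi_i$, so $P(m)$ and the dual closure facts keep it in $\Pi_i$, giving $\WeakStayReach{\tuple{S,s}}{\tuple{B,b}}{\beta}{\tuple{T,t}}{\tau}\in\Pi_i$.

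It remains to establish $\LeaveReach{\tuple{S,s}}{\tuple{B,b}}{\beta}{\tuple{T,t}}{\tau}\in\Sigma_i$ using $P(m)$ together with the level-$(m+1)$ results just obtained for Formulas 3 and 4. The three nesting slots of Formula 5 are the crux. The $\X$-guarded subformulas in lines (1) and (2) are level-$(m+1)$ \StayReach{\cdot}{\cdot}{\cdot}{\cdot}{\cdot} instances with parameter pair $(\beta,\tau)\in\Pi_i\times\Sigma_i$, hence in $\Sigma_i$; once guarded by $\X$ and conjoined with $\sigma$ they remain in $\Sigma_i$ and thus form a legal target-parameter for the enclosing level-$m$ \Reach{\cdot}{\cdot}{\cdot}{\cdot}{\cdot}. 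The bad-parameter of line (2) nests a level-$(m+1)$ \WeakStayReach{\cdot}{\cdot}{\cdot}{\cdot}{\cdot} with the swapped pair $(\tau,\beta)\in\Sigma_i\times\Pi_i$, hence in $\Pi_i$, which after $\X$ and conjunction with $\eta$ is a legal bad-parameter for a strong formula. Line (3) is a level-$(m+1)$ \StayReach{\cdot}{\cdot}{\cdot}{\cdot}{\cdot} with target-parameter $\sigma\land\neg\beta$ or $\sigma\land\true$, again in $\Sigma_i$ since $\neg\beta\in\Sigma_i$. Applying $P(m)$ to the two outer level-$m$ \Reach{\cdot}{\cdot}{\cdot}{\cdot}{\cdot} formulas and closing under $\land,\lor$ yields $\LeaveReach{\tuple{S,s}}{\tuple{B,b}}{\beta}{\tuple{T,t}}{\tau}\in\Sigma_i$. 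Then $\Reach{\tuple{S,s}}{\tuple{B,b}}{\beta}{\tuple{T,t}}{\tau}=\StayReach{\cdot}{\cdot}{\cdot}{\cdot}{\cdot}\lor\LeaveReach{\cdot}{\cdot}{\cdot}{\cdot}{\cdot}\in\Sigma_i$, and \WeakReach{\cdot}{\cdot}{\cdot}{\cdot}{\cdot} at level $m+1$ follows by the same duality argument as in the base case, establishing $P(m+1)$.

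The main obstacle is not any single closure computation but the \emph{bookkeeping of which class each parameter occupies}, precisely in Formula 5, where a same-level \StayReach{\cdot}{\cdot}{\cdot}{\cdot}{\cdot} and \WeakStayReach{\cdot}{\cdot}{\cdot}{\cdot}{\cdot} are nested inside lower-level strong formulas. One must check that every time a subformula is placed in a bad-parameter slot its class flips from $\Sigma_i$ to $\Pi_i$ (or vice versa) in exactly the way demanded by closure fact (ii), and that the strong/weak alternation of the nested reachability operators is synchronised with this flip; this is also why Formula 5 deliberately invokes the weak Formula 4 rather than Formula 3 in line (2), so that a $\Pi_i$ formula, rather than a $\Sigma_i$ one, lands in the bad-parameter of the surrounding strong formula. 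Carefully ordering the induction so that both level-$(m+1)$ stay-formulas are proven before the leave-formula, and tracking the $(\beta,\tau)\in\Pi_i\times\Sigma_i$ versus $(\tau,\beta)\in\Sigma_i\times\Pi_i$ discipline across the nesting, is what makes the argument go through.
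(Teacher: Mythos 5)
Your proposal is correct and follows essentially the same route as the paper's proof: induction on the configuration level, with the five formulas handled via the closure properties of $\Sigma_i$ and $\Pi_i$ and the same $(\beta,\tau)\in\Pi_i\times\Sigma_i$ versus $(\tau,\beta)\in\Sigma_i\times\Pi_i$ parameter bookkeeping, including the key observation that line (2) of Formula 5 nests a $\Pi_i$ weak-stay formula as the bad-parameter of a lower-level strong formula. If anything, you are slightly more explicit than the paper about the dependency ordering inside the inductive step (Formulas 3 and 4 at level $m{+}1$ before Formula 5, which reuses them at the same level), a point the paper glosses by attributing it to ``the induction hypothesis''.
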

\subsection{Depth and Length Analysis}\label{sec:SizeAnalysis}
We analyze the length and temporal-nesting depth of the LTL reachability formulas defined in \cref{sec:ReachbilityFormulas}. Notice that both measures are of independent interest, as there might be a non-elementary gap between the depth and length of LTL formulas \cite[Theorem 6]{EVW02}. Since we provide upper bounds, the bound on the length of formulas obviously gives also a bound on their size.

We consider a reset cascade $\A$ with $n$ levels, as in \cref{sec:ReachbilityFormulas}, and further assume for the length and depth analysis that it has up to $n$ states in each level. (This assumption holds in the reset cascades that result from the Krohn-Rohdes decomposition as per \cref{cl:KrohnRhodes}.)

We define for each of the five reachability formulas a \emph{depth function} $\DF_x(i,d)$ and a \emph{length function} $\LF_x(i,l)$, where $x$ refers to the number of the reachability formula, to bound the depth and length of the formulas. These depend on the level $i$ of its input configurations $S, B$ and $T$, and the maximal depth $d$ and length $l$ of its input formulas $\beta$ and $\tau$. For the main (first) reachability formula, we also use $\DF$ and $\LF$, standing for $\DF_1$ and $\LF_1$.
For example, the length of the first formula  $\Reach{S}{B}{\beta}{T}{\tau}$ over configurations $S, B$ and $T$ of level $7$ and formulas $\beta$ and $\tau$ of length up to $77$ is bounded by the value of $\LF_1(7,77)$.

For simplicity, we consider the LTL representation of an alphabet letter $\sigma\in\Sigma$ to be of length $1$, while its actual length is $3 \log_2 |\Sigma|$. This increase is due to the need to encode an alphabet letter $\sigma \in \Sigma = 2^{\AP}$ as a conjunction of atomic propositions in $\AP$. The  representation length can be multiplied by the total length of the final relevant formula (e.g., a formula equivalent to the entire reset cascade), since it remains constant along all steps of our inductive computation.

We provide in \cref{tab:size} upper bounds on the depth and length functions, relative to values of other depth and length functions with respect to configurations of the same or lower-by-one level.
The table is constructed by following the syntactic definitions of the reachability formulas, and applying basic simplifications to the resulting expressions.
For example, $\LF_1(0,l)=2+2l$ standing for the length of $(\neg \beta) \U \tau$.
In \cref{cl:ReachFormulaSize} we will use \cref{tab:size} to bound the absolute depth and length of the main reachability formula.

\begin{table}[h]
\begin{changemargin}{-2cm}{-2cm}
		\[\begin{array}{lc|rl}
		\multicolumn{2}{c|}{\text{Reachability formula } \varphi}~ & \multicolumn{2}{c}{\text{Bounds on } \depth{\phi} \text{ and length } |\varphi|} \\[0.5em]
		\hline
		&& \\
		\multirow{4}{*}{1.} &
		\multirow{4}{*}{\Reach{S}{B}{\beta}{T}{\tau}} ~
		& ~\DF_1(i,d)     \leq & \begin{cases} d + 1 & \text{if } i = 0 \\ \max (\DF_3(i,d) \, , \,\DF_5(i,d)) & \text{otherwise.} \end{cases} \\[1.4em]
		&& ~\LF_1(i,l)~     \leq & \begin{cases} 2 + 2l & \text{if } i = 0 \\ 1 + \LF_3(i,l) + \LF_5(i,l) & \text{otherwise.} \end{cases} \\[2.4em]

		\multirow{2}{*}{2.} &
		\multirow{2}{*}{\WeakReach{S}{B}{\beta}{T}{\tau}} ~
		& \DF_2(i,d)    = & \DF_1(i,d) \\ [0.2em]
		&& \LF_2(i,l)~    = & 1 + \LF_1(i,l) \\ 
		&&\\

		\multirow{3}{*}{3.} &
		\multirow{3}{*}{\StayReach{\tuple{S,s}}{\tuple{B,b}}{\beta}{\tuple{T,t}}{\tau}} ~
		&   \DF_3(i,d)   \leq & \DF_1(i{-}1, d+1) \\ [0.2em]
		&&   \LF_3(i,l)~   \leq & 3{+}2l + |\Sigma|n^{i-1} \big(1 {+} \LF_1(i{-}1,3{+}l)+ \\
		&&& 1+ |\Sigma|n^{i-1} (\LF_1(i{-}1, 3{+}l)+1) +\\
		&&& 1+ |\Sigma|n^{i-1} (\LF_1(i{-}1, 3{+}l)+1)\big) \\
		&&\leq &  3+ 2l+ 4|\Sigma|^2n^{2(i-1)}\LF_1(i{-}1,l{+}3)
		\\[1em]

		\multirow{3}{*}{4.} &
		\multirow{3}{*}{\WeakStayReach{\tuple{S,s}}{\tuple{B,b}}{\beta}{\tuple{T,t}}{\tau}} ~
		&   \DF_4(i,d)   \leq & \DF_2(i{-}1, d+1) = \DF_1(i{-}1, d+1)\\ [0.2em]
		&& \LF_4(i,l)~ \leq & 3 + 2l + (1+|\Sigma|n^{i-1})\big(
		1+|\Sigma|n^{i-1} (1 + \LF_2(i{-}1, l{+}3))\big) \\
		&&            \leq & 3 + 2l + 4|\Sigma|^2n^{2(i-1)} \LF_1(i{-}1, l+3) \\[1em]

		\multirow{3}{*}{5.} &
		\multirow{3}{*}{\LeaveReach{\tuple{S,s}}{\tuple{B,b}}{\beta}{\tuple{T,t}}{\tau}} ~
		&   \DF_5(i,d)   \leq & \DF_1(i{-}1, \max(1+\DF_3(i,d) \, , \, 1+ \DF_4(i,d))) \\ [0.2em]
		&& \LF_5(i,l)~     \leq & |\Sigma|n^{i-1} \cdot \big(\LF_1(i-1,3+\LF_3(i,l))+2+\\
		&&& |\Sigma| n^{i-1}\cdot \big(\LF_1(i-1, \max(3+\LF_3(i,l),3+\LF_4(i,l)))+1\big)\big)\\
		&&& +1+ |\Sigma|n^{i-1}\cdot (1+\LF_3(i,3+l))\\
	\end{array}\]
\end{changemargin}
	\caption{The relative depths and lengths of the reachability formulas over configurations of level $i$, and LTL formulas $\beta$ and $\tau$ of depth at most $d$ and length at most $l$. For the first two reachability formulas, we consider $i\geq 0$ and for the other formulas $i\geq 1$.}\label{tab:size}
\end{table}

\Subject{Depth Analysis}
The temporal nesting depth of the main reachability formula \smallerReach{$\Reach{S}{B}{\beta}{T}{\tau}$} is intuitively exponential in the number $n$ of levels of the reset cascade (linear in the number of configurations), since it is defined inductively along these levels, and the depth of a level-$(i+1)$ formula is about twice the depth of a level-$i$ formula. The parameters of the reachability formula are both the configurations $S$, $B$ and $T$ of level $i$, and the formulas  $\beta$ and $\tau$; yet, the depth of the reachability formula only linearly depends on the  depth of $\beta$ and $\tau$.
\Subject{Length Analysis}
Intuitively, the overall length of the main reachability formula \smallerReach{$\Reach{S}{B}{\beta}{T}{\tau}$} with respect to configurations of the top level is doubly exponential in the number $n$ of levels of the reset cascade (and thus singly exponential in the number of configurations), since the formula is defined inductively along these levels, and the length $\LF(i,l)$  is roughly $\LF(i{-}1,l) \cdot \LF(i{-}1,l)$. More precisely,  $\LF(i,l)=l \cdot f(i)$ for some doubly exponential function $f(i)$.

Now, why is $\LF(i,l)$ roughly equal to $\LF(i{-}1,l) \cdot \LF(i{-}1,l)$? The dominant component of the level-$i$ reachability formula is line (2) in the definition of \smallerReach{$\LeaveReach{\tuple{S,s}}{\tuple{B,b}}{\beta}{\tuple{T,t}}{\tau}$}.
It is a  level-$(i{-}1)$ reachability formula whose formula-parameters are themselves auxiliary reachability formulas of level $i$ with formula parameters of length $l$. The length of an auxiliary reachability formula of level $i$ is roughly as of the main reachability formula of level $i{-}1$, implying that the length of $L_i(l)$ is roughly $L_{i{-}1}(L_{i{-}1}(l))$. By the inductive proof that $L_{i{-}1}(l)=l \cdot f(i{-}1)$, we get that $L_i(l) = L_{i{-}1}(L_{i{-}1}(l)) = L_{i{-}1}(l) \cdot f(i{-}1) = l \cdot f(i{-}1) \cdot f(i{-}1)$.

As for the many disjunctions and conjunctions that appear in the formulas, observe that the number of disjuncts and conjuncts does not depend on the formula-parameters $\beta$ and $\tau$, but only the level $i$ of the configurations $S$, $B$, and $T$. Hence, they do not dominate the growth rate of the overall formula length.

\begin{restatable}{lemma}{ReachFormulaSize}\label{cl:ReachFormulaSize}
Consider a reset cascade $\A$ with $n$ levels and up to $n$ states in each level, and a formula \smallerReach{$\zeta = \Reach{S}{B}{\beta}{T}{\tau}$} with configurations $S$, $B$ and $T$ of $\A$ of level $i\leq n$. Let $d=\max(\depth{\beta}, \depth{\tau})$ and let $l=\max(|\beta|, |\tau|)$. Then: 
\[
(a)~\depth{\zeta} \leq d + 3^i \quad \text{ and } \quad (b)~|\zeta|\leq l \cdot  (10|\Sigma|^2n)^{4^i}
\]
\end{restatable}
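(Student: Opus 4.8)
The plan is to prove both bounds simultaneously by induction on the level $i$, using the recurrences collected in \cref{tab:size}. The two statements are coupled: the depth bound $(a)$ and length bound $(b)$ both rest on the recursive definitions of the five reachability formulas, and the length recurrence for $\LF_5$ feeds back $\LF_3$ and $\LF_4$ at the \emph{same} level $i$, so one must first close the same-level dependencies before descending to level $i-1$. Concretely, for each level $i \geq 1$ I would first express $\DF_3, \DF_4, \DF_5$ (resp.\ $\LF_3, \LF_4, \LF_5$) purely in terms of $\DF_1$ (resp.\ $\LF_1$) at level $i-1$, and only then combine via the $\DF_1$ / $\LF_1$ rows to obtain a clean recurrence $\DF_1(i,d) \leq g(\DF_1(i{-}1,\cdot))$ and $\LF_1(i,l) \leq h(\LF_1(i{-}1,\cdot))$.

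For the depth bound I would proceed as follows. The base case is $\DF_1(0,d) = d+1 \leq d + 3^0 \cdot \text{(const)}$, which I'd tighten to match $d + 3^i$. For the inductive step, reading off \cref{tab:size}, $\DF_3(i,d) \leq \DF_1(i{-}1,d{+}1)$ and $\DF_4(i,d) \leq \DF_1(i{-}1,d{+}1)$, while $\DF_5(i,d) \leq \DF_1(i{-}1, \max(1{+}\DF_3(i,d), 1{+}\DF_4(i,d)))$. Substituting, $\DF_5(i,d) \leq \DF_1(i{-}1, 1 + \DF_1(i{-}1,d{+}1))$, so the depth roughly doubles per level with an additive overhead. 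Applying the inductive hypothesis $\DF_1(i{-}1,d') \leq d' + 3^{i-1}$ twice gives $\DF_5(i,d) \leq d + 2 \cdot 3^{i-1} + \text{const}$, and since $\DF_1(i,d) = \max(\DF_3, \DF_5) = \DF_5$ dominates, I'd verify that $d + 2\cdot 3^{i-1} + c \leq d + 3^i$ for the chosen constant, which holds because $3^i = 3 \cdot 3^{i-1} \geq 2 \cdot 3^{i-1} + 3^{i-1}$ absorbs the overhead. The choice of base $3$ (rather than $2$) is precisely what gives the slack to swallow the additive constants at each step.

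For the length bound, the engine is the recurrence $\LF_1(i,l) \leq 1 + \LF_3(i,l) + \LF_5(i,l)$, where $\LF_5$ is the dominant term. I would first bound $\LF_3(i,l)$ and $\LF_4(i,l)$ by the simplified row entries $3 + 2l + 4|\Sigma|^2 n^{2(i-1)} \LF_1(i{-}1, l{+}3)$, then substitute these into the $\LF_5$ row. The key algebraic observation, as the informal discussion before \cref{cl:ReachFormulaSize} stresses, is that $\LF_5(i,l)$ is essentially $\LF_1(i{-}1, \cdot)$ \emph{applied to arguments that are themselves of size} $\LF_3(i,l) \approx \LF_1(i{-}1,l)$, producing the nested form $\LF_1(i{-}1,\LF_1(i{-}1,l))$ decorated with polynomial-in-$n$ factors from the conjunctions/disjunctions over $\Enter{t}$, $\Enter{b}$, $\Leave{s}$ (each of size at most $|\Sigma| n^{i-1}$). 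Using the multiplicative form of the hypothesis $\LF_1(i{-}1,l) \leq l \cdot (10|\Sigma|^2 n)^{4^{i-1}}$, the nesting yields $\LF_1(i{-}1,\LF_1(i{-}1,l)) \leq l \cdot (10|\Sigma|^2 n)^{2\cdot 4^{i-1}}$, and the extra polynomial factors $(|\Sigma| n^{i-1})^{O(1)}$ together with the constant base $10$ must be checked to fit inside the jump of the exponent from $2 \cdot 4^{i-1}$ to $4^i = 4 \cdot 4^{i-1}$. The remaining budget $2 \cdot 4^{i-1}$ in the exponent is the slack that absorbs all polynomial and constant overhead.

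The main obstacle will be the bookkeeping in the length step: $\LF_5$ has several additive blocks, each a product of a branching factor $|\Sigma| n^{i-1}$ with a nested $\LF_1(i{-}1,\cdot)$ whose argument is one of $3 + \LF_3(i,l)$, $\max(3+\LF_3(i,l), 3+\LF_4(i,l))$, or $3 + l$, and one must verify that the \emph{doubled} exponent $2\cdot 4^{i-1}$ plus all logarithmic contributions of the polynomial factors stays below $4^i$. I would handle this by working with a single crude upper bound $\LF_3(i,l), \LF_4(i,l) \leq l \cdot (10|\Sigma|^2 n)^{4^{i-1}+1}$ (folding the $4|\Sigma|^2 n^{2(i-1)}$ factor into one extra unit of exponent), so that every argument fed into the outer $\LF_1(i{-}1,\cdot)$ is uniformly of this shape; then the whole of $\LF_5(i,l)$ collapses to $l$ times a product of at most a constant number of powers of $(10|\Sigma|^2 n)$, whose total exponent is bounded by $2\cdot 4^{i-1} + O(4^{i-1}) \leq 4^i$. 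Since I only need an upper bound, I would deliberately discard tightness and choose the generous base $10$ and exponent base $4$ to make every inequality $\leq$ go through without delicate optimisation, leaving only a routine (if tedious) verification that the constants cooperate at $i=1$ and that the exponent inequality $2 \cdot 4^{i-1} + c \cdot 4^{i-1} \leq 4^i$ holds for the chosen $c$.
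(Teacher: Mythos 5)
Your proposal follows essentially the same route as the paper's proof: induction on the level $i$, first closing the same-level dependencies by expressing $\DF_3,\DF_4,\DF_5$ (resp.\ $\LF_3,\LF_4,\LF_5$) in terms of $\DF_1(i{-}1,\cdot)$ (resp.\ $\LF_1(i{-}1,\cdot)$) via \cref{tab:size}, then applying the multiplicative induction hypothesis to the nested composition $\LF_1(i{-}1,\LF_1(i{-}1,\cdot))$ so that the exponent doubles to $2\cdot 4^{i-1}$, with the remaining slack up to $4^i$ absorbing the branching factors, and a separate numeric check at $i=1$; the depth argument (base $3$ swallowing the additive overhead) is likewise identical to the paper's.

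There is, however, one concrete quantitative slip in your length bookkeeping: the intermediate bound $\LF_3(i,l),\LF_4(i,l) \leq l\cdot(10|\Sigma|^2 n)^{4^{i-1}+1}$ is false in general. Folding the factor $4|\Sigma|^2 n^{2(i-1)}$ into a \emph{single} extra unit of exponent amounts to requiring roughly $4|\Sigma|^2 n^{2(i-1)}(l+3) \leq 10\,l\,|\Sigma|^2 n$, which already fails for $i=2$ and $n \geq 3$: the factor $n^{2(i-1)}$ grows with the level, whereas one exponent unit contributes only a single factor of $n$. The correct cost of absorbing $n^{2(i-1)}$ is up to $4^{i-1}$ extra exponent units (using $2(i-1) \leq 4^{i-1}$, hence $n^{2(i-1)} \leq (10|\Sigma|^2 n)^{4^{i-1}}$), not one. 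This does not sink your plan, because your own final exponent budget, $2\cdot 4^{i-1} + c\cdot 4^{i-1} \leq 4^i$ with $c \leq 2$, is exactly what is needed; it is also how the paper proceeds, carrying the polynomial prefactor explicitly (ending with $25|\Sigma|^4 n^{4(i-1)}(l+3)$ in front of $(10|\Sigma|^2 n)^{2\cdot 4^{i-1}}$) and absorbing everything in a single final inequality, checked separately for $i=1$ and $i\geq 2$. So the repair is merely to replace your ``$+1$'' by ``$+\,4^{i-1}$'' (or to defer the absorption to the end, as the paper does), after which your argument and the paper's coincide.
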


\cref{cl:ReachFormulaSize} is proven by induction on $i$. The details are given in \cref{app:length}.
\subsection{Translating Deterministic Counter-Free Automata to LTL}\label{sec:DetAutomataToLTL}

We use the reachability formulas of \cref{sec:ReachbilityFormulas} to translate a reset cascade $\A$ to an equivalent LTL formula.
Our LTL formulation of $\A$'s acceptance condition is based on an LTL formulation of
``$C$ is visited finitely/infinitely often along a run of $\A$ on a word $w$'', for a given configuration $C$ of $\A$. It thus applies to every $\omega$-regular acceptance condition and by \cref{cl:ResetCascadeMuller,cl:KrohnRhodes} to every deterministic counter-free $\omega$-regular automaton.
We introduce two shorthands to the main reachability formula: the first is satisfied if we reach $T$ from $S$ without any side constraints (which is always satisfied in the case that $S = T$), and the second requires that we reach it along a nonempty prefix.  
\[\ReachTT{S}{T} \coloneqq \Reach{S}{T}{\false}{T}{\true} \quad\quad \NonEmptyReachTT{S}{T} \coloneqq \bigvee_{\sigma \in \Sigma} \left(\sigma \land \X (\ReachTT{\delta(S,\sigma)}{T})\right)\]%

With \cref{thm:reach-formulas-correctness,lem:reach-formulas-syntactic} we then obtain (a proof is given in \cref{app:main}):

\begin{restatable}{lemma}{LTLforVisitFinitelyOften}\label{cl:LTLforVisitFinitelyOften}
Consider a reset cascade $\A = \tuple{2^{\AP}, \A_1, \dots, \A_n}$ together with an initial configuration $\InitState$ and some configuration $C$. Then for a word $w\in {(2^{\AP})}^\omega$, the run of $\A$ on $w$ starting in $\InitState$ visits $C$ finitely often iff $w$ satisfies the formula
$\FIN(C) \coloneqq \neg(\ReachTT{\InitState}{C})\lor \ReachTT{\InitState}{C}(\neg (\NonEmptyReachTT{C}{C}))$. 
Furthermore, $\FIN(C) \in \Sigma_2$.
\end{restatable}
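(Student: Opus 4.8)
The statement asserts the correctness of the formula $\FIN(C)$, together with the syntactic claim that $\FIN(C) \in \Sigma_2$. I would treat these two parts separately, beginning with the semantic correctness and then reading off the syntactic membership from \cref{lem:reach-formulas-syntactic}. The central observation is that the two shorthands $\ReachTT{S}{T}$ and $\NonEmptyReachTT{S}{T}$ have clean reachability semantics inherited from \cref{thm:reach-formulas-correctness}: by unfolding the definition $\ReachTT{S}{T} = \Reach{S}{T}{\false}{T}{\true}$ against row~1 of \cref{table:reach-formulas}, a word $w$ satisfies $\ReachTT{S}{T}$ exactly when the run of $\A$ from configuration $S$ reaches $T$ at some point (with the bad constraint vacuous since $\beta = \false$, and $\tau = \true$ imposing no suffix condition); and $\NonEmptyReachTT{S}{T}$ holds exactly when that reachability uses a strictly nonempty prefix, since the explicit disjunction over $\sigma$ forces at least one step before applying $\ReachTT{\delta(S,\sigma)}{T}$.

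\textbf{The key equivalence.} With these two semantics in hand, I would argue that ``the run of $\A$ from $\InitState$ visits $C$ finitely often'' is equivalent to $\FIN(C)$ by a case split mirroring the structure of the formula. If $C$ is never reached from $\InitState$, then trivially $C$ is visited zero (hence finitely) times, and the left disjunct $\neg(\ReachTT{\InitState}{C})$ holds. Otherwise $C$ is reached, so $\ReachTT{\InitState}{C}$ holds, and the question reduces to whether, \emph{after} first reaching $C$, the run ever returns to $C$. The second disjunct $\ReachTT{\InitState}{C}(\neg(\NonEmptyReachTT{C}{C}))$ is meant to capture precisely ``we reach $C$, and from the configuration $C$ at that point the run never again revisits $C$ via a nonempty prefix.'' Here I would lean on the fact that $\A$ is deterministic, so from the moment the run is at $C$ its future is fixed; hence ``$C$ occurs infinitely often'' is equivalent to ``from $C$ the run returns to $C$ at least once after a nonempty prefix,'' because a single return guarantees, by determinism, infinitely many returns (the suffix from $C$ repeats the same cycle). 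Conversely, if there is no nonempty return, $C$ is visited only at its first occurrence, hence finitely often. This pigeonhole-via-determinism step is what converts the purely local ``one return'' statement of $\NonEmptyReachTT{C}{C}$ into the global ``infinitely often'' statement.

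\textbf{The main obstacle.} The subtle point — and the step I expect to require the most care — is the interaction between the outer reachability $\ReachTT{\InitState}{C}$ and the nested parameter formula $\neg(\NonEmptyReachTT{C}{C})$ in the second disjunct. The semantics of $\Reach{\InitState}{C}{\false}{C}{\neg(\NonEmptyReachTT{C}{C})}$ requires, per row~1, that we reach $C$ at some position $i$ \emph{and} the suffix $w_{[i..]}$ satisfies $\neg(\NonEmptyReachTT{C}{C})$. One must check that the configuration reached at position $i$ really is $C$ (so that $\NonEmptyReachTT{C}{C}$ is evaluated from the correct configuration, matching the run state at that point), and that ``reach $C$ at the earliest time'' versus ``reach $C$ at some time'' does not cause a mismatch — but by determinism the configuration at position $i$ is forced, and any position where $\ReachTT{\InitState}{C}$ witnesses arrival at $C$ is a genuine visit to $C$, so the suffix condition is evaluated consistently. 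I would make this rigorous by noting that for the existential witness $i$, we have $\delta(\InitState, w_{[0..i)}) = C$, so $w_{[i..]} \models \neg(\NonEmptyReachTT{C}{C})$ asserts exactly that the run starting at $C$ (which is where the actual run sits at time $i$) never nonemptily returns to $C$. Finally, for the syntactic claim, since $\ReachTT{S}{T} = \Reach{S}{T}{\false}{T}{\true}$ with $\false \in \Pi_1$ and $\true \in \Sigma_1$, \cref{lem:reach-formulas-syntactic} gives $\ReachTT{S}{T} \in \Sigma_1$; then $\NonEmptyReachTT{C}{C}$, being a disjunction of formulas of the form $\sigma \wedge \X(\ReachTT{\cdot}{C})$, is also in $\Sigma_1$, its negation is in $\Pi_1$, and feeding this $\Pi_1$ formula as the $\tau$-parameter into the outer $\Reach{\InitState}{C}{\false}{C}{\cdot}$ lands in $\Sigma_2$ (a $\Sigma_1$-style reachability over a $\Pi_1 \subseteq \Sigma_2$ target suffix); the leading negated $\ReachTT{}$ disjunct is in $\Pi_1 \subseteq \Sigma_2$, and $\Sigma_2$ is closed under disjunction, so $\FIN(C) \in \Sigma_2$.
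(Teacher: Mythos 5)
Your syntactic argument for $\FIN(C) \in \Sigma_2$ is fine and matches the paper's. The semantic part, however, contains a genuine gap: the step where you claim that, by determinism, ``a single return guarantees infinitely many returns (the suffix from $C$ repeats the same cycle).'' This is false. The cascade is deterministic, but the input word is arbitrary and in general not ultimately periodic; when the run revisits configuration $C$ at a later position it reads a \emph{different} suffix of $w$, so nothing forces the cycle to repeat. Concretely, take a one-level reset cascade with states $\{C,D\}$, where letter $a$ resets every state to $C$ and letter $b$ resets every state to $D$, initial configuration $\InitState = C$, and word $w = a\,b^\omega$. The run visits $C$ exactly at positions $0$ and $1$, hence finitely often, yet from the first visit there \emph{is} a nonempty return to $C$. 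So your claimed equivalence (``$C$ occurs infinitely often iff there is at least one nonempty return from $C$'') fails, and with it your proof of the direction ``finitely often $\Rightarrow w \models \FIN(C)$'': anchored at the \emph{first} visit, your reading would declare the second disjunct false here, even though $C$ is visited finitely often and $\FIN(C)$ is in fact true.

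The repair is exactly what the paper does, and it needs no periodicity or pigeonhole argument at all. The existential quantifier in the semantics of the second disjunct (row 1 of \cref{table:reach-formulas}, instantiated with $\beta = \false$ and $\tau = \neg(\NonEmptyReachTT{C}{C})$) ranges over \emph{all} positions at which the run is at $C$, so it should be witnessed by the \emph{last} visit, not the first: $C$ is visited finitely often iff either it is never visited (left disjunct), or there exists a visit after which the run never nonemptily returns to $C$ (right disjunct, witnessed by the last visit); conversely, such a witness bounds the number of visits, and if no disjunct holds then after every visit there is another one, giving infinitely many. In the example above, $\FIN(C)$ holds with witness position $1$, i.e.\ $w_{[1..]} \models \neg(\NonEmptyReachTT{C}{C})$. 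Determinism enters only where you correctly used it in your ``main obstacle'' paragraph — to ensure the configuration at the witness position is forced, so that the nested formula $\neg(\NonEmptyReachTT{C}{C})$ is evaluated from the configuration the actual run occupies there — not to convert one return into infinitely many.
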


We are now in position to give our main result.

\begin{restatable}{theorem}{AutomatonToLTL}
\label{cl:AutomatonToLTL}
Every counter-free deterministic $\omega$-regular automaton $\D$ over alphabet $2^{\AP}$ with $n$ states (and any acceptance condition) is equivalent to an LTL formula $\varphi$ over atomic propositions $\AP$ of double-exponential temporal-nesting depth $($in $O(2^{2^n}))$ and triple-exponential length $($in $2^{2^{O(2^n)}})$. 
If $\D$ is a looping-B\"uchi, looping-coB\"uchi, weak, B\"uchi, coB\"uchi, or Muller automaton then $\varphi$ is respectively in the $\Pi_1, \Sigma_1, \Delta_1, \Pi_2, \Sigma_2$, or $\Delta_2$ syntactic fragment of LTL.
\end{restatable}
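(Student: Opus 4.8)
The plan is to assemble the theorem from the pieces developed above in three stages: (i) turn $\D$ into a reset cascade carrying an equivalent acceptance condition, (ii) express that condition as a Boolean combination of the formulas $\FIN(C)$ and of plain reachability formulas $\ReachTT{\InitState}{C}$, and (iii) read off the depth, length and syntactic class from \cref{cl:ReachFormulaSize,cl:LTLforVisitFinitelyOften}. Concretely, I would first apply \cref{cl:KrohnRhodes} to the semiautomaton of $\D$ to obtain a reset cascade $\A$ with $N\le 2^n$ levels and up to $2^n\le N$ states per level, homomorphic to $\D$ via a partial surjection $h$, and fix an initial configuration $\InitState$ with $h(\InitState)=\iota$. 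For an arbitrary $\omega$-regular condition I would first replace $\D$ by an equivalent deterministic Muller automaton on the same semiautomaton, and then transfer the condition onto $\A$ using \cref{cl:ResetCascadeRabin} (B\"uchi/coB\"uchi) or \cref{cl:ResetCascadeMuller} (Muller), obtaining a deterministic automaton with semiautomaton $\A$ and language $L(\D)$. Since $\D$ is counter-free, $L(\D)$ is LTL-definable, so everything below is over $\AP$.

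For stage (ii), \cref{cl:LTLforVisitFinitelyOften} gives that $\FIN(C)\in\Sigma_2$ holds on $w$ iff the run from $\InitState$ visits $C$ finitely often, so $\neg\FIN(C)\in\Pi_2$ expresses ``visited infinitely often''. For a B\"uchi condition with accepting configurations $\mathcal{C}$ I take $\varphi=\bigvee_{C\in\mathcal{C}}\neg\FIN(C)$, which lies in $\Pi_2$ since $\Pi_2$ is closed under disjunction; for coB\"uchi I take $\varphi=\bigwedge_{C\in\mathcal{C}}\FIN(C)\in\Sigma_2$; and for a Muller condition $\{M_1,\dots,M_k\}$ I take $\varphi=\bigvee_i\bigl(\bigwedge_{C\in M_i}\neg\FIN(C)\wedge\bigwedge_{C\notin M_i}\FIN(C)\bigr)$, a Boolean combination of $\Sigma_2$ and $\Pi_2$ formulas, hence in $\Delta_2$. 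Correctness in each case is immediate from the meaning of $\FIN$ and the definition of the respective condition in terms of $\Inf$.

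Stage (ii$'$), the delicate part, is the weak and looping cases, where the targets $\Pi_1,\Sigma_1,\Delta_1$ sit below $\Sigma_2$ and $\FIN$ is too coarse. Here I would reason with the SCC structure of the \emph{weak} automaton $\D$ itself, expressing ``the run of $\D$ reaches the state set $P$'' as $\bigvee_{C:\,h(C)\in P}\ReachTT{\InitState}{C}$, which by \cref{thm:reach-formulas-correctness} means ``the cascade run reaches some $C$ with $h(C)\in P$'' and by \cref{lem:reach-formulas-syntactic} (with $\beta=\false\in\Pi_1$, $\tau=\true\in\Sigma_1$) lies in $\Sigma_1$. For a looping-coB\"uchi (resp.\ looping-B\"uchi) automaton, acceptance is ``reach the sink'' (resp.\ ``never reach the sink''), giving $\varphi\in\Sigma_1$ (resp.\ its negation in $\Pi_1$). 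For a weak automaton I order the SCCs of $\D$ by reachability and, for each accepting SCC $X$, form $\bigl(\bigvee_{C:\,h(C)\in X}\ReachTT{\InitState}{C}\bigr)\wedge\bigl(\bigwedge_{C:\,h(C)\prec X}\neg\,\ReachTT{\InitState}{C}\bigr)$, the conjunction of a $\Sigma_1$ and a $\Pi_1$ formula (here $h(C)\prec X$ means $h(C)$ lies in an SCC strictly below $X$), so $\varphi=\bigvee_{X\text{ accepting}}(\text{settle in }X)\in\Delta_1$. Correctness uses that a run of a weak automaton eventually stays inside one SCC and accepts iff that SCC is accepting; ``reach $X$ but never anything strictly below it'' captures exactly this, and routing reachability of $\D$'s states through $h$ avoids having to argue that $\A$ itself is weak.

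Finally, for the size bounds I substitute $i=N=2^n$ into \cref{cl:ReachFormulaSize}: each $\ReachTT{S}{T}$, and hence each $\FIN(C)$, has depth $O(3^{2^n})$ (double exponential) and length $2^{2^{O(2^n)}}$ (triple exponential), where the $\FIN$ length is dominated by the nested top-level reachability subformula. Boolean combinations leave temporal depth unchanged, so $\depth{\varphi}$ is double exponential. For length, $\varphi$ is a Boolean combination of at most a triple-exponential number of $\FIN(C)$'s (the number of configurations is at most $2^{n2^n}$ and the number of Muller sets at most $2^{O(mn)}$ by \cref{cl:ResetCascadeMuller}); taking logarithms, the height-three tower coming from a single $\FIN(C)$ together with that coming from the number of conjuncts/disjuncts both remain triple exponential, so $|\varphi|$ is triple exponential as claimed. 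The main obstacle is the weak case: one must justify the sharper $\Delta_1$ placement via the SCC-settling formula, prove it equivalent to weak acceptance, and express reachability of $\D$'s states through $h$; the B\"uchi, coB\"uchi and Muller cases, and the whole size analysis, are then routine given \cref{cl:ReachFormulaSize,cl:LTLforVisitFinitelyOften,cl:ResetCascadeRabin,cl:ResetCascadeMuller}.
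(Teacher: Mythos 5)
Your proposal is correct and follows essentially the same route as the paper's own proof: Krohn--Rhodes decomposition via \cref{cl:KrohnRhodes}, transfer of the acceptance condition onto the cascade via \cref{cl:ResetCascadeRabin,cl:ResetCascadeMuller}, expression of the condition through $\FIN(C)$ and $\ReachTT{\InitState}{C}$ formulas, the same settle-in-an-accepting-SCC formula for the weak case (your ``never reach an SCC strictly below $X$'' is exactly the paper's ``never reach a state reachable from $G$ but outside $G$''), and the same size bookkeeping via \cref{cl:ReachFormulaSize}. The only cosmetic deviation is that you handle B\"uchi directly as $\bigvee_{C}\neg\FIN(C)\in\Pi_2$ where the paper complements to coB\"uchi and negates, which is equivalent.
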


\begin{proof}
    We first prove the general result, w.r.t.\ an arbitrary counter-free deterministic automaton $\D$, and then take into account $\D$'s acceptance condition, to establish the last part of the theorem.

    Consider a counter-free deterministic $\omega$-regular automaton $\D$ with some acceptance condition  and $n$ states. Recall that there is a Muller automaton $\D'$ equivalent to $\D$ over the semiautomaton of $\D$.
	By \cref{cl:KrohnRhodes,cl:ResetCascadeMuller}, $\D'$ is equivalent to a deterministic Muller automaton $\D''$ that is described by a reset cascade $\A$ with up to $m=2^n$ levels and $m$ states in each level (and thus up to $m^m$ configurations), and whose acceptance condition has up to $k\in 2^{O(m^m n)} = 2^{O(m^m)}$ acceptance sets.
	An LTL formula $\phi$ equivalent to $\D$ can be defined by formulating the acceptance condition of $\D'$ along \cref{cl:LTLforVisitFinitelyOften}.
	
	Recall that the Muller condition is a $k$-elements disjunction, where each disjunct $M$ is a conjunction of requirements to visit infinitely often every configuration from some set $G$ and  finitely often every configuration not in $G$.
	Observe that $M$ can be formulated as a disjunction over all the configurations in $\D''$ (at most $m^m$), having for each configuration $C$ the LTL formula $\FIN(C)$ or $\neg\FIN(C)$, as defined in \cref{cl:LTLforVisitFinitelyOften}, depending on whether or not $C\in G$.
	Hence, the overall formula $\phi$ is a combination of disjunctions and conjunctions of up to $ k \cdot m^m$ subformulas of the form $\FIN(C)$ or $\neg\FIN(C)$.
	Therefore, the depth of $\phi$ is the same as of $\FIN(C)$, while $|\phi| \in O(k m^m |\FIN(C)|) \leq 2^{O(m^m)} |\FIN(C)|$.
	For calculating $\depth{\FIN(C)}$ and $|\FIN(C)|$, we use \cref{cl:ReachFormulaSize} bottom up over the subformulas of $\FIN(C)$.
	
	\Subject{Depth}\
	
	$\depth{\ReachTT{\InitState}{C}} \leq 3^m ~;~ \depth{\NonEmptyReachTT{C}{C}} \leq 3^m + 1$
	
	$\depth{\ReachTT{\InitState}{C}(\neg (\NonEmptyReachTT{C}{C}))} \leq 2 \cdot 3^m + 1$
	
	$\depth{\FIN(C)} = \max(3^m, 2\cdot 3^m + 1) \in O(3^m) = O(2^{2^n})$, 
	
	implying $\depth{\phi} \in O(2^{2^n})$.
	\Subject{Length}\
	
	$|\ReachTT{\InitState}{C}| \leq (10|\Sigma|^2m)^{4^m} ~;~ |\NonEmptyReachTT{C}{C}| \leq (4|\Sigma|) \cdot (10|\Sigma|^2m)^{4^m} $
	
	$|\ReachTT{\InitState}{C}(\neg (\NonEmptyReachTT{C}{C}))| \leq (4|\Sigma|(10|\Sigma|^2m)^{4^m} +1)(10|\Sigma|^2m)^{4^m} \in (|\Sigma| m)^{2^{O(m)}}$
	
	$|\FIN(C)| \in 2 + (10|\Sigma|^2m)^{4^m}+ (|\Sigma| m)^{2^{O(m)}} \in (|\Sigma| m)^{2^{O(m)}}$.
	
\hspace{0.1em}

	Therefore, $|\phi| \in 2^{O(m^m)} \cdot (m^m) \cdot ((|\Sigma| m)^{2^{O(m)}}) = |\Sigma|^{2^{O(m)}}$.
	
	Expressing the length of $\phi$ with respect to the number $n$ of states in the automaton $\D$, and taking into account the fact that the alphabet $\Sigma$ has at most $n^n$  different letters (any additional letter must have the same behavior as another letter), we have:
	$|\phi| \in |\Sigma|^{2^{O(2^n)}} \leq
	(2^n)^{2^{O(2^n)}} =
	2^{2^{O(2^n)}}$.
	
	We now sketch the second part of the theorem connecting the syntactic hierarchy and the different acceptance conditions of $\D$. We only consider the cases in which $\D$ is either a Muller or a coB\"uchi automaton. The complete analysis is given in \cref{app:main}. 
	If $\D$ is a Muller automaton, then the overall formula $\varphi$ is in $\Delta_2$, since it is a Boolean combination of $\FIN(C)$ formulas, which by \cref{cl:LTLforVisitFinitelyOften} belong to $\Sigma_2$.
	If $\D$ is a coB\"uchi automaton, then we construct the formula $\varphi$ directly from the coB\"uchi condition $\alpha$: $\varphi$ is a conjunction of $\FIN(C)$ formulas over all configurations $C$ that are mapped to states in $\alpha$. As $\FIN(C)$ belongs to $\Sigma_2$, so does $\varphi$.
\end{proof}

Observe that by \cref{cl:AutomatonToLTL}, we get the following result, extending the result of \cite[Theorem 3.2]{Tho81} that only considers Rabin automata.

\begin{corollary}
Every counter-free deterministic $\omega$-regular automaton (with any acceptance condition) recognises an LTL-definable language.
\end{corollary}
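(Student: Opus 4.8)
The plan is to derive the corollary directly from \cref{cl:AutomatonToLTL}, since all the real work has already been done there. Recall that a language is \emph{LTL-definable} exactly when some LTL formula recognises it. \cref{cl:AutomatonToLTL} produces, for every counter-free deterministic $\omega$-regular automaton $\D$ over $2^{\AP}$ with an arbitrary acceptance condition, an LTL formula $\varphi$ with $L(\D) = L(\varphi)$. Hence $L(\D)$ is the language of an LTL formula, i.e.\ LTL-definable, which is precisely what the corollary asserts.

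First I would dispose of the alphabet mismatch. \cref{cl:AutomatonToLTL} is phrased for automata over $2^{\AP}$, whereas the corollary speaks of an arbitrary finite alphabet $\Sigma$. I would fix a set of atomic propositions $\AP$ together with an injection $\Sigma \hookrightarrow 2^{\AP}$ (for instance one proposition per letter under a one-hot encoding, or $\lceil \log_2 |\Sigma| \rceil$ propositions) and relabel the transitions of $\D$ accordingly. Neither counter-freeness nor the acceptance condition is affected, as both depend only on the transition structure and the state set, not on the names of the letters; so the relabelled automaton satisfies the hypotheses of \cref{cl:AutomatonToLTL}, and the resulting formula over $\AP$ recognises (the encoding of) $L(\D)$.

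The only conceptual point worth flagging is the scope ``any acceptance condition'', which is what lets the corollary extend \cite[Theorem 3.2]{Tho81} from Rabin automata to arbitrary ones. This is already absorbed into \cref{cl:AutomatonToLTL}: every deterministic $\omega$-regular automaton is equivalent to a deterministic Muller automaton over the \emph{same} semiautomaton, and counter-freeness is a property of the semiautomaton, hence preserved under this reformulation. Thus there is no genuine obstacle at the level of the corollary — the entire difficulty sits in the construction of $\varphi$ inside \cref{cl:AutomatonToLTL}, and the corollary merely reads off its qualitative consequence.
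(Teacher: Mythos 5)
Your proposal is correct and follows essentially the same route as the paper: the paper's proof likewise just observes that every deterministic $\omega$-regular automaton is equivalent to a deterministic Muller automaton over the same semiautomaton (so counter-freeness is preserved) and then invokes \cref{cl:AutomatonToLTL} directly. Your extra remark about encoding an arbitrary alphabet into $2^{\AP}$ is a harmless technicality the paper leaves implicit, not a different argument.
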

\begin{proof}
Recall that every deterministic $\omega$-regular automaton is equivalent to a deterministic Muller automaton over the same semiautomaton (see, e.g., \cite{Bok18}).
The claim is then a direct consequence of \cref{cl:AutomatonToLTL}.
\end{proof}

\begin{remark}\label{rem:FiniteWords}
\cref{cl:AutomatonToLTL} can be adapted to the finite-word setting. While on infinite words, the ne$\X$t operator is self-dual, i.e., $\neg \X \psi$ is equivalent to $\X \neg \psi$, over finite words, this equivalence does not hold on words of length 1. Thus $\X$ gains a dual \emph{weak next}, defined as $\tilde{\X} \psi \coloneqq \neg \X \neg \psi$. In the finite word case, syntactic cosafety (safety) formulas are constructed from $\true$, $\false$, $a$, $\neg a$, $\vee$, $\wedge$, and the temporal operators $\U$ and $\X$ ($\R$ and $\tilde{\X}$). Observe that $\X$ and $\tilde{\X}$ differ only on words of length 1, and thus the only required change in our translation scheme is to replace some $\X$s with $\tilde{\X}$s in the reachability formula 4.
For finite words a translation of a counter-free DFA to an LTL formula with only a double exponential size blow-up is known \cite{Wil99}; however, unlike our translation, it does not guarantee syntactic safety (cosafety) formulas for safety (cosafety) languages. 
\end{remark}

Lastly, we provide a corollary on looping automata, using  \Cref{cl:AutomatonToLTL} and the following known result.

\begin{proposition}[Rephrased Theorem 13 from \cite{DBLP:journals/ipl/MareticDB14}]
Let $\D$ be a deterministic looping-B\"uchi automaton with $n$ states that recognises an LTL-definable language. Then there exists an equivalent counter-free deterministic looping-B\"uchi automaton $\D'$ with at most $n$ states.
\end{proposition}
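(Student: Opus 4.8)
The plan is to show that the \emph{right minimisation} of $\D$ at the level of infinite-word futures is already counter-free, using the pumping phenomenon captured by the Extended Wolper theorem (\cref{cl:Nesting}). First I would record that a deterministic looping-B\"uchi automaton recognises a \emph{safety} language: writing $\bot$ for its unique rejecting sink, a run is accepting exactly when it never enters $\bot$, so $w \in L(\D)$ iff the run of $\D$ on $w$ never reaches $\bot$. The case $L(\D)=\Sigma^\omega$ is trivial (take the one-state automaton), so assume $\bot$ is reachable. For each state $q$ let $L_q \subseteq \Sigma^\omega$ be its set of \emph{safe continuations}, i.e.\ the infinite words whose run from $q$ never hits $\bot$, and define $p \equiv q$ iff $L_p = L_q$. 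Since $L_{\delta(q,\sigma)} = \{z : \sigma z \in L_q\}$, the relation $\equiv$ is right-invariant, so the quotient semiautomaton $\D'$ is well defined and has at most $n$ states; I would equip it with the initial state $[\iota]$ and the looping-B\"uchi shape whose single rejecting sink is the class of all states with empty future (which is absorbing, since an empty future stays empty).

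Next I would check that $\D'$ is equivalent to $\D$. The point is that a state with empty future is \emph{doomed}: every infinite continuation from it eventually reaches $\bot$, so the actual run on any suffix does too. Hence the run of $\D$ on $w$ reaches a state of empty future if and only if it reaches $\bot$, and therefore $L(\D') = \{w : \text{the run of } \D' \text{ avoids its sink}\} = \{w : \text{the run of } \D \text{ avoids } \bot\} = L(\D)$. By construction distinct states of $\D'$ have distinct safe-continuation sets, which is the property I will exploit. So the whole theorem reduces to proving that $\D'$ is counter-free.

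For counter-freeness I would argue by contraposition using \cref{cl:Nesting}. Suppose $\D'$ has a counter: a state $q$, a word $u \in \Sigma^+$, and an $m \in \Nat\setminus\{0\}$ with $\delta(q,u^m)=q$ but $\delta(q,u)\neq q$. Then $u$ permutes $q$ along a cycle $q=q_0, q_1=\delta(q,u), \dots, q_{p-1}$ of length $p \geq 2$ consisting of \emph{pairwise distinct} states, none of which is the sink (each has the safe continuation $u^\omega$, so $L_{q_i}\neq\emptyset$). As the $q_i$ are distinct states of $\D'$, some pair satisfies $L_{q_i}\neq L_{q_j}$, so there is an infinite word $z$ with, say, $z \in L_{q_i}\setminus L_{q_j}$. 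Choosing a prefix $x$ with $\delta([\iota],x)=q$, the membership of $x\,u^k\,z$ in $L(\D)$ depends only on $k \bmod p$ and is non-constant (it holds for $k\equiv i$ and fails for $k\equiv j$). Picking $k,k' > \depth{\phi}$ with $k\equiv i$ and $k'\equiv j \pmod p$ then contradicts \cref{cl:Nesting}, which forces any candidate formula $\phi$ to agree on $x\,u^k\,z$ and $x\,u^{k'}\,z$. Hence $L(\D)$ is not LTL-definable, against the hypothesis, so $\D'$ is counter-free. I expect the main obstacle to be the careful handling of the $\omega$-semantics in the quotient — distinguishing genuine dead-ends ($\bot$) from merely doomed states and confirming that the $L_q$-quotient both preserves the language and collapses all empty-future states into one sink — together with the lift of a finite counter to a single infinite witness $z$ that separates two cycle states; once these are in place, the state bound and the looping shape are immediate from the quotient construction and the Wolper step is routine.
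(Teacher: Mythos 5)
The paper never proves this proposition itself --- it imports it wholesale from Mareti\'c, Dashti and Basin \cite{DBLP:journals/ipl/MareticDB14}, so there is no in-paper argument to compare against. Your proof is thus a self-contained reconstruction, and its core is sound: you quotient $\D$ by equality of safe-continuation languages $L_q$ (a Myhill--Nerode-style right congruence, whose right-invariance you correctly verify via $L_{\delta(q,\sigma)}=\{z : \sigma z \in L_q\}$), you correctly observe that the empty-future class is absorbing and that collapsing $\bot$ together with all doomed states leaves the language unchanged, and your counter-freeness step is exactly right: a counter yields a cycle $q_0,\dots,q_{p-1}$ of $p\geq 2$ \emph{distinct} quotient states, distinctness gives a separating suffix $z$, and then membership of $x\,u^k\,z$ in $L(\D)$ depends non-constantly on $k \bmod p$, contradicting \cref{cl:Nesting}. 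A pleasant feature of your route is that it reuses the paper's own Extended Wolper theorem rather than invoking aperiodicity machinery.

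There is, however, one gap to close. Counter-freeness as defined in the paper quantifies over \emph{all} states of the semiautomaton, including unreachable ones, but your contradiction requires a prefix $x$ with $\delta'([\iota],x)=q$, which exists only when the counter sits in the reachable part. If $\D$ contains an unreachable cycle whose states have pairwise distinct safe-continuation sets, your quotient $\D'$ retains that cycle as a counter, yet $L(\D)$ is still LTL-definable, so no contradiction is available; your argument as written only shows that the \emph{reachable part} of $\D'$ is counter-free. The fix is easy: first restrict to reachable states (this keeps at most $n$ states, the language, and the looping shape; if the rejecting sink thereby disappears you are in your trivial $L(\D)=\Sigma^\omega$ case --- where, incidentally, the paper's definition of looping automata forces a two-state witness, an accepting loop state plus an unreachable sink, rather than your one-state automaton). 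Two smaller points you leave implicit but should state: no cycle state can be the absorbing sink (else $\delta'(q_j,u)=q_j$, contradicting $p \geq 2$ and pairwise distinctness), and this same absorption argument is what guarantees that the runs over the $x$ and $u^k$ portions of $x\,u^k\,z$ avoid the sink, so that membership really is decided solely by whether $z \in L_{q_{k \bmod p}}$.
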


\begin{corollary}\label{cor:Looping}
Every deterministic looping-B\"uchi (looping-coB\"uchi) automaton with $n$ states that recognises an LTL-definable language is equivalent to an LTL formula $\varphi\in \Pi_1$ ($\Sigma_1$) of temporal nesting depth in $O(2^{2^n})$ and length in $2^{2^{O(2^n)}}.$
\end{corollary}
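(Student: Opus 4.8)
The plan is to combine the stated cited proposition on counter-free looping automata with our main result, \Cref{cl:AutomatonToLTL}. The corollary is really a two-step argument: first reduce to the counter-free case, then apply the acceptance-condition-preserving translation we already have. So the first thing I would do is invoke the rephrased Theorem 13 of \cite{DBLP:journals/ipl/MareticDB14}: given a deterministic looping-B\"uchi automaton $\D$ with $n$ states recognising an LTL-definable language, there is an equivalent \emph{counter-free} deterministic looping-B\"uchi automaton $\D'$ with at most $n$ states. The point of this step is that \Cref{cl:AutomatonToLTL} only applies to counter-free automata, whereas the hypothesis of the corollary places no counter-freeness assumption on $\D$; the cited proposition bridges this gap while crucially not increasing the number of states.

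Next I would apply \Cref{cl:AutomatonToLTL} to $\D'$. Since $\D'$ is a counter-free deterministic looping-B\"uchi automaton with at most $n$ states, the theorem yields an equivalent LTL formula $\varphi$ with $\depth{\varphi} \in O(2^{2^n})$ and $|\varphi| \in 2^{2^{O(2^n)}}$. Moreover, the syntactic-class-preservation clause of \Cref{cl:AutomatonToLTL} states that a looping-B\"uchi automaton is translated into a $\Pi_1$ formula, giving exactly the claimed $\varphi \in \Pi_1$. Since $\D$ and $\D'$ are equivalent and $\D'$ and $\varphi$ are equivalent, $\varphi$ is equivalent to $\D$, completing the looping-B\"uchi case.

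For the looping-coB\"uchi case I would argue by duality. A deterministic looping-coB\"uchi automaton recognises the complement of the language of the looping-B\"uchi automaton on the same semiautomaton (swapping accepting/rejecting roles of the single sink). Equivalently, I can directly cite the corresponding looping-coB\"uchi branch: the cited proposition dualises to give a counter-free equivalent with at most $n$ states, and \Cref{cl:AutomatonToLTL} sends a looping-coB\"uchi automaton to a $\Sigma_1$ formula with the same depth and length bounds. Either route produces $\varphi \in \Sigma_1$ with the stated complexity. The parenthetical phrasing of the corollary makes this the intended reading.

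The main obstacle is essentially bookkeeping rather than a genuine mathematical difficulty: the work has already been done in \Cref{cl:AutomatonToLTL} and in the cited proposition. The one point that requires a little care is ensuring that the looping-coB\"uchi statement really follows, either from an explicit looping-coB\"uchi version of the cited proposition or from a clean duality argument that preserves both counter-freeness and the state count; if the cited proposition is stated only for looping-B\"uchi, I would need to note that complementing a deterministic looping automaton on its semiautomaton preserves counter-freeness and the number of states, so the dual statement holds. Beyond this, one should check that the depth and length bounds are genuinely governed by the $\le n$ states of $\D'$ (not by any intermediate blow-up), which is immediate because the cited proposition does not increase the state count and \Cref{cl:AutomatonToLTL}'s bounds are monotone in the number of states.
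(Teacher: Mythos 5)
Your proposal is correct and follows exactly the paper's intended route: the paper states \cref{cor:Looping} as an immediate consequence of the rephrased Theorem 13 of \cite{DBLP:journals/ipl/MareticDB14} (reduce to a counter-free looping automaton with at most $n$ states) followed by \cref{cl:AutomatonToLTL} (acceptance-condition-preserving translation into $\Pi_1$ resp.\ $\Sigma_1$ with the stated depth and length bounds). Your extra care on the looping-coB\"uchi case---dualising via the same semiautomaton, which preserves counter-freeness and the state count---is a detail the paper leaves implicit, and your treatment of it is sound.
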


This is an elementary upper bound for two constructions for which either the upper bound was unknown or non-elementary: the liveness-safety decomposition of LTL \cite{DBLP:journals/ipl/MareticDB14} and the translation of semantic safety LTL to syntactic safety LTL.
\section{Conclusions}

We have studied the size trade-offs between LTL and automata. Over a unary alphabet, the situation is straightforward and we provided tight complexity bounds. The general case of infinite words over an arbitrary alphabet is more complex. We gave to our knowledge the first elementary complexity bound on the translation of counter-free deterministic $\omega$-regular automata into LTL formulas.

Every $\omega$-regular automaton recognising an LTL-definable language can be translated to a counter-free deterministic automaton \cite[Theorem 3.2]{Tho81}. Yet, we are unaware of a bound on the size blow-up involved in such a translation. Once established, it can be combined with our translation to get a general bound on the translation of automata to LTL. It will also provide a (currently unknown\footnote{In consultation with the author of \cite{Mar03}, we have confirmed that while the lower bound provided in that paper holds, the stated upper bound is erroneous.}) elementary upper bound on the translation of LTL with both future and past operators to LTL with only future operators (which is the version of LTL that we have considered), as (both version of) LTL can be translated to nondeterministic B\"uchi automata with a single exponential size blow-up \cite[Theorem 2.1]{VW86}.

While going from non-elementary to double-exponential depth and triple-ex\-ponential length is an improvement, these upper bounds might not be tight---there is currently no known non-linear lower bound! Closing this gap is a challenging open problem, which might require new lower bound techniques for alternating automata, as LTL formulas are an inherently alternating model.

\paragraph{Acknowledgements.}
We thank Moshe Vardi and Orna Kupferman for suggesting studying the succinctness gap between semantic and syntactic safe formulas, and Miko\l{}aj Boja\'nczyk for answering our questions on algebraic automata theory.

\bibliographystyle{splncs04}
\bibliography{bibliography.bib}


\vfill

{\small\medskip\noindent{\bf Open Access} This chapter is licensed under the terms of the Creative Commons\break Attribution 4.0 International License (\url{http://creativecommons.org/licenses/by/4.0/}), which permits use, sharing, adaptation, distribution and reproduction in any medium or format, as long as you give appropriate credit to the original author(s) and the source, provide a link to the Creative Commons license and indicate if changes were made.}

{\small \spaceskip .28em plus .1em minus .1em The images or other third party material in this chapter are included in the chapter's Creative Commons license, unless indicated otherwise in a credit line to the material.~If material is not included in the chapter's Creative Commons license and your intended\break use is not permitted by statutory regulation or exceeds the permitted use, you will need to obtain permission directly from the copyright holder.}

\medskip\noindent\includegraphics{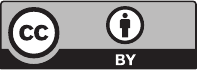}

\newpage
\appendix
\section{Omitted Proofs}
\label{appendix:proofs}

\subsection{Proofs from~\cref{sec:Unary}}
\label{appendix:unary}
\extendedWolper*

\begin{proof}
Consider an LTL formula $\phi$ and words $w_i = (u v^i t)$ and $w_j =(u v^j t)$, such that $i,j>\depth{\phi}$.
	We prove the claim by induction on the structure of $\phi$. 
	
	Base case: $\phi$ is an atomic proposition or a Boolean constant. Indeed, $\depth{\phi}=0$ and we have that $w_i \models \phi$ iff $w_j \models \phi$, because the first letter of these two words, which is the first letter of $u$ if $u$ is not empty and the first letter of $v$ otherwise, is the same.
	
	Induction step: We assume that the claim holds for all strict subformulas of $\phi$. Let $\psi_1$ and $\psi_2$ be strict subformulas of $\phi$. We show that the claim holds for:
	\begin{itemize}
		\item $\phi = \neg \psi_1$: Since $\depth{\phi}=\depth{\psi_1}$, it follows that $i,j>\depth{\psi_1}$, and therefore by the induction assumption $w_i\models \psi_1$ iff $w_j \models \psi_1$, implying that $w_i \models \phi$ iff $w_j \models \phi$.
		
		\item $\phi = \psi_1 \land \psi_2$:  Since $\depth{\phi}=\max(\depth{\psi_1}, \depth{\psi_2})$, it follows that $i,j>\depth{\psi_1}$ and $i,j>\depth{\psi_2}$. Therefore by the induction assumption ($w_i \models \psi_1$ iff $w_j \models \psi_1$) and ($w_i \models \psi_2$ iff $w_j \models \psi_2$). Hence, $w_i \models \psi_1 \land \psi_2$ iff $w_j \models \psi_1 \land \psi_2$.
		
		\item $\phi = \X \psi_1$: Recall that a word $w$ satisfies $\phi$ iff $w^1$ satisfies $\psi_1$.
		Observe that $w_i^1=u'v^{i-1}t$ and $w_j^1=u'v^{j-1}t$, where $u'=(uv)^1$. Since $\depth{\psi_1}=\depth{\phi}-1$, it follows that $i-1,j-1>\depth{\psi_1}$. Hence, by the induction assumption $w_i^1\models\psi_1$ iff $w_j^1\models\psi_1$, and therefore $w_i\models\phi$ iff $w_j\models\phi$.

		\item $\phi = \psi_1 \U \psi_2$: We will show that if $w_i$ satisfies $\phi$ then so does $w_j$.
		
		If $w_i$ satisfies $\phi$ then there is a position $p$ of $w_i$, such that $w_i^p\models\psi_2$ and for every $k<p$, $w_i^k\models\psi_1$.
		Let $o$  be the position of $w_i$ that appears at the beginning of the $v$-block that is $\depth{\phi}$ blocks of $v$ before the $t$ part of $w_i$, namely $o=|uv^{i-\depth{\phi}}|$.	(See Figure~\ref{fig:w_i}.)
		\begin{figure}
			\centering\input{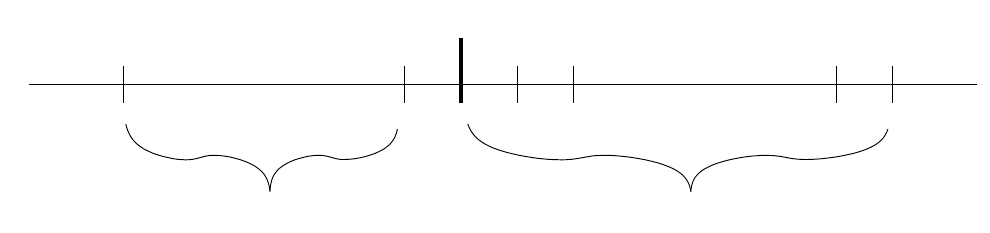_t} \caption{The structure of the word $w_i$ from the proof of \cref{cl:Nesting}.}\label{fig:w_i}
		\end{figure}

		We split the proof into disjoint cases, depending on the location of $p$ within $w_i$.
		\begin{itemize}
			\item $p<|u|$: Let $u'$ be the infix of $w_j$ from $p$ to the end of $u$, namely $u'=w_i[p..|u|-1]$. Then $w_i^p=u'v^it$ and $w_j^p=u'v^jt$. Since $i,j>\depth{\phi}>\depth{\psi_2}$, by the induction assumption $w_i^p\satisfies \psi_2$ iff $w_j^p\satisfies \psi_2$ and therefore $w_j^p\satisfies \psi_2$. Likewise, since $i,j>\depth{\phi}>\depth{\psi_1}$, by the induction assumption for every position $m<p$, $w_i^m\satisfies \psi_1$ iff $w_j^m\satisfies \psi_1$ and therefore $w_j^m\satisfies \psi_1$.

			\item $p\in[|u|..o-1]$:	Let $p'$ be the position in $w_j$ that appears in the first $v$-block after $u$ and that is located within that $v$-block like $p$ is located within its $v$-block. That is, $p' = |u| + ((p-|u|) \mod |v|)$. Let $u'$ be the remaining suffix in the $v$-block of $p$ and $p'$, that is $u'=w_j[p'+1..|u|+|v|]$. 
			 Let $h$ be the number of $v$-blocks that appear after $p$ and before the $t$ part of $w_i$, that is $h=\depth{\phi} + \lfloor (o - p) / |v| \rfloor$.
			 Then $w_i^{p}=u'v^{h}t$ and $w_j^{p'}=u'v^{j-1}t$. Since $h\geq\depth{\phi}>\depth{\psi_2}$ and $j-1\geq\depth{\phi}>\depth{\psi_2}$, we have by the induction assumption that $w_j^{p'}\satisfies \psi_2$ iff $w_i^{p}\satisfies \psi_2$, and therefore $w_j^{p'}\satisfies \psi_2$.
			
			Now, for every position $m<p'$, let $u'$ be the infix of $w_j$ from $m$ to the end of the first $v$-block of $w_j$, that is $u'=w_j[m..|u|+|v|]$. Then $w_i^m=u' v^{i-1} t$ and $w_j^m=u' v^{j-1} t$. Since $i-1>\depth{\psi_1}$ and $j-1>\depth{\psi_1}$, by the induction assumption $w_i^m\satisfies \psi_1$ iff $w_j^m\satisfies \psi_1$ and therefore $w_j^m\satisfies \psi_1$.
			
			\item $p\geq o$: Let $p'$ and $o'$ be the positions in $w_j$ that are at the same distance from the $t$ part of $w_j$ as $p$ and $o$ are from the $t$ part $w_i$, namely $o'=|uv^{j-\depth{\phi}}|$ and $p'=o' + (p-o)$. Observe that $w_i^o=w_j^{o'}$ and $w_j^{p'} = w_i^{p}$, implying that $w_j^{p'}\satisfies \psi_2$. 
			
			Further, for every position $m\in[o-|v|..p]$ of $w_i$, let $m'$ be the corresponding position in $w_j$, namely $m'=o' + (m-o)$. Then $w_j^{m'} = w_i^{m}$, and accordingly $w_j^{m'}\satisfies \psi_1$.
			
			Now, for every position $m'\in[|u|..o-|v|-1]$, we have by the induction assumption that $w_j^{m'}\satisfies \psi_1$ iff $w_j^{m'+|v|}\satisfies \psi_1$ (as both words have the same prefix until the end of the first $v$ block, followed by at least $\depth{\psi_1}+1$ blocks of $v$ and then $t$), implying that $w_j^{m'}\satisfies \psi_1$.
			
			Finally, for every position $m\in[0..|u|-1]$, we have by the induction assumption that $w_i^m\satisfies \psi_1$ iff $w_j^{m}\satisfies \psi_1$, implying that $w_j^{m}\satisfies \psi_1$.
	\end{itemize}
\end{itemize}
\end{proof}

\FiniteCofinite*
\begin{proof}
	Let $\# = \{p\}$ be the shorthand for a new atomic proposition $p$.
	Given a unary LTL formula recognising  $L \subseteq \{a\}^+$, it can be turned into a general LTL formula of linear length that recognises $L'=\{v\#^\omega\mid v\in L\}$ over the alphabet $\Sigma' = 2^{\{p\}}$. Then, the statement is a direct consequence of~\cref{cl:Nesting}.
\end{proof}

\LTLforShortLang*
\begin{proof}
	The very weak deterministic automaton for $L$ consists of $n+2$ states $\{0,\dots, n+1\}$, with an $a$-transition from state $i$ to state $i+1$ and a self loop at state $n+1$. The state $i$ is accepting whenever $a^i\in L$. 
\end{proof}

\UnaryBlowup*
\begin{proof}
Deterministic automata: For the upper bound, consider a DFA $\A$ recognising an LTL definable language.
By a pumping argument, if $w\in L(\A)$ for some $w$ longer than $n$, then $L(\A)$ is infinite. Considering the dual automaton $\A'$ of $\A$, recognising the complement language, by the pumping argument if $w\not\in L(\A)$ for some $w$ longer than $n$, then the complement of $L(\A)$ is infinite. Hence, by \cref{cl:finite-cofinite}, $L(\A)$ agrees on all words of length more than $n$, and by \cref{cl:LTLforShortLang} there is an LTL formula for it of length in $O(n)$.
As for the lower bound, since there is a DFA of size $n$ recognising ${a^k}$, it directly follows from \cref{cl:Nesting}.

Nondeterministic automata: Directly follows from \cref{cl:UnaryAltUpper,cl:AlternatingCounter}.

Alternating automata: Directly follows from \cref{cl:NFAtoLTL,cl:NFAforSquare}.
\end{proof}
\UnaryAltUpper*

\begin{proof}
First recall that the run of an alternating automaton is a tree, of which all paths are accepting if and only if the run itself is accepting. These runs can be pumped in the same way as runs of nondeterministic automata, except that the run to be pumped must be of length over $2^n$ to guarantee that the \textit{set of states} in a cross-section of the run are repeated.

Then, by a pumping argument, if $w\in L(\A)$ for some $w$ longer than $2^n$, then $L(\A)$ is infinite: indeed, let $\A'$ be an NFA equivalent to $\A$ of size at most $2^n$; if $\A'$ accepts a word longer than $2^n$, then its run sees some state more than once and can be pumped to build infinitely many accepting runs. Dually, since AFA are easy to complement, if $w'\notin L(\A)$ for some $w'$ longer than $2^n$, then $\Sigma^*\setminus L(\A)$ is infinite. The existence of both $w\in L(\A)$ and $w'\notin L(\A)$ longer than $2^n$ therefore contradicts Lemma \ref{cl:finite-cofinite}. We conclude that $L(A)$ agrees on all words of length over $2^n$. Thus, by \cref{cl:LTLforShortLang} there is an LTL formula for $L(\A)$ of length in $O(2^n)$.
\end{proof}

\AlternatingCounter*
\begin{proof}
	The idea of the construction is that it represents an $n$-bit up-counter, having two states for each bit, one corresponding to $1$, one to $0$ (see \cref{fig:AltCounter}).
	
	Given a way to resolve the nondeterministic choices, the resulting set of ``active'' states of the automaton represents a configuration of the counter: The nondeterminism in each bit-state chooses whether to change the bit's value (going left in \cref{fig:AltCounter}), in which case the universality ensures that all lower bits are set to $0$, or to preserve the bit's value (going right in \cref{fig:AltCounter}), in which case the universality ensures that at least one lower bit is set to $1$.
	
	The automaton thus preserves the invariant that a correct update (for example from $011$ to $100$) ensures that the number of active states is constant, in particular, if all updates are correct, then exactly one state per bit is active at a time; an incorrect update on the other hand increases the number of active states. The set of accepting states corresponds to the bit-configuration for $2^{n-1}$, where a transition to a state $q_{i\ZB}$, for every $i<n$, can be changed to a move to $q_{acc}$, provided that it is on the last letter of the input word. (There is no transition out of $q_{acc}$). The only way to build an accepting run is to correctly update the counter at each step; an incorrect update ensures that both states of some bit are set from thereon, of which one must be rejecting.
	
	As for the automaton size, which is the number of subformulas in the transition function, observe that it is linear in $n$, since there are $2n$ states and the total number of subformulas in the transition function is as follows: The transition functions of  $q_{1\ZB}$ and $q_{1\OB}$ have together four subformulas, and the transition function of  every other state $q_{(i,\cdot)}$ adds up to $4$ subformulas: (i) the topmost disjunction of whether to change the bit's value or not; (ii)\&(iii) the universality involved in each of these two options; and (iv) within each of the two universality subformulas -- a disjunction or conjunction between  $q_{(i-1,\cdot)}$ and $(q_{(i-1,\cdot)},\ldots,q_{(i-1,\cdot)})$, where the latter subformula (within the parenthesis) is already used in the transition function of $q_{(i-1,\cdot)}$ and these two formulas (one of conjunction and one of disjunction) is used by both $q_{i\ZB}$ and $q_{i\OB}$ (except for $q_{n\OB} = q_{acc}$, which has no outgoing transitions).
\end{proof}

\NFAtoLTL*

\begin{proof}
If $L(\A)$ is finite, then by a pumping argument, $\A$ only accepts words up to length $n$, and by \cref{cl:LTLforShortLang} we are done.
We consider co-finite $L(\A)$. 

We will argue using 2-way deterministic automata, which are deterministic automata that process words of the form $\vdash\!\!w\!\!\dashv$, where $\vdash$ and $\dashv$ are start- and end-of-word markers respectively, and where transitions specify whether to read the letter to the right or to the left of the current position. They accept by reaching an end state, and reject by reaching a rejecting state or by failing to terminate. For a more formal definition, see~\cite{GMP07}. We will use the facts that a unary NFA can be turned into a 2-way DFA of size $O(n^2)$~\cite{Chr86}, that a 2-way DFA of size $m$ can be complemented into one of size $4m$ \cite{GMP07}, and that the smallest 2-way DFA recognising $\{a^k\}$ is of length $k+2$~\cite{Bir96}.

The NFA $\A$ can be turned into a 2-way DFA $\D$ with $O(n^2)$ states recognising the same language. From $\D$, we can obtain a 2-way DFA $\D'$ that recognises  $a^*\setminus \{a^k\}$, where $a^k$ is the longest word not in $L(\A)$, as follows: we keep the states and transitions involved in the run of $\D$ on $\vdash\!\!a^k\!\!\dashv$, which is rejecting and must read $\dashv$ (since, by the co-finiteness of $L(\A)$, there are accepted words longer than $a^k$). Notice that states in $\D'$ may have only some of the transitions they had in $\D$ -- only those that are involved in the run on $a^k$.  We then add an $a$-transition and a $\dashv$-transition from all states without such transitions, and these all lead to the (accepting) end state. 

Notice that $\D'$ is still deterministic and not larger than $\D$. It still rejects $a^k$, on which it has the same run as $\D$, but accepts all other words, which are either shorter than $a^k$ and therefore accepted via one of the new $\dashv$-transitions, or are longer than $a^k$ and accepted when they read the $k+1^{th}$ letter.

We can then complement $\D'$ into a 2-way DFA $\D''$ of size linear in the size of $D'$, namely in $O(n^2)$, that recognises $\{a^k\}$. However, since $\D''$ must be of size at least $k+2$, 
we get that $k+2$ is at most in $O(n^2)$, meaning that the longest word rejected by $\A$, which is of length $k$, is of length in $O(n^2)$.

Then, there is an equivalent very weak automaton of the size of the longest word not in $L(\A)$, from \cref{cl:AlternatingCounter}. 
\end{proof}

\subsection{Proofs from~\cref{sec:CascadedAutomata}}

\RabinCascade*

\begin{proof}
	We provide the proof for Rabin automata. The proofs for B\"uchi and coB\"uchi automata are special cases of the Rabin case.
	
	Let $\D=(\Sigma,Q,\iota,\delta,\alpha)$, and $h$ be the mapping via which $(\Sigma, Q, \delta)$ is homomorphic to $\A$.
	We shall define a configuration $\iota'$ of $\A$, and a set $\alpha'$ of pairs of sets of configurations of $\A$ that will provide the initial state and acceptance condition of $\D'$, making it equivalent to $\D$.
	
	For the initial state of $\D'$, one can choose any configuration $\iota'\in h^{-1}(\iota)$: A run of $\A$ starting in $\iota'$ corresponds via $h$ to a run of $\D$ starting in $h(\iota')=\iota$.
	
	As for the acceptance condition, consider the run $r'$ of $\D'$ on a word $w$, starting in $\iota'$, and let $r=h(r')$ be the corresponding run of $\D$ on $w$. (With $h(r')$, we mean the sequence of states of $\D$, obtained by mapping  with $h$ each configuration of $r'$.)
	Then, $r'$ should be accepting iff $r$ is.
	
	Recall that $r$ is accepting iff $\Inf(r)\cap G \neq \emptyset$ and $\Inf(r)\cap B = \emptyset$ for some pair $(G,B)\in\alpha$.
	Hence, $r'$ should be accepting ``according to $(G,B)$'' iff $h(\Inf(r'))\cap G \neq \emptyset$ and $h(\Inf(r'))\cap B = \emptyset$.
	
	Thus, $r'$ should visit infinitely often some configuration in $G'=h^{-1}(G)$ and finitely often every configuration in $B'=h^{-1}(B)$.
	Hence, there are $k$ acceptance pairs $(G',B')$ in $\alpha'$, each corresponding to an acceptance pair $(G,B)$ in $\D$.
\end{proof}

\MullerCascade*

\begin{proof}
	Let $\D=(\Sigma,Q,\iota,\delta,\alpha)$, and $h$ be the mapping via which $(\Sigma, Q, \delta)$ is homomorphic to $\A$.
	We shall define a configuration $\iota'$ of $\A$, and a set $\alpha'$ of sets of configurations of $\A$ that will provide the initial state and acceptance condition of $\D'$, making it equivalent to $\D$.
	
	For the initial state of $\D'$, one can choose any configuration $\iota'\in h^{-1}(\iota)$: A run of $\A$ starting in $\iota'$ corresponds via $h$ to a run in $\D$ starting in $h(\iota')=\iota$.
	
	As for the acceptance condition, consider the run $r'$ of $\D'$ on $w$, starting in $\iota'$, and let $r=h(r')$ be the run of $\D$ on $w$ (that is defined by mapping each configuration of $r'$ to a state of $\D$ via $h$).
	Then, $r'$ should be accepting iff $r$ is.
	
	Recall that $r$ is accepting iff $\Inf(r)=M$, for some $M=\{q_1, \ldots q_l\}\in\alpha$. Hence, $r'$ should be accepting ``according to $M$'' iff $h(\Inf(r'))=M$.
	Thus, $r'$ should visit finitely often every configuration in $h^{-1}(Q\setminus M)$, and for \emph{every} $i\in[1..l]$, visit \emph{some} configurations in $G_i=h^{-1}(q_i)$ infinitely often.
	
	Since we should consider every choice of configurations in $G_i=h^{-1}(q_i)$, where $|G_i|\leq m$, there are up to $2^m$ choices for $G_i$, and therefore up to $(2^m)^n$ choices for $M$, each providing a Muller set $G$ of configurations to be visited infinitely often.
	
	As there are up to $2^n$ sets in $\alpha$, we end up with up to $2^n (2^m)^n \in 2^{O(mn)}$ sets in $\alpha'$.
\end{proof}

\subsection{Proofs from~\cref{sec:ReachbilityFormulas}}
\label{app:correctness}

\thmCorrectness*

\begin{proof}
Observe first that there is no circularity in the definitions of the five reachability formulas, even though they are defined by each other: Formula 2 is defined on top of formula 1, which is defined on top of formulas 3 and 5, while formulas 3, 4, and 5 are defined with respect to reachability formulas over configurations of a lower level.

We prove the statement by induction on the level $m$ of the configurations $S$, $B$, $T$ in the reachability formulas. We split the proof to five cases corresponding to the five reachability formulas in \Cref{table:reach-formulas}. In order to clarify which equivalence of the induction hypothesis is used we denote by (I.H.1) the equivalence in \Cref{table:reach-formulas} for reachability formula 1, (I.H.2) the equivalence for reachability formula 2, and so on.

\bigskip
\noindent \textbf{Reachability formula 1}

\noindent ($m = 0$): There is exactly one configuration of level $0$ and it is the empty configuration. Thus $S = T = B = \tuple{}$. We then derive:
\[\begin{array}{cl}
     & w \models \Reach{\tuple{}}{\tuple{}}{\beta}{\tuple{}}{\tau}  \\[0.9em]
\iff & w \models (\neg \beta) \U \tau \\[0.6em]
\iff & \exists i \geq 0. ~ w_{[i..]} \models \tau \land \forall j \in [0..i). ~ w_{[j..]} \not \models \beta \\[0.6em]
\iff & \exists i \geq 0. ~ \delta(\tuple{}, w_{[0..i)}) = \tuple{} \land w_{[i..]} \models \tau \\
     & \qquad \qquad \land ~ \forall j \in [0..i). ~ \delta(\tuple{}, w_{[0..j)}) \neq \tuple{} \lor w_{[j..]} \not \models \beta
\end{array}\]

\noindent ($m \rightarrow m + 1$): Let $\tuple{S,s}, \tuple{T,t}, \tuple{B,b} \in Q_1 \times \cdots \times Q_{m+1}$ be arbitrary configurations of level $m+1$. Thus we can use the equalities from \Cref{table:reach-formulas} for all configurations of $Q_1 \times \cdots \times Q_m$ as induction hypotheses. We need to show that:
\[\begin{array}{cl}
     & w \models \Reach{\tuple{S,s}}{\tuple{B,b}}{\beta}{\tuple{T,t}}{\tau} \\[0.9em]
\iff & w \models \StayReach{\tuple{S,s}}{\tuple{B,b}}{\beta}{\tuple{T,t}}{\tau} \lor \LeaveReach{\tuple{S,s}}{\tuple{B,b}}{\beta}{\tuple{T,t}}{\tau} \\[0.9em]
\iff & \exists i \geq 0. ~ \delta(\tuple{S,s},w_{[0..i)}) = \tuple{T,t} \land w_{[i..]} \models \tau \\ 
  	 & \qquad \qquad \land ~ (\forall j \in [0..i). ~ \delta(\tuple{S,s},w_{[0..j)}) \neq \tuple{B,b} \lor w_{[j..]} \not \models \beta)
\end{array}\]  

\noindent We split this into the ($\Rightarrow$)- and ($\Leftarrow$)-direction:

\medskip
\noindent ($\Rightarrow$): We further refine this and first assume that the first disjunct is satisfied by $w$ and defer the other case to a later point in the proof. We then apply (I.H.3) and derive:

\[\begin{array}{cl}
     & w \models \StayReach{\tuple{S,s}}{\tuple{B,b}}{\beta}{\tuple{T,t}}{\tau} \\[0.9em]
\iff & \exists i \geq 0. ~ \delta(\tuple{S,s},w_{[0..i)}) = \tuple{T,t} \land w_{[i..]} \models \tau \\ 
  	& \qquad \qquad \land ~ (\forall j \in [0..i). ~ \delta(\tuple{S,s},w_{[0..j)}) \neq \tuple{B,b} \lor w_{[j..]} \not \models \beta) \\
	& \qquad \qquad \land ~ (\forall j \in [0..i). ~ \tuple{w[j], \delta(S, w_{[0..j)})} \in \Stay{s}) \\
\implies & \exists i \geq 0. ~ \delta(\tuple{S,s},w_{[0..i)}) = \tuple{T,t} \land w_{[i..]} \models \tau \\ 
  	 & \qquad \qquad \land ~ (\forall j \in [0..i). ~ \delta(\tuple{S,s},w_{[0..j)}) \neq \tuple{B,b} \lor w_{[j..]} \not \models \beta)
\end{array}\]  

We now assume that the first disjunct is \emph{not} satisfied by $w$ and thus the second disjunct is satisfied by $w$. We then derive using (I.H.5):

\[\begin{array}{cl}
     & w \models \LeaveReach{\tuple{S,s}}{\tuple{B,b}}{\beta}{\tuple{T,t}}{\tau} \\[0.9em]
\iff & \exists i_1, i_2 \geq 0. ~ \delta(\tuple{S,s},w_{[0..i_1)}) = \tuple{T,t} \land w_{[i_1..]} \models \tau \\ 
    & \qquad \qquad \land ~ (\exists j_1 \in [0..i_1). ~ \tuple{w[j_1], \delta(S,w_{[0..j_1)})} \in \Enter{t} \\ 
    & \qquad \qquad \land ~ \tuple{w[i_2], \delta(S,w_{[0..i_2)})} \in \Leave{s} \\ 
    & \qquad \qquad \land ~ (\forall j \in [0..\max(i_1{-}1, i_2)]. ~ \delta(\tuple{S,s},w_{[0..j)}) \neq \tuple{B,b} \lor w_{[j..]} \not \models \beta) \\
\implies & \exists i \geq 0. ~ \delta(\tuple{S,s},w_{[0..i)}) = \tuple{T,t} \land w_{[i..]} \models \tau \\ 
  	 & \qquad \qquad \land ~ (\forall j \in [0..i). ~ \delta(\tuple{S,s},w_{[0..j)}) \neq \tuple{B,b} \lor w_{[j..]} \not \models \beta)
\end{array}\] 

\medskip
\noindent ($\Leftarrow$): We assume that $w$ satisfies the right-hand side of the equation and we instantiate $i$ to be the smallest non-negative integer such that:
\[\begin{array}{cl}
& \delta(\tuple{S,s},w_{[0..i)}) = \tuple{T,t} \land w_{[i..]} \models \tau \\
& \qquad \qquad \land ~ (\forall j \in [0..i). ~ \delta(\tuple{S,s},w_{[0..j)}) \neq \tuple{B,b} \lor w_{[j..]} \not \models \beta)
\end{array}\] 

Assume that for all proper prefixes of $w_{[0..i)}$ the combined letter stays in $s$, i.e., $\tuple{w[j], \delta(S, w_{[0..j)})} \in \Stay{s}$ for all $j \in [0..i)$. We then apply (I.H.3) and obtain that $w$ satisfies \smallerReach{$\StayReach{\tuple{S,s}}{\tuple{B,b}}{\beta}{\tuple{T,t}}{\tau}$} and so the left-hand side of the equation is satisfied by $w$. Thus let $k \in [0..i)$ be the smallest non-negative integer such that:
\[\tuple{w[k], \delta(S, w_{[0..k)})} \in \Leave{s}\]
Since $k < i$, we immediately obtain one of the missing preconditions for applying (I.H.5): \[\forall j_2 \in [0..k]. ~ \delta(\tuple{S,s},w_{[0..j_2)}) \neq \tuple{B,b} \lor w_{[j_2..]} \not \models \beta\]
The second missing precondition is that we need to find a $j_1 \in [0..i)$ such that $\tuple{w[j_1], \delta(S, w_{[0..j_1)})} \in \Enter{t}$:
In the case $s \neq t$ this is straightforward, since after reading $w_{[0..i)}$ we reach $\tuple{T,t}$, and thus there must be a $j_1 < i$ such that $\tuple{w[j_1], \delta(S, w_{[0..j_1)})} \in \Enter{t}$. 
Thus let us assume $s = t$. In general there might be not such an index $j_1$. However, we have shown that $\tuple{w[k], \delta(S, w_{[0..k)})} \in \Leave{s}$ for some $k < i$ and by the same reasoning as above there must be $j_1$ between $k$ and $i$ such that $\tuple{w[j_1], \delta(S, w_{[0..j_1)})} \in \Enter{t}$.

We now can apply (I.H.5) and conclude this direction of the proof.

\bigskip
\noindent \textbf{Reachability formula 2}

\noindent We proceed by a straightforward derivation for which we use (I.H.1) in the second step:
\[\begin{array}{cl}
     & w \models \WeakReach{S}{B}{\beta}{T}{\tau} \\[0.9em]
\iff & w \not \models \Reach{S}{T}{\tau}{B}{\beta} \\[0.9em]
\iff & \neg \big(\exists i \geq 0. ~ \delta(S,w_{[0..i)}) = B \land w_{[i..]} \models \beta \\ 
  	 & \qquad \qquad \land ~ (\forall j \in [0..i). ~ \delta(S,w_{[0..j)}) \neq T \lor w_{[j..]} \not \models \tau)\big)\\
\iff & \forall i \geq 0. ~ (\delta(S,w_{[0..i)}) = B \land w_{[i..]} \models \beta )\\ 
  	 & \qquad \qquad \rightarrow ~ (\exists j \in [0..i). ~ \delta(S,w_{[0..j)}) = T \land w_{[j..]} \models \tau)
\end{array}\]  

\bigskip
\noindent \textbf{Reachability formula 3}

\noindent We want to prove the following equivalence:  

\[\begin{array}{cl}
     & w \models \StayReach{\tuple{S,s}}{\tuple{B,b}}{\beta}{\tuple{T,t}}{\tau} \\
\iff &         \exists i \geq 0. ~ \delta(\tuple{S,s},w_{[0..i)}) = \tuple{T,t} \land w_{[i..]} \models \tau \\ 
  	 &	       \qquad \qquad	 \land ~ (\forall j_1 \in [0..i). ~ \delta(\tuple{S,s},w_{[0..j_1)}) \neq \tuple{B,b} \lor w_{[j_1..]} \not \models \beta) \\
	 & 		   \qquad \qquad	 \land ~ (\forall j_2 \in [0..i). ~ \tuple{w[j_2],\delta(S,w_{[0..j_2)})} \in \Stay{s})
\end{array}\]

\noindent ($\Rightarrow$): Assume that $w$ satisfies the left-hand side of the equivalence. Notice that if $w$ does not satisfy $\varphi \coloneqq \StayReachNonEmpty{\tuple{S,s}}{\tuple{B,b}}{\beta}{\tuple{T,t}}{\tau}$, then necessarily $\delta(\tuple{S,s}, w_{[0..0)}) = \tuple{T,t}$ and $w_{[0..]} \models \tau$ and thus the right-hand side trivially holds for $i = 0$. 

Thus we can assume that $w$ satisfies $\varphi$ and using the same reasoning we know that either $\delta(\tuple{S,s},w_{[0..0)}) = \tuple{B,b}$ or $w_{[0..]} \models \beta$ does not hold, which takes care of the case $j_1 = 0$ in the second line of the right-hand side, assuming $i > 0$.  

Since we have $w \models \varphi$, there must be a $\tuple{\sigma, T'} \in \Stay{s}$ with $\delta(\tuple{T',s}, \sigma) = \tuple{T,t}$ such that the matching disjunct $\psi$ of $\varphi$ is satisfied by $w$. Note that this immediately implies that $s=t$. Observe that $\psi$ is a conjunction of formulas with the shape \[\Reach{S}{\cdot}{\cdot}{T'}{\sigma \land \X \tau}\] and we now apply (I.H.1) to all reachability formulas. Since they all share the same target, we can instantiate them to the same $i'$ (where $i'$ is the smallest non-negative integer satisfying the conditions) such that:

\begin{enumerate}[label=(\alph*)]
	\item $\delta(S, w_{[0..i')}) = T'$
	\item $w[i'] = \sigma$
	\item $w_{[i'+1..]} \models \tau$
	\item For every $\tuple{\eta, L} \in \Leave{s}$ and every $j \in [0..i')$ at least one of the following statements holds:
	\begin{enumerate}[label=(\roman*)]
		\item $\delta(S, w_{[0..j)}) \neq L$
		\item $w[j] \neq \eta$
	\end{enumerate}
	\item For every $\tuple{\rho, B'} \in \Stay{s}$ such that $\delta(\tuple{B',s},\rho) = \tuple{B,b}$  and every $j \in [0..i')$ at least one of the following statements holds:
	\begin{enumerate}[label=(\roman*)]
	    \item $\delta(S, w_{[0..j)}) \neq B'$
	    \item $w[j] \neq \rho$
	    \item $w_{[j+1..]} \not \models \beta$
	\end{enumerate}
\end{enumerate}

We now instantiate $i$ of the right-hand side with $i'+1$. Remember that we have $\tuple{\sigma, T'} \in \Stay{s}$ and together with (a-c) we obtain the first conjunct of the right hand side. 

We now establish the third conjunct. Note that for every $j_2 \in [0..i')$ we have $\tuple{w[j_2], \delta(\tuple{S,s}, w_{[0..j_2)})} \in \Stay{s}$, since $\Stay{s} = (\Sigma \times Q_1 \times \cdots \times Q_m) \setminus \Leave{s}$ and by (d) we do not encounter an element of $\Leave{s}$ for any $j_2$. Further, for $j_2 = i'$ we obtain $\tuple{w[j_2], \delta(\tuple{S,s}, w_{[0..j_2)})} \in \Stay{s}$ from (a,b) and the choice of $\tuple{\sigma, T'} \in \Stay{s}$.

For the second conjunct it remains to show that for every $j_1 \in [0..i)$:
\[\delta(\tuple{S,s},w_{[0..j_1)}) \neq \tuple{B,b} \lor w_{[j_1..]} \not \models \beta\]
Assume that $\delta(\tuple{S,s},w_{[0..j_1)}) = \tuple{B,b}$ (if this is not the case we are immediately done). Due to the already established third conjunct, we have $\tuple{w[j_1-1], \delta(S, w_{[0..j_1-1)})} = \tuple{\rho, B'} \in \Stay{s}$ and $\delta(\tuple{B',s}, \rho) = \tuple{B,b}$. Thus (e,i) and (e,ii) cannot hold and (e,iii) must hold for $j_1 (= j + 1)$.

\medskip
\noindent ($\Leftarrow$): We assume that $w$ satisfies the right-hand side and instantiate $i$ as the smallest $i\geq0$ such that:
\begin{enumerate}[label=(\alph*)]
	\item $\delta(\tuple{S,s},w_{[0..i)}) = \tuple{T,t}$
	\item $w_{[i..]} \models \tau$
	\item $\forall j \in [0..i). ~ \delta(\tuple{S,s},w_{[0..j)}) \neq \tuple{B,b} \lor w_{[j..]} \not \models \beta$
	\item $\forall j \in [0..i). ~ \tuple{w[j],\delta(S,w_{[0..j)})} \in \Stay{s}$
\end{enumerate}

If $i = 0$, then due to (a-b) we have immediately $w \models \StayReach{\tuple{S,s}}{\tuple{B,b}}{\beta}{\tuple{T,t}}{\tau}$. Thus without loss of generality we can assume $i > 0$ from now on. Further, due to (c) we have that either $\tuple{S,s} \neq \tuple{B,b}$ or $w \not \models \beta$. Thus it remains to show that $w \models \StayReachNonEmpty{\tuple{S,s}}{\tuple{B,b}}{\beta}{\tuple{T,t}}{\tau}$.
Define $\sigma$ as $w[i-1]$ and $T'$ as $\delta(S,w_{[0..i-1)})$. From (a,d) we then follow:

\begin{enumerate}
	\item[(e)] $\tuple{\sigma,T'} \in \Stay{s}$
	\item[(f)] $\delta(S,w_{[0..i-1)}) = T'$ and $\delta(\tuple{T',s}, \sigma) = \tuple{T,t}$
	\item[(g)] $w_{[i-1..]} \models \sigma \land \X \tau$
\end{enumerate}

We are going to use the induction hypothesis to show that $w$ satisfies the disjunct corresponding to $\tuple{\sigma,T'}$. The first conjunct is satisfied from (f,g) and (I.H.1). From (d) it follows that for every $j \in [0..i)$ the tuple $\tuple{w[j],\delta(S,w_{[0..j)})}$ is not in $\Leave{s}$. Thus for every $\tuple{\eta, L} \in \Leave{s}$, we either have $w_{[j..]} \not \models \eta$ or $\delta(S, w_{[0..j)}) \neq L$. Thus by (I.H.1) and (f,g) we obtain that $w$ satisfies the second conjunct. Analogously, from (c) it follows that for every $j \in [1..i)$ and for every $\tuple{\rho, B'} \in \Stay{s}$ such that $\delta(\tuple{B',s}, \rho) = \tuple{B,b}$, either $\delta(S,w_{[0..j-1)}) \neq B'$ or $w_{[j-1..]} \not \models \rho \land \X \beta$. Thus by (I.H.1) and (f,g) we obtain that $w$ satisfies the third and final conjunct.

\bigskip
\noindent \textbf{Reachability formula 4}

\noindent We want to prove the following equivalence:  

\[\begin{array}{cl}
     & w \models \WeakStayReach{\tuple{S,s}}{\tuple{B,b}}{\beta}{\tuple{T,t}}{\tau} \\[1.8em]
\iff & \forall i \geq 0. ~\big((\delta(\tuple{S,s},w_{[0..i)}) = \tuple{B,b} \land w_{[i..]} \models \beta)  \\ 
    &  \phantom{\forall i \geq 0. ~} \qquad \lor (i>0 \wedge \tuple{w[i{-}1], \delta(S,w_{[0..i{-}1)})} \in \Leave{s}) \big) \\ 
  	&  \phantom{\forall i \geq 0. ~} \rightarrow ~ (\exists j \in [0..i). ~ \delta(\tuple{S,s},w_{[0..j)}) = \tuple{T,t} \land w_{[j..]} \models \tau) 
\end{array}\]

\noindent ($\Rightarrow$): Assume that $w$ satisfies the left-hand side of the equivalence. Then we immediately obtain from the definition that either $\delta(\tuple{S,s}, w_{[0..0)}) \neq \tuple{B,b}$ or $w_{[0..]} \not \models \beta$. Further, notice that if $w$ does not satisfy $\varphi \coloneqq \WeakStayReachNonEmpty{\tuple{S,s}}{\tuple{B,b}}{\beta}{\tuple{T,t}}{\tau}$, then necessarily $\delta(\tuple{S,s}, w_{[0..0)}) = \tuple{T,t}$ and $w_{[0..]} \models \tau$, and thus the right-hand side trivially holds for all $i$ by instantiating $j$ with $0$. 

Thus we can assume that $w$ satisfies $\varphi$ and using the fact that at least one of $\delta(\tuple{S,s},w_{[0..0)}) = \tuple{B,b}$ and $w_{[0..]} \models \beta$ does not hold, we only need to prove the right-hand side for $i > 0$.
 
Since we have $w \models \varphi$, then either $w$ satisfies the disjunct $\psi$ in line (2) or there exists a $\tuple{\sigma, T'} \in \Stay{s}$ with $\delta(\tuple{T',s}, \sigma) = \tuple{T,t}$ such that the matching disjunct $\chi$ in (1) is satisfied by $w$. We assume that the first case holds, and defer the second case to later.

Note, that when applying the induction hypothesis (I.H.2) to each conjunct of $\psi$, we can simplify the nested existential quantification to $\false$, since $w \not \models \false$. Thus we obtain for all $i \geq 0$:

\begin{enumerate}[label=(\alph*)]
	\item For every $\tuple{\eta, L} \in \Leave{s}$ at least one of the following statements holds:
	\begin{enumerate}[label=(\roman*)]
		\item $\delta(S, w_{[0..i)}) \neq L$
		\item $w[i] \neq \eta$
	\end{enumerate}
	\item For every $\tuple{\rho, B'} \in \Stay{s}$ such that $\delta(\tuple{B',s},\rho) = \tuple{B,b}$ at least one of the following statements holds:
	\begin{enumerate}[label=(\roman*)]
	    \item $\delta(S, w_{[0..i)}) \neq B'$
	    \item $w[i] \neq \rho$
	    \item $w_{[i+1..]} \not \models \beta$
	\end{enumerate}
\end{enumerate}

Because $w$ and the sequence of generated configurations does not contain a combined letter that leaves $s$ due to (a), we do have $\tuple{w[i], \delta(S, w_{[0..i)})} \notin \Leave{s}$ for all $i \geq 0$. Thus $\tuple{w[i], \delta(S, w_{[0..i)})} \in \Stay{s}$ for all $i \geq 0$.  Further, due to (b) we either do not see a combined letter at position $i$ that enters the configuration $\tuple{B,b}$ (and thus $\delta(\tuple{S,s}, w_{[i+1..]}) = \tuple{B,b}$) or the infinite suffix of $w_{[i+1..]}$ does not satisfy $\beta$. From this, we can conclude that the right-hand side holds. (The case for $i=0$ was already established in the second paragraph.)

We now assume that there exists a tuple $\tuple{\sigma, T'}$ such that $w$ satisfies the matching disjunct $\chi$ of line (1). Note that the existence of $\tuple{\sigma, T'}$ immediately implies $s=t$. Moreover, we can assume that $w$ does not satisfy the disjunct $\psi$ from line (2). Thus by applying (I.H.2), we know that there exists an $k \geq 0$ such that either $\delta(S, w_{[0..k)}) = L$ and $w[k] = \eta$ from $\tuple{\eta, L} \in \Leave{s}$ or $\delta(S, w_{[0..k)}) = B$, $w[k] = \rho$, and $w_{[k+1..]} \models \beta$ for some $\tuple{\rho, B'} \in \Stay{s}$ and $\delta(\tuple{B',s}, \rho) = \tuple{B,b}$. Thus when we apply the (I.H.2) to the corresponding reachability formula of $\chi$ we obtain a $k' \in [0..k)$ such that $\delta(S,w_{[0..k')}) = T'$ and $w_{[k'..]} \models \sigma \land \X \tau$. Observe that $\chi$ is a conjunction of formulas with the shape \[\WeakReach{S}{\cdot}{\cdot}{T'}{\sigma \land \X \tau}\] and we now apply the induction hypothesis to all reachability formulas. Since they all share the same ``target'', we can instantiate the nested existential quantification to the same $j'$ (where $j'$ is the smallest non-negative integer satisfying the condition) such that:

\begin{enumerate}[label=(\alph*)]
	\item $\delta(S, w_{[0..j')}) = T'$
	\item $w[j'] = \sigma$
	\item $w_{[j'+1..]} \models \tau$
	\item For every $\tuple{\eta, L} \in \Leave{s}$ and every $i' \in [0..j']$ at least one of the following statements holds:
	\begin{enumerate}[label=(\roman*)]
		\item $\delta(S, w_{[0..i')}) \neq L$
		\item $w[i'] \neq \eta$
	\end{enumerate}
	\item For every $\tuple{\rho, B'} \in \Stay{s}$ such that $\delta(\tuple{B',s},\rho) = \tuple{B,b}$  and every $i' \in [0..j']$ at least one of the following statements holds:
	\begin{enumerate}[label=(\roman*)]
	    \item $\delta(S, w_{[0..i')}) \neq B'$
	    \item $w[i'] \neq \rho$
	    \item $w_{[i'+1..]} \not \models \beta$
	\end{enumerate}
\end{enumerate}

In order to show that $w$ satisfy the right-hand side of the equation, let now $i \geq 0$ be an arbitrary integer and we need to prove:
\[\begin{array}{c}
\big((\delta(\tuple{S,s},w_{[0..i)}) = \tuple{B,b} \land w_{[i..]} \models \beta) \lor (i > 0 \land \tuple{w[i{-}1], \delta(S,w_{[0..i)})} \in \Leave{s})\big) \\ 
\rightarrow (\exists j \in [0..i). ~ \delta(\tuple{S,s},w_{[0..j)}) = \tuple{T,t} \land w_{[j..]} \models \tau)
\end{array}\]

The case $i = 0$ was already discussed in the second paragraph. Further, if $i > j' + 1$, then (a-c) show that $j$ can be instantiated with $j'+1$. Thus assume that $i \in [1..(j'+1)]$. Further, since $j'$ was chosen to be the smallest integer such that (a-c) holds, we can simplify the expression we need to prove to:
\[(\delta(\tuple{S,s},w_{[0..i)}) \neq \tuple{B,b} \lor w_{[i..]} \not \models \beta) \land \tuple{w[i{-}1], \delta(S,w_{[0..i{-}1)})} \in \Stay{s})\]

Note that second conjunct is direct consequence of (d). For the first conjunct, we proceed by contradiction and assume that $i$ is the smallest integer such that:
\[\delta(\tuple{S,s},w_{[0..i)}) = \tuple{B,b} \land w_{[i..]} \models \beta\]

Due to the already established second conjunct, we have $\tuple{w[i-1], \delta(S, w_{[0..i-1)}}) = \tuple{\rho, B'} \in \Stay{s}$ and $\delta(\tuple{B',s}, \rho) = \tuple{B,b}$. But since $i - 1 \in [0..j']$, we have a contradiction to (e).

\medskip
\noindent ($\Leftarrow$): We assume that $w$ satisfies the right-hand side.
We first consider the case that the nested existential quantification is not true for any $i \geq 0$. Thus for all $i \geq 0$ it holds that:
\begin{enumerate}[label=(\alph*)]
    \item $\tuple{w[i], \delta(S, w_{[0..i)})} \in \Stay{s}$ 
	\item $\delta(\tuple{S,s},w_{[0..i)}) \neq \tuple{B,b}$ or $w_{[i..]} \not \models \beta$
\end{enumerate}

Since $w$ satisfies the right-hand side, it is easy to see that we have either $\tuple{S,s} \neq \tuple{B,b}$ or $w \not \models \beta$. Hence it remains to show that $w \models \WeakStayReachNonEmpty{\tuple{S,s}}{\tuple{B,b}}{\beta}{\tuple{T,t}}{\tau}$, which we do by showing that $w$ satisfies (2). From (a,b) we obtain: 

\begin{enumerate}
    \item[(a')] For every $\tuple{\eta, L} \in \Leave{s}$ we have either $\delta(S,w_{[0..i)}) \neq L$ or $w[i] \neq \eta$.
	\item[(b')] Either $w_{[i+1..]} \not \models \beta$ or for every $\tuple{\rho, B'} \in \Stay{s}$ such that $\delta(\tuple{B',s},\rho) = \tuple{B,b}$ we have either $\delta(S,w_{[0..i)}) \neq B'$ or $w[i] \neq \rho$.
\end{enumerate}

We then apply the induction hypothesis (I.H.2) to (a',b') and obtain that $w$ satisfies the disjunct of line (2).

We now consider the second case and assume that there is a $j$ such that:

\begin{enumerate}[label=(\alph*)]
	\item $\delta(\tuple{S,s}, w_{[0..j)}) = \tuple{T,t}$
	\item $w_{[j..]} \models \tau$
\end{enumerate}

Without loss of generality, we can assume $j$ to be the smallest integer with such a property. Further, since $w$ satisfies the right-hand side, we have:

\begin{enumerate}
	\item[(c)] $\forall i \in [0..j). ~ \tuple{w[i], \delta(S, w_{[0..i)})} \in \Stay{s}$
	\item[(d)] $\forall i \in [0..j]. ~ \delta(\tuple{S,s},w_{[0..i)}) \neq \tuple{B,b} \lor w_{[i..]} \not \models \beta$ 
\end{enumerate}

If $j = 0$, then $w$ immediately satisfies the left-hand side. Thus assume $j > 0$. Due to (c), we have that $\tuple{\sigma, T'} = \tuple{w_{[j-1]}, \delta(S, w_{[0..j-1)})} \in \Stay{s}$ and $\delta(\tuple{\delta(S, w_{[0..j-1)}), s},w_{[j-1]}) = \tuple{T, t}$. Thus there exists a matching disjunct for $\tuple{\sigma, T'}$ in line (1) and we know it is satisfied by $w$ due the $(a,b,d)$ and the induction hypothesis (I.H.2).

\bigskip
\noindent \textbf{Reachability formula 5}

\noindent We want to prove the following equivalence:
\[\begin{array}{cl}
     & w \models \LeaveReach{\tuple{S,s}}{\tuple{B,b}}{\beta}{\tuple{T,t}}{\tau} \\[1.8em]
\iff & \exists i_1, i_2 \geq 0. ~ \delta(\tuple{S,s},w_{[0..i_1)}) = \tuple{T,t} \land w_{[i_1..]} \models \tau \\ 
     & \phantom{\exists i_1, i_2 \geq 0. ~} \land ~ (\exists j_1 \in [0..i_1). ~ \tuple{w[j_1], \delta(S, w_{[0..j_1))} \in \Enter{s}}) \\
     & \phantom{\exists i_1, i_2 \geq 0. ~} \land ~ \tuple{w[i_2], \delta(S, w_{[0..i_2)})} \in \Leave{s} \\ 
     & \phantom{\exists i_1, i_2 \geq 0. ~} \land ~ (\forall j_2 \in [0..\max(i_1{-}1, i_2)]. \\ & \qquad \qquad \qquad \qquad \delta(\tuple{S,s},w_{[0..j_2)}) \neq \tuple{B,b} \lor w_{[j_2..]} \not \models \beta)	
\end{array}\]

\noindent ($\Rightarrow$): We assume that \[w \models \LeaveReach{\tuple{S,s}}{\tuple{B,b}}{\beta}{\tuple{T,t}}{\tau}\] holds and proceed by first constructing a witness for $i_2$ and then one for $i_1$.

\smallskip
\noindent($\exists i_2$): 
Since $w$ satisfies line (3), there must be a combined letter $\tuple{\sigma,L} \in \Leave{s}$ such that $w$ satisfies the corresponding disjunct. We apply to this disjunct the induction hypothesis (I.H.3) and instantiate the existential quantifier to $i_2$ (which we intend to be the witness for $\exists i_2$) with the following properties:

\begin{enumerate}
	\item $\delta(\tuple{S,s}, w_{[0..i_2)}) = \tuple{L,s}$
	\item $w[i_2] = \sigma$ and $\delta(\tuple{S,s}, w_{[0..i_2)}) \neq \tuple{B,b} \lor w_{[i_2..]} \not \models \beta$
	\item $\forall j_2 \in [0..i_2). ~ \delta(\tuple{S,s}, w_{[0..j_2)}) \neq \tuple{B,b} \lor w_{[j_2..]} \not \models \beta$
	\item $\forall j_2 \in [0..i_2). ~ \tuple{w[j_2], \delta(S,w_{[0..j_2)})} \in \Stay{s}$.  
\end{enumerate}
Observe that due to (1,2) we have $\tuple{w[i_2], \delta(S, w_{[0..i_2)})} \in \Leave{s}$ and together with (2,3) one can see that $i_2$ is a sufficient witness for the $\exists i_2$ in the right-hand side of the equation we want to prove. Note that property (4) is not useless and will be of importance later.

\smallskip
\noindent($\exists i_1$):
Since $w$ also satisfies lines (1,2), there must be a combined letter $\tuple{\sigma, T'} \in \Enter{t}$ such that $w$ satisfies the corresponding disjunct. We then apply the induction hypothesis (I.H.1) to all terms of the matching conjunction in lines (1,2). Since all reachability formulas share the same target configuration and formula, we can instantiate all existential quantifiers to the same integer $k$ by picking the smallest instance for each existential quantifier. Let now $k$ be this smallest non-negative integer. We introduce the following two abbreviations $T'' = \delta(\tuple{T',\cdot}, \sigma)$ and $R'_\eta = \delta(\tuple{R,\cdot},\eta)$ for some $\tuple{\eta, R} \in \Enter{b}$. Note that we can leave out the last state in the definition of $T''$, since for all states $q,p$ the transition relation maps the the same successor configuration, i.e., $\delta(\tuple{T', q}, \sigma) = \delta(\tuple{T', p}, \sigma)$. Analogously, for we omit it from $R'_\eta$. We now list all relevant properties of $k$ derived from applying the induction hypothesis:
\begin{enumerate}
	\item[5.] $\delta(S, w_{[0..k)}) = T'$
	\item[6.] $w[k] = \sigma$
	\item[7.] $w_{[k+1..]} \models \StayReach{T''}{\tuple{B,b}}{\beta}{\tuple{T,t}}{\tau}$
	\item[8.] For all $\tuple{\eta, R} \in \Enter{b}$ and all $\ell \in [0..k)$ at least one of the following statements is true:
	\begin{enumerate}
		\item $\delta(S, w_{[0..\ell)}) \neq R$ 
		\item $w[\ell] \neq \eta$
		\item $w_{[\ell+1..]} \not \models \WeakStayReach{R'_\eta}{\tuple{T,t}}{\tau}{\tuple{B,b}}{\beta}$
	\end{enumerate}
\end{enumerate}
\noindent We continue and apply the induction hypothesis to (7) and then obtain $i_1$ (which we intend to be the witness for $\exists i_1$) with the following properties (already shifted to indices relative to $w$). Further, we can assume $i_1$ to be the smallest integer with these properties.

\begin{enumerate}
	\item[9.] $\delta(T'', w_{[k+1..i_1)}) = \tuple{T,t}$ and $w_{[i_1..]} \models \tau$
	\item[10.] $\forall j \in [(k+1).. i_1). ~ \delta(T'', w_{[k+1..j)}) \neq \tuple{B,b} \lor w_{[j..]} \not \models \beta$
	\item[11.] $\forall j \in [(k+1)..i_1). ~ \tuple{w[j], \delta(\delta(T',\sigma), w_{[k+1..j)})} \in \Stay{t}$
\end{enumerate}

\noindent
From (5,6,9) we obtain that $\delta(\tuple{S,s}, w_{[0..i_1)}) = \tuple{T,t}$ and that $w_{[i_1..]} \models \tau$ taking care of the first line of the right-hand side of the equation. Further, by construction $\tuple{w[k], \delta(S, w_{[0..k)})} = \tuple{\sigma, T} \in \Enter{t}$ and since $k < i_1$, we now know that the second line is satisfied by our witness for $\exists i_1$. Next, we inline the definition of $T''$ in (10) and obtain: $\forall j \in [(k+1)..i_1). ~ \delta(\tuple{S,s}, w_{[0..j)}) \neq \tuple{B,b} \lor w_{[j..]} \not \models \beta$. Thus in order to complete this direction of the proof it remains to show that:
\[
\forall j \in [(i_2+1)..k]. ~ \delta(\tuple{S,s}, w_{[0..j)}) \neq \tuple{B,b} \lor w_{[j..]} \not \models \beta
\]
If $[(i_2+1)..k]$ is the empty set, we are done. Thus assume that $i_2 < k$. We proceed with a proof by contradiction and assume that there exists an index $j \in [i_2+1..k]$ such that $\delta(\tuple{S,s}, w_{[0..j)}) = \tuple{B,b}$ and $w_{[j..]} \models \beta$. Due to (1) and (4), there must be some $j' \in [i_2.. j{-}1]$ such that the combined letter $\tuple{\eta, R} = \tuple{w[j'], \delta(S, w_{[0..j')})}$ is in $\Enter{b}$. Without loss of generality we can further assume that:
\begin{enumerate}
	\item [12.] $\forall j'' \in [j'..j).~\tuple{w[j''], \delta(S, w_{[0..j'')})} \in \Stay{b}$
\end{enumerate}
Note that $j' < k$ and since (8a) and (8b) cannot be true we can instantiate (8c) to:
\[w_{[j'+1..]} \not \models \WeakStayReach{R'_\eta}{\tuple{T,t}}{\tau}{\tuple{B,b}}{\beta}\]
\noindent We now apply the induction hypothesis to this and obtain an index $\ell \geq j' + 1$ such that the following two statements hold (already with adjusted indices):

\begin{enumerate}
	\item[13.] At least one of the following statements hold:
	\begin{enumerate} 
		\item $\delta(\tuple{S,s}, w_{[0..\ell)}) = \tuple{T,t} \wedge w_{[\ell..]} \models \tau$ 
		\item $\ell > j'+1 \wedge \tuple{w[\ell-1], \delta(\tuple{S,s}, w_{[0..\ell-1)}} \in \Leave{b}$
	\end{enumerate}
	\item[14.] $\forall j'' \in [(j'{+}1)..\ell). ~ \delta(\tuple{S,s}, w_{[0..j'')}) \neq \tuple{B,b} \lor w_{[j''..]} \not \models \beta$
\end{enumerate}

Since $j' < j$ and since we assumed that $j$ is the smallest integer satisfying $\delta(\tuple{S,s}, w_{[0..j)}) = \tuple{B,b}$ and $w_{[j''..]} \models \beta$, we can conclude with (14) that $\ell \leq j$. However, we have a contradiction between (13a), $\ell \leq j \leq k < i_1$, and $i_1$ being the smallest index reaching $\tuple{T,t}$ and satisfying $\tau$. Thus (13b) must hold. However, since $\ell \leq j$ this contradicts (12). Thus there cannot be such a $j$ and we can conclude this direction.

\medskip
\noindent ($\Leftarrow$): Assume that $w$ satisfies the left-hand side. Then there exists $i_1, i_2, j_1 \geq 0$, further let $i_1, i_2$ be the smallest integers and let $j_1$ be the largest integer such that:

\begin{enumerate}[label=(\alph*)]
	\item $\delta(\tuple{S,s},w_{[0..i_1)}) = \tuple{T,t}$
	\item $w_{[i_1..]} \models \tau$
	\item $j_1 < i_1$
	\item $\tuple{w[j_1], \delta(S, w_{[0..j_1)})} \in \Enter{t}$
	\item $\tuple{w[i_2], \delta(S, w_{[0..i_2)})} \in \Leave{s}$
	\item $\forall j_2 \in [0..\max(i_1-1,i_2)]. ~ \delta(\tuple{S,s},w_{[0..j_2)}) \neq \tuple{B,b} \lor w_{[j_2..]} \not \models \beta$
\end{enumerate}

Since $i_2$ is the smallest integer with this property, we have $\tuple{w[k], \delta(S, w_{[0..k)})} \in \Stay{s}$ for all $k \in [0..i_2{-}1]$. Thus we can apply (I.H.3) to this and (e,f) to show that $w$ satisfies line (3) for $\tuple{w[i_2], \delta(S, w_{[0..i_2)})} = \tuple{\sigma, L} \in \Leave{s}$. In order to show the disjunct of (1) and (2), we need to identify a suitable $\tuple{\sigma, T'} \in \Enter{t}$. We argue that $\tuple{\sigma, T'} = \tuple{w[j_1], \delta(S, w_{[0..j_1)})}$ is a suitable choice.

We proceed by first showing that $w$ satisfies line (1). Since $j_1$ is the largest integer with such a property, we immediately get $\tuple{w[k], \delta(S, w_{[0..k)})} \in \Stay{s}$ for all $k \in [j_1{+}1,i_1{-}1]$. By applying (I.H.3) we get:

\[w_{[j_1+1..]} \models \StayReach{\delta(\tuple{T', \cdot}, \sigma)}{\tuple{B,b}}{\beta}{\tuple{T,t}}{\tau}\]

We now apply (I.H.1) and use (d) to establish that the first line is satisfied by $w$. In order show that line (2) is satisfied it remains to show that for all $k \in [0..j_1]$ and all $\tuple{\eta, L} \in \Enter{b}$ either $\delta(\tuple{S,s}, w_{[0..k)}) \neq \tuple{\eta, L}$ or $w_{[k..]} \not \models \eta \land \X \Big( \StayReachMulti{\delta(\tuple{R,\cdot},\eta)}{\tuple{T,t}}{\tau}{\tuple{B,b}}{\beta}{\Weak} \Big)$.

We proceed by contradiction. Assume that there is $k \leq j_1$ such that $\delta(\tuple{S,s},$ $w_{[0..k)}) = \tuple{\eta, L}$, $w[k] = \eta$, and $w_{[k+1..]} \models \WeakStayReach{\delta(\tuple{R,\cdot}, \eta)}{\tuple{T,t}}{\tau}{\tuple{B,b}}{\beta})$. By applying (I.H.4) and adjusting the indices we obtain:

\[\begin{array}{rl}
	\forall i \geq k+1. & \big((\delta(\tuple{S,s},w_{[0..i)}) = \tuple{T,t} \land w_{[i..]} \models \tau) \\
                               & ~ ~ ~ \lor ~ (i > k + 1 \land \tuple{w[i{-}1], \delta(S, w_{[0..i{-}1)})} \in \Leave{b}) \big) \\ 
  	                           & \rightarrow ~ (\exists j \in [k+1..i). ~ \delta(\tuple{S,s},w_{[0..j)}) = \tuple{B,b} \land w_{[j..]} \models \beta)
\end{array}\]

By instantiating this with (a,b) we get:

\[
\exists j \in [k+1..i_1). ~ \delta(\tuple{S,s},w_{[0..j)}) = \tuple{B,b} \land w_{[j..]} \models \beta
\]

However this contradicts (f).

Thus we can apply (I.H.1) to establish that $w$ satisfies (2).
\end{proof}

\lemSyntacticSafe*

\begin{proof}
Observe first that there is no circularity in the definitions of the five reachability formulas, even though they are defined by each other: Formula 2 is defined on top of formula 1, which is defined on top of formulas 3 and 5, while formulas 3, 4, and 5 are defined with respect to reachability formulas over configurations of a lower level.

Let $i \geq 1$ be an arbitrary index. We prove the statement by induction on the level $m$ of the configurations $S$, $B$, $T$ in the reachability formulas. We split the proof to five cases corresponding to the five reachability formulas.

\medskip
\noindent \textbf{Reachability formula 1.}
Let $\beta \in \Pi_i$ and let $\tau \in \Sigma_i$. We proceed by an induction on the the configuration level $m$. \smallskip

\noindent ($m = 0$): There is only one configuration of level $0$ and it is the empty configuration. Thus $S = T = B = \tuple{}$. Applying the definition we obtain:
\[\Reach{\tuple{}}{\tuple{}}{\beta}{\tuple{}}{\tau} = (\neg \beta) \U \tau\]
Note that $\neg \beta \in \Sigma_i$ and thus the whole formula $(\neg \beta) \U \tau$ is also in $\Sigma_i$. \smallskip

\noindent ($m \rightarrow m + 1$): Let $\tuple{S,s}, \tuple{T,t}, \tuple{B,b} \in Q_1 \times \cdots \times Q_{m+1}$ be arbitrary configurations of level $m+1$. Thus:
\[\Reach{\tuple{S,s}}{\tuple{B,b}}{\beta}{\tuple{T,t}}{\tau} = \StayReach{\tuple{S,s}}{\tuple{B,b}}{\beta}{\tuple{T,t}}{\tau} \lor \LeaveReach{\tuple{S,s}}{\tuple{B,b}}{\beta}{\tuple{T,t}}{\tau}\]
 We now can use apply the induction hypothesis and obtain that the formula is in $\Sigma_i$. 

\medskip
\noindent \textbf{Reachability formula 2.}
Let $\beta \in \Sigma_i$ and let $\tau \in \Pi_i$. Then by induction hypothesis $\Reach{S}{T}{\tau}{B}{\beta}$ is in $\Sigma_i$ and thus $\WeakReach{S}{B}{\beta}{T}{\tau}$ is in $\Pi_i$.

\medskip
\noindent \textbf{Reachability formula 3.}
Let $\beta \in \Pi_i$ and let $\tau \in \Sigma_i$. Note that under this assumption $\false$, $\eta$, and $\rho \land \X \beta$ belong to $\Pi_i$ and that $\sigma \land \X \tau$ belongs to $\Sigma_i$. Thus by applying the induction hypothesis we obtain that $\StayReachNonEmpty{\tuple{S,s}}{\tuple{B,b}}{\beta}{\tuple{T,t}}{\tau}$ is in $\Sigma_i$. Since $\neg \beta$ also belongs to $\Sigma_i$, we obtain that $\StayReach{\tuple{S,s}}{\tuple{B,b}}{\beta}{\tuple{T,t}}{\tau}$ is from $\Sigma_i$.

\medskip
\noindent \textbf{Reachability formula 4.}
Let $\beta \in \Sigma_i$ and let $\tau \in \Pi_i$. Note that under this assumption $\eta$ and $\rho \land \X \beta$ belong to $\Sigma_i$ and that $\false$, $\neg \beta$, and $\sigma \land \X \tau$ belong to $\Pi_i$. We then apply again the induction hypothesis and obtain that $\WeakStayReachNonEmpty{\tuple{S,s}}{\tuple{B,b}}{\beta}{\tuple{T,t}}{\tau}$ and thus also $\WeakStayReach{\tuple{S,s}}{\tuple{B,b}}{\beta}{\tuple{T,t}}{\tau}$ belongs to $\Pi_i$.

\medskip
\noindent \textbf{Reachability formula 5.}
Let $\beta \in \Pi_i$ and let $\tau \in \Sigma_i$. By an analogous argument to the two preceding cases we obtain that the formulas in line (1) and (3) belong to $\Sigma_i$. Further, note that by the induction hypothesis $\eta \land \X \Big( \StayReachMulti{\delta(\tuple{R,\cdot},\eta)}{\tuple{T,t}}{\tau}{\tuple{B,b}}{\beta}{\Weak} \Big)$ belongs to $\Pi_i$ and thus the overall reachability formula 5 belongs to $\Sigma_i$.
\end{proof}

\subsection{Proofs from~\cref{sec:SizeAnalysis}}
\label{app:length}

\ReachFormulaSize*

\begin{proof}
We first prove the upper bound on the depth of the formula and then move on to the claim about the length of the formula.

\noindent \textbf{Depth Analysis.}
We prove the claim by induction on the level $i$.
For the base case of $i=0$, the main reachability formula is just  $(\neg \beta) \U \tau$, having $\DF(0,d) = d+1$, which is equal to $d+3^0$, as required.
For the induction step of a general level $i>0$, we will first establish a bound on $\DF(i,d)$ that is relative to $\DF(i{-}1,d)$, and then get from it an absolute bound, using the induction hypothesis.

Observe that $\DF(i,d)$ is bounded in \cref{tab:size} to the maximum between $\DF_3(i,d)$ and $\DF_5(i,d)$, while the bound on $\DF_5(i,d)$ is at least as on $\DF_3(i,d)$. Hence:
\begin{align*}
\DF(i,d) ~\leq~ \DF_5(i,d)~ & \leq &  \DF(i{-}1, \max(1 + \DF_3(i,d), 1 + \DF_4(i,d) )) \\
& \leq &  \DF(i{-}1, 1 + \DF(i{-}1,d+1))
\end{align*}
Applying the induction hypothesis on $\DF(i{-}1,d)$, we get an absolute bound:
\begin{align*}
	\DF(i,d) ~& \leq &  \DF(i{-}1, 1 + (d +1 +3^{i{-}1}))~~~~~~~~~~~~~~~~~~~~~~~~ \\
					& \leq &   2 + d +3^{i{-}1}+3^{i{-}1} ~~~ \leq  ~~~~~~~~~ d +3^i 
\end{align*}

\medskip
\noindent \textbf{Length Analysis.}
We prove the claim by induction on the level $i$.
For the base case of $i=0$, the main reachability formula is just  $(\neg \beta) \U \tau$, having $\LF(0,l) = 2+2l$, which is indeed not larger than $l \cdot  (10|\Sigma|^2n)^{4^i}$.
For the induction step of a general level $i>0$, we will first establish a bound on $\LF(i,l)$ relative to $\LF(i{-}1,\cdot)$, and then apply the induction hypothesis to get an absolute bound.

Observe that $\LF(i,l)$ is bounded in \cref{tab:size} with respect to $\LF_3(i,\cdot)$ and $\LF_5(i,\cdot)$, and
$\LF_3(i,l)$ is already bounded in \cref{tab:size} with respect to $\LF(i{-}1,\cdot)$. 
We simplify the $\LF_5$ bound, and substitute the bound on $\LF_3$, getting:

\begin{align*}
	\LF_5(i,l)& \leq & |\Sigma|n^{i-1} \cdot \big(\LF_1(i-1,3+\LF_3(i,l))+2+\\
		&& |\Sigma| n^{i-1}\cdot \big(\LF_1(i-1, \max(3+\LF_3(i,l),3+\LF_4(i,l)))+1\big)\big) +\\
		&& 1+ |\Sigma|n^{i-1}\cdot (1+\LF_3(i,3+l)) \\
		& \leq & |\Sigma|n^{i-1} \cdot \big(\LF_1(i-1,3+\LF_3(i,l))+2+\\
		&& |\Sigma| n^{i-1}\cdot \big(\LF_1(i-1, 3+\LF_3(i,l))+1\big)\big) +\\
		&& 1+ |\Sigma|n^{i-1}\cdot (1+\LF_3(i,3+l)) \\
		& \leq & |\Sigma|^2n^{2(i-1)}\LF_1(i-1,3+\LF_3(i,l)) + \\
		&& |\Sigma|n^{(i-1)}\LF_1(i-1,3+\LF_3(i,l)) + \\
		&& |\Sigma|n^{(i-1)}\LF_3(i,3+l)) + \\
		&& |\Sigma|^2n^{2(i-1)} + \\
		&& 3|\Sigma|n^{(i-1)} + 1\\
		& \leq & 4|\Sigma|^2n^{2(i-1)}\LF_1(i-1,3+\LF_3(i,l+3)) \\
		& \leq & 4|\Sigma|^2n^{2(i-1)}\LF_1(i-1,12+ 2l+ 4|\Sigma|^2n^{2(i-1)}\LF_1(i{-}1,l{+}3))
\end{align*}
We can now bound $\LF(i,l)$ relative to $\LF(i{-}1,\cdot)$, getting:
\begin{align*}
	\LF(i,l)&\leq &  1+	\LF_3(i,l) +\LF_5(i,l)\\
	&\leq & 	4+ 2l+ 4|\Sigma|n^{2(i-1)}\LF_1(i{-}1,l{+}3) +\\
	&&4|\Sigma|^2n^{2(i-1)}\LF_1(i-1,12+ 2l+ 4|\Sigma|^2n^{2(i-1)}\LF_1(i{-}1,l{+}3))\\
	& \leq & 5|\Sigma|^2n^{2(i-1)}\LF_1(i-1,12+ 2l+ 4|\Sigma|^2n^{2(i-1)}\LF_1(i{-}1,l{+}3))\\
	& \leq & 5|\Sigma|^2n^{2(i-1)}\LF_1(i-1,5|\Sigma|^2n^{2(i-1)}\LF_1(i{-}1,l{+}3))
\end{align*}
Applying the induction hypothesis on $\LF(i{-}1,\cdot)$, we get an absolute bound:
\begin{align*}
\LF(i,l)& \leq & 5|\Sigma|^2n^{2(i-1)}\LF_1(i-1,5|\Sigma|^2n^{2(i-1)}\LF_1(i{-}1,l{+}3))\\
& \leq & 5|\Sigma|^2n^{2(i-1)}(5|\Sigma|^2n^{2(i-1)}(l+3)(2|\Sigma|n)^{4^{i-1}})(10|\Sigma|^2n)^{4^{i-1}}\\
& \leq & 25|\Sigma|^4n^{4(i-1)}(l+3)(10|\Sigma|^2n)^{2\cdot 4^{i-1}}\\
&\leq & l(10|\Sigma|^2 n)^{4^i}
\end{align*}
The last inequality clearly holds for $i\geq 2$ as the $(10|\Sigma|^2n)^{4^{i}}$ term becomes very large.
For $i=1$, it also holds as the $n^{4(i-1)}$ term disappears and we get $25\cdot  10^2(l+3) \leq 10^4 l n^2$, which holds even for $l=n=1$.
\end{proof}

\subsection{Proofs from~\cref{sec:DetAutomataToLTL}}
\label{app:main}

\LTLforVisitFinitelyOften*

\begin{proof}
	Observe that $C$ is visited finitely often iff either $C$ is not visited at all, which  by \cref{thm:reach-formulas-correctness} is formulated by \smallerReach{$\neg(\ReachTT{\InitState}{C})$} or there exists a last visit to $C$, namely a visit to $C$ after which there is no other visit to $C$, which is formulated by \smallerReach{$\ReachTT{\InitState}{C}(\neg (\NonEmptyReachTT{C}{C}))$}. Note that by \Cref{lem:reach-formulas-syntactic} we have \smallerReach{$\ReachTT{\InitState}{C}, \NonEmptyReachTT{C}{C} \in \Sigma_1$}. Thus \smallerReach{$\neg \ReachTT{\InitState}{C}, \neg \NonEmptyReachTT{C}{C} \in \Pi_1$} and then again by \Cref{lem:reach-formulas-syntactic} we get \smallerReach{$\ReachTT{\InitState}{C}(\neg (\NonEmptyReachTT{C}{C})) \in \Sigma_2$}. Thus $\FIN(C) \in \Sigma_2$.
\end{proof}

\AutomatonToLTL*

\begin{proof}
We complete the proof of the Theorem and give a complete analysis of all six acceptance conditions:

\begin{itemize}
    \item $\D$ is a Muller automaton: the overall formula $\varphi$ is in $\Delta_2$, since it is a Boolean combination of $\FIN(C)$ formulas, which by \cref{cl:LTLforVisitFinitelyOften} belong to $\Sigma_2$.
	\item $\D$ is a coB\"uchi automaton: we construct the formula $\varphi$ directly from the  coB\"uchi condition $\alpha$, having a conjunction of $\FIN(C)$ formulas, over all configurations $C$ that are mapped to states in $\alpha$. As $\FIN(C)$ belongs to $\Sigma_2$, so does $\varphi$.
	\item $\D$ is a B\"uchi automaton: we can complement it, apply the above argument over the resulting coB\"uchi automaton, and negate the resulting formula to obtain a formula from $\Pi_2$.
	\item $\D$ is a looping-coB\"uchi automaton: Let $s\in Q$ be the unique sink state that all accepting runs end up in, and let $S$ be the set of configurations mapped to $s$. We then define $\varphi$ as \smallerReach{$\bigvee_{C \in S} \ReachTT{\InitState}{C}$}. Note that $\phi$ belongs to $\Sigma_1$ since every disjunct belongs to $\Sigma_1$ due to \cref{lem:reach-formulas-syntactic}.
	\item $\D$ is a looping-B\"uchi automaton: the dual of the previous argument.
	\item $\D$ is a weak automaton: Let $G \subseteq Q$ be an accepting SCC of $\D$ and $G' \subseteq Q$ be all states that are reachable from $G$, but are not in $G$. Let $H$ and $H'$ be the set of configurations that are mapped to $G$ and $G'$, respectively. Then by \cref{thm:reach-formulas-correctness}, we have that $(\bigvee_{C \in H} \ReachTT{\InitState}{C}) \land (\bigwedge_{C' \in H'} \neg \ReachTT{\InitState}{C'})$ exactly captures all words that are accepted by eventually ending up in $G$. The overall formula $\varphi$ is then the disjunction of these formulas constructed for each accepting SCC of $\D$. The membership in $\Delta_1$ then follows immediately from \cref{lem:reach-formulas-syntactic}. \qedhere
\end{itemize}
\end{proof}

\end{document}